\newcommand{\s}{\mathcal{S}}
\newtheorem{remark}{Remark}
\newcommand{\RHinf}{\mathcal{RH}_\infty}
\newcommand{\ttf}[1]{\boldsymbol{#1}}
\newcommand{\tf}[1]{\mathbf{#1}}
\newcommand{\C}{\mathcal{C}}
\newcommand{\FT}{\mathcal F_T}
\renewcommand{\supp}[1]{\mathrm{supp}\left(#1\right)}
\newenvironment{proof}
{\begin{IEEEproof}}{%
    \end{IEEEproof}\ignorespacesafterend% as suggested above
}
\newcommand{\minimize}[1]{\underset{#1}{\text{minimize}}}
\newcommand{\rev}[1]{\textcolor{black}{#1}}
\newcommand{\revsecond}[1]{\textcolor{black}{#1}}
\begin{document}
%
% paper title
% can use linebreaks \\ within to get better formatting as desired
% Do not put math or special symbols in the title.
\title{A System Level Approach to Controller Synthesis}
%
%
% author names and IEEE memberships
% note positions of commas and nonbreaking spaces ( ~ ) LaTeX will not break
% a structure at a ~ so this keeps an author's name from being broken across
% two lines.
% use \thanks{} to gain access to the first footnote area
% a separate \thanks must be used for each paragraph as LaTeX2e's \thanks
% was not built to handle multiple paragraphs
%

\author{Yuh-Shyang~Wang,~\IEEEmembership{Member,~IEEE,}
        Nikolai~Matni,~\IEEEmembership{Member,~IEEE,}
        and~John~C.~Doyle% <-this % stops a space
%\thanks{Preliminary versions of this work can be found in \cite{2014_Wang_ACC,2014_Wang_CDC,2014_Wang_Allerton,2015_Wang_H2,2015_Wang_Reg,2015_Wang_LDKF,WMD_ACC17}.}%
\thanks{This paper was presented in part at IEEE American Control Conference, June 4 - 6, 2014; in part at 52nd Annual Allerton Conference on Communication, Control, and Computing, September 30 - October 3, 2014; in part at 53rd IEEE Conference on Decision and Control, Los Angeles, CA, USA, December 15 - 17, 2014; in part at 5th IFAC Workshop on Distributed Estimation and Control in Networked Systems, September 10 - 11, 2015; in part at IEEE American Control Conference, July 6 - 8, 2016; and in part at IEEE American Control Conference, May 24 - 26, 2017 \cite{2014_Wang_ACC,2014_Wang_CDC,2014_Wang_Allerton,2015_Wang_H2,2015_Wang_Reg,2015_Wang_LDKF,WMD_ACC17}. This work was supported by Air Force Office of Scientific Research and National Science Foundation and gifts from Huawei and Google.}
\thanks{Y.-S. Wang is with the Control and Optimization Group, GE Global Research Center, Niskayuna, NY 12309 USA (e-mail: yuh-shyang.wang@ge.com).}
\thanks{N. Matni is with the Department of Electrical Engineering and Computer Sciences, UC Berkeley, Berkeley, CA 94720 USA (e-mail: nmatni@berkeley.edu).}
\thanks{J. C. Doyle is with the Department of Control and Dynamical Systems, California Institute of Technology, Pasadena, CA 91125 USA (e-mail: doyle@caltech.edu).}
%\thanks{Thanks to funding from AFOSR and NSF and gifts from Huawei and Google.}% <-this % stops a space
}
%\thanks{Manuscript received April 19, 2005; revised December 27, 2012.}}

% note the % following the last \IEEEmembership and also \thanks - 
% these prevent an unwanted space from occurring between the last author name
% and the end of the author line. i.e., if you had this:
% 
% \author{....lastname \thanks{...} \thanks{...} }
%                     ^------------^------------^----Do not want these spaces!
%
% a space would be appended to the last name and could cause every name on that
% line to be shifted left slightly. This is one of those "LaTeX things". For
% instance, "\textbf{A} \textbf{B}" will typeset as "A B" not "AB". To get
% "AB" then you have to do: "\textbf{A}\textbf{B}"
% \thanks is no different in this regard, so shield the last } of each \thanks
% that ends a line with a % and do not let a space in before the next \thanks.
% Spaces after \IEEEmembership other than the last one are OK (and needed) as
% you are supposed to have spaces between the names. For what it is worth,
% this is a minor point as most people would not even notice if the said evil
% space somehow managed to creep in.

% The paper headers
\markboth{To Appear in IEEE Transactions on Automatic Control,~Vol.~XX, No.~XX, October~2019}%
{Shell \MakeLowercase{\textit{et al.}}: A System Level Approach to Controller Synthesis}
% The only time the second header will appear is for the odd numbered pages
% after the title page when using the twoside option.
% 
% *** Note that you probably will NOT want to include the author's ***
% *** name in the headers of peer review papers.                   ***
% You can use \ifCLASSOPTIONpeerreview for conditional compilation here if
% you desire.

% If you want to put a publisher's ID mark on the page you can do it like
% this:
%\IEEEpubid{0000--0000/00\$00.00~\copyright~2012 IEEE}
% Remember, if you use this you must call \IEEEpubidadjcol in the second
% column for its text to clear the IEEEpubid mark.

% use for special paper notices
%\IEEEspecialpapernotice{(Invited Paper)}

% make the title area
\maketitle

% As a general rule, do not put math, special symbols or citations
% in the abstract or keywords.
\begin{abstract}
Biological and advanced cyberphysical control systems often have limited, sparse, uncertain, and distributed communication and computing in addition to sensing and actuation.  Fortunately, the corresponding plants and performance requirements are also sparse and structured, and this must be exploited to make constrained controller design feasible and tractable.  We introduce a new ``system level" (SL) approach involving three complementary SL elements. System Level Parameterizations (SLPs) \revsecond{provide an alternative to} the Youla parameterization of all stabilizing controllers and the responses they achieve, and combine with System Level Constraints (SLCs) to parameterize the largest known class of constrained stabilizing controllers that admit a convex characterization, generalizing quadratic invariance (QI).  SLPs also lead to a generalization of detectability and stabilizability, suggesting the existence of a rich separation structure, that when combined with SLCs, is naturally applicable to structurally constrained controllers and systems.  We further provide a catalog of useful SLCs, most importantly including sparsity, delay, and locality constraints on both communication and computing internal to the controller, and external system performance. \revsecond{Finally, we formulate System Level Synthesis (SLS) problems, which define the broadest known class of constrained optimal control problems that can be solved using convex programming.} %An example illustrates how this system level approach can systematically explore tradeoffs in controller performance, robustness, and synthesis/implementation complexity.

\end{abstract}

% Note that keywords are not normally used for peerreview papers.
\begin{IEEEkeywords}
constrained \& structured optimal control, decentralized control, large-scale systems%, system level parameterization, system level constraint, system level synthesis
\end{IEEEkeywords}

% For peer review papers, you can put extra information on the cover
% page as needed:
% \ifCLASSOPTIONpeerreview
% \begin{center} \bfseries EDICS Category: 3-BBND \end{center}
% \fi
%
% For peerreview papers, this IEEEtran command inserts a page break and
% creates the second title. It will be ignored for other modes.
\IEEEpeerreviewmaketitle

%\section*{Nomenclature}
%\addcontentsline{toc}{section}{Nomenclature}
%\begin{IEEEdescription}[\IEEEusemathlabelsep\IEEEsetlabelwidth{$V_1,V_2,V_3$}]
%\item[$V_1,V_2,V_3$] Three-phase PWM output line voltages.
%\item[$\theta$] Rotor angle (in ``electrical degrees'').
%\item[$\omega$] Rotor (electrical) speed, corresponding to the time derivative of $\theta$.
%\end{IEEEdescription}

%%%%%%%%%%%%%%%%%%%%%%%%%%%%%%%%%%%%%%%%%%%%%%%%%%%%%%%%%%%%%%%%%%%%%%%%%%
%%%%%%%%%%%%%%%%%%%%%%%%%%%%%%%%%%%%%%%%%%%%%%%%%%%%%%%%%%%%%%%%%%%%%%
 
\subsubsection*{Preliminaries \& Notation}
We use lower and upper case Latin letters such as $x$ and $A$ to denote vectors and matrices, respectively, and lower and upper case boldface Latin letters such as $\tf x$ and $\tf G$ to denote signals and transfer matrices, respectively.  We use calligraphic letters such as $\s$ to denote sets. In the interest of clarity, we work with discrete time linear time invariant systems, but unless stated otherwise, all results extend naturally to the continuous time setting.  We use standard definitions of the Hardy spaces $\mathcal{H}_2$ and $\mathcal{H}_\infty$, and denote their restriction to the set of real-rational proper transfer matrices by $\mathcal{RH}_2$ and $\RHinf$.  We use $G[i]$ to denote the $i$th spectral component of a transfer function $\tf G$, i.e., $\tf G(z) = \sum_{i=0}^{\infty} \frac{1}{z^i} G[i]$ for $| z | > 1$.  Finally, we use $\FT$ to denote the space of finite impulse response (FIR) transfer matrices with horizon $T$, i.e., $\FT := \{ \tf G \in \RHinf \, | \, \tf G = \sum_{i=0}^T\frac{1}{z^i}G[i]\}$.

\section{Introduction}
% The very first letter is a 2 line initial drop letter followed
% by the rest of the first word in caps.
% 
% form to use if the first word consists of a single letter:
% \IEEEPARstart{A}{demo} file is ....
% 
% form to use if you need the single drop letter followed by
% normal text (unknown if ever used by IEEE):
% \IEEEPARstart{A}{}demo file is ....
% 
% Some journals put the first two words in caps:
% \IEEEPARstart{T}{his demo} file is ....
% 
% Here we have the typical use of a "T" for an initial drop letter
% and "HIS" in caps to complete the first word.
%\input{Intro_v5}
%\input{Intro_v6}
\IEEEPARstart{T}{he} foundation of many optimal controller synthesis procedures is a parameterization of all internally stabilizing controllers, and the responses that they achieve, over which relevant performance measures can be easily optimized.  \revsecond{For finite dimensional linear-time-invariant (LTI) systems, the class of internally stabilizing LTI feedback controllers is characterized by the celebrated Youla parameterization \cite{1976_Youla_parameterization} and the closely related factorization approach \cite{Factorization}. In \cite{1976_Youla_parameterization}, the authors showed that the Youla parameterization defines an isomorphism between a stabilizing controller and the resulting closed loop system response from sensors to actuators -- therefore rather than synthesizing the controller itself, this system response (or Youla parameter) could be directly optimized. This allowed for the incorporation of customized design specifications on the closed loop system into the controller design process via convex optimization \cite{Boyd_closed} or interpolation \cite{Dahleh}. Subsequently, analogous parameterizations of stabilizing controllers for more general classes of systems were developed: notable examples include the polynomial approach \cite{Polynomial} for generalized Rosenbrock systems \cite{Rosenbrock} and the behavioral approach \cite{Willems,Behavior_I, Behavior_II, Behavior_para} for linear differential systems. These results illustrate the power and generality of Youla parameterization and factorization approaches to optimal control in the centralized setting.} Together with state-space methods, they played a major role in shifting controller synthesis from an ad hoc, loop-at-a-time tuning process to a principled one with well defined notions of optimality, and in the LTI setting, paved the way for the foundational results of robust and optimal control that would follow \cite{1989_DGKF}.

\rev{However, as control engineers shifted their attention from centralized to distributed optimal control, it was observed that the parameterization approaches that were so fruitful in the centralized setting were no longer directly applicable.} In contrast to centralized systems, modern cyber-physical systems (CPS) are large-scale, physically distributed, and interconnected.  Rather than a logically centralized controller, these systems are composed of several sub-controllers, each equipped with their own sensors and actuators -- these sub-controllers then exchange locally available information (such as sensor measurements or applied control actions) via a communication network.  \revsecond{These information sharing constraints make the corresponding distributed optimal controller synthesis problem challenging to solve \cite{1972_Ho_info,2012_Mahajan_Info_survey,2006_Rotkowitz_QI_TAC,2002_Bamieh_spatially_invariant,2005_Bamieh_spatially_invariant,2013_Nayyar_common_info}. In particular, imposing such structural constraints on the controller can lead to optimal control problems that are NP-hard \cite{1968_Witsenhausen_counterexample,1984_Tsitsiklis_NP_hard}.}

 %and thus the traditional constrained optimal control synthesis framework such as \cite{Boyd_closed, Behavior_I, Behavior_II} cannot be applied on distributed controller synthesis.} 
%Further, there was reason to suspect that introducing such information asymmetry into the optimal control problem lead to intractable synthesis tasks \cite{1968_Witsenhausen_counterexample,1984_Tsitsiklis_NP_hard}.
Despite these technical and conceptual challenges, a body of work \cite{2002_Bamieh_spatially_invariant, 2004_QI, 2004_Dullerud, 2005_Bamieh_spatially_invariant, 2006_Rotkowitz_QI_TAC, 2012_Mahajan_Info_survey, 2013_Nayyar_common_info} that began in the early 2000s, and that culminated with the introduction of quadratic invariance (QI) in the seminal paper \cite{2006_Rotkowitz_QI_TAC}, showed that for a large class of practically relevant LTI systems, such internal structure could be integrated with the Youla parameterization and still preserve the convexity of the optimal controller synthesis task.  Informally, a system is quadratically invariant if sub-controllers are able to exchange information with each other faster than their control actions propagate through the CPS \cite{2010_Rotkowitz_QI_delay}.  Even more remarkable is that this condition is tight, in the sense that QI is a necessary \cite{2014_Lessard_convexity} and sufficient \cite{2006_Rotkowitz_QI_TAC} condition for subspace constraints (defined by, for example, communication delays) on the controller to be enforceable via convex constraints on the Youla parameter. \revsecond{The identification of QI triggered an explosion of results in distributed optimal controller synthesis \cite{2012_Lessard_two_player,2010_Shah_H2_poset,2013_Lamperski_H2,2013_Lessard_structure,2013_Scherer_Hinf,2014_Lessard_Hinf,2014_Matni_Hinf,2014_Tanaka_Triangular,2014_Lamperski_state} -- these results showed that the robust and optimal control methods that proved so powerful for centralized systems could be ported to the distributed setting.} \rev{As far as we are aware, no such results exist for the more general classes of systems considered in \cite{Polynomial,Willems,Behavior_I,Behavior_II,Behavior_para}.}

\rev{However, a fact that is not emphasized in the distributed optimal control literature is that distributed controllers are actually more complex to synthesize and implement than their centralized counterparts.\footnote{For example, see the solutions presented in \cite{2012_Lessard_two_player,2010_Shah_H2_poset,2013_Lamperski_H2,2013_Lessard_structure,2013_Scherer_Hinf,2014_Lessard_Hinf,2014_Matni_Hinf,2014_Tanaka_Triangular,2014_Lamperski_state} and the message passing implementation suggested in \cite{2014_Lamperski_state}.} } \rev{In particular, a major limitation of the QI framework is that, for strongly connected systems,\footnote{We say that a plant is strongly connected if the state of any subsystem can eventually alter the state of all other subsystems.} it cannot provide a convex characterization of localized controllers, in which local sub-controllers only access a subset of system-wide measurements (c.f., Section \ref{sec:limitation} and \ref{sec:sparsity}). This need for global exchange of information between sub-controllers is a limiting factor in the scalability of the synthesis and implementation of these distributed optimal controllers.}

%\rev{One may be tempted to concede that this is simply an inherent limitation of the optimal control approach: achieving optimal performance in a large-scale system requires complex controllers, and the best that can be hoped for are approximations and/or principled heuristics.  This paper proves otherwise, and in doing so, generalizes both the Youla parameterization and quadratic invariance (QI).  
\revsecond{Motivated by this issue, we propose a novel parameterization of internally stabilizing controllers and the closed loop responses that they achieve, providing an alternative to the QI framework for constrained optimal controller synthesis.} Specifically, rather than directly designing only the feedback loop between sensors and actuators, as in the Youla framework, we propose directly designing the entire closed loop response of the system, as captured by the maps from process and measurement disturbances to control actions and states.  As such, we call the proposed method a System Level Approach (SLA) to controller synthesis, which is composed of three elements: System Level Parameterizations (SLPs), System Level Constraints (SLCs) and System Level Synthesis (SLS) problems.  Further, in contrast to the QI framework, which seeks to impose structure on the input/output map between sensor measurements and control actions, the SLA imposes structural constraints on the system response itself, and shows that this structure carries over to the \emph{internal realization} of the corresponding controller.  It is this conceptual shift from structure on the input/output map to the internal realization of the controller that allows us to expand the class of structured controllers that admit a convex characterization, and in doing so, vastly increase the scalability of distributed optimal control methods.  We summarize our main contributions below. 

\subsection{Contributions}
This paper presents novel theoretical and computational contributions to the area of constrained optimal controller synthesis. In particular, we
\begin{itemize}
\item define and analyze the system level approach to controller synthesis, which is built around novel SLPs of all stabilizing controllers and the closed loop responses that they achieve;
\item show that SLPs allow us to constrain the closed loop response of the system to lie in arbitrary sets: we call such constraints on the system SLCs.  If these SLCs admit a convex representation, then the resulting set of constrained system responses admits a convex representation as well;
\item show that such constrained system responses can be used to directly implement a controller achieving them -- in particular, any SLC imposed on the system response imposes a corresponding SLC on the internal structure of the resulting controller;
\item show that the set of constrained stabilizing controllers that admit a convex parameterization using SLPs and SLCs is a strict superset of those that can be parameterized using quadratic invariance -- hence we provide a generalization of the QI framework, characterizing the broadest known class of constrained controllers that admit a convex parameterization;
\item formulate and analyze the SLS problem, which exploits SLPs and SLCs to define the broadest known class of constrained optimal control problems that can be solved using convex programming.  We show that the optimal control problems considered in the QI literature \cite{2012_Mahajan_Info_survey}, as well as the recently defined localized optimal control framework \cite{2015_Wang_H2} are all special cases of SLS problems.
\end{itemize}

\vspace{-4mm}
\subsection{Paper Structure} In Section \ref{sec:prelim}, we define the system model considered in this paper, and review relevant results from the distributed optimal control and QI literature. In Section \ref{sec:parameter} we define and analyze SLPs for state and output feedback problems, and provide a novel characterization of stable closed loop system responses and the controllers that achieve them -- the corresponding controller realization makes clear that SLCs imposed on the system responses carry over to the internal structure of the controller that achieves them. In Section \ref{sec:class}, we provide a catalog of SLCs that can be imposed on the system responses parameterized by the SLPs described in the previous section -- in particular, we show that by appropriately selecting these SLCs, we can provide convex characterizations of all stabilizing controllers satisfying QI subspace constraints, convex constraints on the Youla parameter, finite impulse response (FIR) constraints, sparsity constraints, spatiotemporal constraints \cite{2014_Wang_ACC, 2014_Wang_CDC,2014_Wang_Allerton,2015_Wang_H2, WMD_ACC17}, controller architecture constraints \cite{2015_Wang_Reg,Matni_RFD_TAC, Matni_RFD_CDC}, and any combination thereof. In Section \ref{sec:localizability}, we define and analyze the SLS problem, which incorporates SLPs and SLCs into an optimal control problem, and show that the distributed optimal control problem (\eqref{eq:decentralized} in Section \ref{sec:structured}) is a special case of SLS. \revsecond{We end with conclusions in Section \ref{sec:conclusion}.} %We end with an application of the system level approach to exploring tradeoffs between controller performance, robustness and implementation complexity in Section \ref{sec:tradeoffs}, followed by conclusions in Section \ref{sec:conclusion}.

% You must have at least 2 lines in the paragraph with the drop letter
% (should never be an issue)
%I wish you the best of success.

%\hfill mds
 
%\hfill December 27, 2012

% needed in second column of first page if using \IEEEpubid
%\IEEEpubidadjcol

%\subsubsection{Subsubsection Heading Here}
%Subsubsection text here.

%\section*{Comparison}
%\input{Comparison_v1}

%%%%%%%%%%%%%%%%%%%%%%%%%%%%%%%%%%%%%%%%%%%%%%%%%%%%%%%%%%%%%%%%%%%%%%%%%%
%%%%%%%%%%%%%%%%%%%%%%%%%%%%%%%%%%%%%%%%%%%%%%%%%%%%%%%%%%%%%%%%%%%%%%
\section{Preliminaries} \label{sec:prelim}
%\input{Prelim_v4}
%\input{Prelim_v5}
% We begin by then introducing the system model considered in this paper, and then recall some basic facts about internal stability and the Youla parameterization.  After which we introduce  quadratic invariance in the context of the distributed optimal control problem studied in \cite{2006_Rotkowitz_QI_TAC}.  We then define the response centric optimal control framework by summarizing the main results of this paper.
\subsection{System Model}
We consider discrete time linear time invariant (LTI) systems of the form
\begin{subequations} \label{eq:dynamics}
\begin{align}
x[t+1] &= A x[t] + B_1 w[t] + B_2 u[t] \label{eq:sys_x} \\
\bar{z}[t] &= C_1 x[t] + D_{11} w[t] + D_{12} u[t] \label{eq:sys_z} \\
y[t] &= C_2 x[t] + D_{21} w[t] + D_{22} u[t] \label{eq:sys_y}
\end{align}
\end{subequations}
where $x$, $u$, $w$, $y$, $\bar{z}$ are the state vector, control action, external disturbance, measurement, and regulated output, respectively. 
Equation \eqref{eq:dynamics} can be written in state space form as
\begin{equation}
\tf P = \left[ \begin{array}{c|cc} A & B_1 & B_2 \\ \hline C_1 & D_{11} & D_{12} \\ C_2 & D_{21} & D_{22} \end{array} \right] = \begin{bmatrix} \tf P_{11} & \tf P_{12} \\ \tf P_{21} & \tf P_{22} \end{bmatrix} \nonumber
\end{equation}
where $\tf P_{ij} = C_i(zI-A)^{-1}B_j + D_{ij}$. We refer to $\tf P$ as the open loop plant model. %We assume that the realization $(A,B_2)$ is stabilizable and $(A,C_2)$ is detectable. %In this case, the open loop plant $\tf P$ is stable if and only if $A$ is a stable matrix; i.e., all the eigenvalues of $A$ lie within the unit circle. 

Consider a dynamic output feedback control law $\tf u = \tf K \tf y$. The controller $\tf K$ is assumed to have the state space realization
\begin{subequations} \label{eq:Kss}
\begin{align}
\xi[t+1] &= A_k \xi[t] + B_k y[t] \label{eq:Kss1} \\
u[t] &= C_k \xi[t] + D_k y[t], \label{eq:Kss2}
\end{align}
\end{subequations}
where $\xi$ is the internal state of the controller. We have $\tf K = C_k(zI-A_k)^{-1}B_k + D_k$. A schematic diagram of the interconnection of the plant $\tf P$ and the controller $\tf K$ is shown in Figure \ref{fig:pk}.

\begin{figure}[h]
      \centering
      \includegraphics[width=0.25\textwidth]{./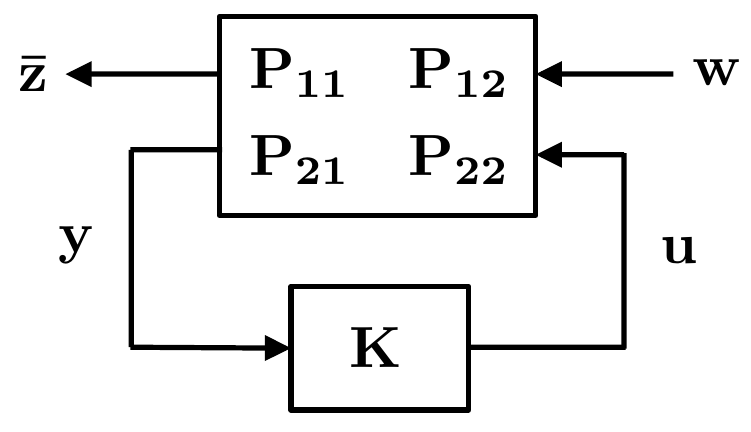}
      \caption{Interconnection of the plant $\tf P$ and controller $\tf K$.}
      \label{fig:pk}
\end{figure}

The following assumptions are made throughout the paper.
\begin{assumption}
The interconnection in Figure \ref{fig:pk} is well-posed -- the matrix $(I-D_{22}D_k)$ is invertible.
\end{assumption}
\begin{assumption}
Both the plant and the controller realizations are stabilizable and detectable; i.e., $(A,B_2)$ and $(A_k, B_k)$ are stabilizable, and $(A,C_2)$ and $(A_k,C_k)$ are detectable.
\end{assumption}

\rev{The goal of the optimal control problem is to find a controller $\tf K$ to stabilize the plant $\tf P$ and minimize a suitably chosen norm\footnote{\rev{Typical choices for the norm include $\mathcal{H}_2$ and $\mathcal{H}_\infty$.}} of the closed loop transfer matrix from external disturbance $\tf w$ to regulated output $ \tf{\bar{z}}$. This leads to the following centralized optimal control formulation:}
%Under these assumptions and definitions, the optimal control problem of minimizing the norm of the closed loop transfer  is typically formulated as
\rev{
\begin{eqnarray}
\underset{\tf K}{\text{minimize  }} &&|| \tf P_{11} + \tf P_{12} \tf K(I- \tf P_{22} \tf K)^{-1} \tf P_{21} || \nonumber\\
\text{subject to } && \tf K \text{ internally stabilizes } \tf P. \label{eq:centralized-add}
\end{eqnarray}
}
\vspace{-5mm}

\subsection{Youla Parameterization} \label{sec:move_Youla}
\revsecond{\rev{
A common technique to solve the optimal control problem \eqref{eq:centralized-add} is via the Youla parameterization, which is based on a doubly co-prime factorization of the plant, defined as follows.
%Youla parameterization is a common technique to characterize the set of all internally stabilizing controller for a given %plant, i.e., the constraint set in \eqref{eq:centralized-add}. The Youla parameterization technique is based on a doubly co-prime factorization of the plant, which is defined as follows.
\begin{definition}
A collection of stable transfer matrices, $\tf U_r$, $\tf V_r$, $\tf X_r$, $\tf Y_r$, $\tf U_l$, $\tf V_l$, $\tf X_l$, $\tf Y_l$ $\in \RHinf$ defines a doubly co-prime factorization of $\tf P_{22}$ if $\tf P_{22} = \tf V_r \tf U_r^{-1} = \tf U_l^{-1} \tf V_l$ and
\begin{equation}
\begin{bmatrix} \tf X_l & \tf{-Y}_l \\ \tf{-V}_l & \tf U_l \end{bmatrix} \begin{bmatrix} \tf U_r & \tf Y_r \\ \tf V_r & \tf X_r \end{bmatrix} = I. \nonumber
\end{equation}
\end{definition}
Such doubly co-prime factorizations can always be computed if $\tf P_{22}$ is stabilizable and detectable \cite{Zhou1996robust}.
}
\rev{
Let $\tf Q$ be the Youla parameter. From \cite{Zhou1996robust}, problem \eqref{eq:centralized-add} can be reformulated in terms of the Youla parameter as 
\begin{eqnarray}
\underset{\tf Q}{\text{minimize  }} &&|| \tf T_{11} + \tf T_{12} \tf Q \tf T_{21}|| \nonumber\\
\text{subject to } && \tf Q \in \RHinf \label{eq:trad_youla-add}
\end{eqnarray}
with $\tf T_{11} = \tf P_{11} + \tf P_{12} \tf Y_r \tf U_l \tf P_{21}$, $\tf T_{12} = - \tf P_{12} \tf U_r$, and $\tf T_{21} = \tf U_l \tf P_{21}$. 
%In \eqref{eq:trad_youla-add}, we search over all stable proper real-rational transfer matrix $\tf Q$ to minimize the norm of the closed loop map. 
%After the transformation of optimization variables from $\tf K$ to $\tf Q$, \eqref{eq:trad_youla} becomes a convex optimization problem that can be solved in an efficient way. 
}
\rev{
The benefit of optimizing over the Youla parameter $\tf Q$, rather than the controller $\tf K$, is that \eqref{eq:trad_youla-add} is convex with respect to the Youla parameter. One can then incorporate various convex design specifications \cite{Boyd_closed} in \eqref{eq:trad_youla-add} to customize the controller synthesis task.
Once the optimal Youla parameter $\tf Q$, or a suitable approximation thereof, is found in \eqref{eq:trad_youla-add}, we reconstruct the controller by setting $\tf K =  (\tf Y_r - \tf U_r \tf Q)(\tf X_r - \tf V_r \tf Q)^{-1}$. % \mathfrak{Y}(\tf Q) [\mathfrak{X}(\tf Q)]^{-1}$, with affine maps $\mathfrak{Y}(\tf Q) = (\tf Y_r - \tf U_r \tf Q)$ and $\mathfrak{X}(\tf Q) = (\tf X_r - \tf V_r \tf Q)$.
}
}

\subsection{Structured Controller Synthesis and QI} \label{sec:structured}
\rev{We now move our discussion to the distributed optimal control problem.}
We follow the paradigm adopted in \cite{2006_Rotkowitz_QI_TAC,2012_Lessard_two_player,2010_Shah_H2_poset,2013_Lamperski_H2,2013_Lessard_structure,2013_Scherer_Hinf,2014_Lessard_Hinf,2014_Matni_Hinf,2014_Tanaka_Triangular}, and focus on information asymmetry introduced by delays in the communication network -- this is a reasonable modeling assumption when one has dedicated physical communication channels (e.g., fiber optic channels), but may not be valid under wireless settings.  In the references cited above, locally acquired measurements are exchanged between sub-controllers subject to delays imposed by the communication network,\footnote{Note that this delay may range from 0, modeling instantaneous communication between sub-controllers, to infinite, modeling no communication between sub-controllers.} which manifest as subspace constraints on the controller itself.\rev{\footnote{\rev{For continuous time systems, the delays can be encoded via subspaces that may reside within $\mathcal{H}_\infty$ as opposed $\RHinf$.}}}

\rev{Let $\mathcal{C}$ be a subspace enforcing the information sharing constraints imposed on the controller $\tf K$. A distributed optimal control problem can then be formulated as \cite{2006_Rotkowitz_QI_TAC,2011_QIN, 2014_Lessard_convexity, 2014_Sabau_QI}:}
%We consider the following optimal structured controller synthesis task, as defined in 
\begin{equation}
\begin{array}{rl}
\underset{\tf K}{\text{minimize  }} &\| \tf P_{11} + \tf P_{12} \tf K(I- \tf P_{22} \tf K)^{-1} \tf P_{21} \| \\
\text{subject to } & \tf K \text{ internally stabilizes } \tf P, \,\, \tf K \in \mathcal{C}.
\end{array}
\label{eq:decentralized}
\end{equation}
%for $\mathcal{C}$ a subspace.  This subspace can enforce, for instance, the information sharing constraints imposed on the controller $\tf K$ by the underlying communication network, as described above.

\rev{A summary of the main results from the distributed optimal control literature \cite{2006_Rotkowitz_QI_TAC,2012_Lessard_two_player,2010_Shah_H2_poset,2013_Lamperski_H2,2013_Lessard_structure,2013_Scherer_Hinf,2014_Lessard_Hinf,2014_Matni_Hinf,2014_Tanaka_Triangular} can be given as follows:} if the subspace $\mathcal{C}$ is quadratically invariant with respect to $\tf P_{22}$ \revsecond{($\tf K \tf P_{22} \tf K \in \mathcal{C}, \,\, \forall \tf K \in \mathcal{C}$)} \cite{2006_Rotkowitz_QI_TAC}, then the set of all stabilizing controllers lying in subspace $\mathcal{C}$ can be parameterized by those stable transfer matrices $\tf Q \in \RHinf$ satisfying $\mathfrak{M}(\tf Q)\in \mathcal{C}$, \rev{for $\mathfrak{M}(\tf Q) := \tf K(I- \tf P_{22} \tf K)^{-1} = (\tf Y_r - \tf U_r \tf Q) \tf U_l$.\footnote{\rev{By definition, we have $\tf P_{22} = \tf V_r \tf U_r^{-1} = \tf U_l^{-1} \tf V_l$. This implies that the transfer matrices $\tf U_r$ and $\tf U_l$ are both invertible. Therefore, $\mathfrak{M}$ is an invertible affine map of the Youla parameter $\tf Q$.}}}
%$\mathfrak{M}$ an invertible affine map defined in terms of an arbitrary doubly co-prime factorization of the plant $\tf P$ (we defer a more detailed review of this material to Appendix \ref{sec:dcf}). 
Further, these conditions can be viewed as tight, in the sense that quadratic invariance is also a necessary condition \cite{2011_QIN, 2014_Lessard_convexity} for a subspace constraint $\mathcal{C}$ on the controller $\tf K$ to be enforced via a convex constraint on the Youla parameter $\tf Q$.

This allows the optimal control problem \eqref{eq:decentralized} to be recast as the following convex model matching problem:
\begin{equation} 
\begin{array}{rl}
\underset{\tf Q}{\text{minimize}} & \| \tf T_{11} + \tf T_{12} \tf Q \tf T_{21}\| \\
\text{subject to} & \tf Q \in \RHinf, \,\, \mathfrak{M}(\tf Q) \in \mathcal{C}.
\end{array}
\label{eq:trad_youla2}
\end{equation}

\subsection{\revsecond{QI imposes limitations on controller sparsity}} \label{sec:limitation}
\revsecond{When working with large-scale systems, it is natural to impose that sub-controllers only collect information from a local subset of all other sub-controllers. This can be enforced by setting the subspace constraint $\mathcal{C}$ in problem \eqref{eq:decentralized} to encode a suitable sparsity pattern $\tf K_{ij} = 0$,\footnote{$\tf K_{ij}$ denotes the $(i,j)$-entry of the transfer matrix $\tf K$.} for some $i, j$.  However, if the plant $\tf P_{22}$ is dense (i.e., if the underlying system is strongly connected), which may occur even if the system matrices $(A,B_2,C_2)$ are sparse, then \emph{any} such sparsity constraint is not quadratically invariant with respect to the plant $\tf P_{22}$: this follows immediately from the algebraic definition of QI $\tf K \tf P_{22} \tf K \in \mathcal{C}, \,\, \forall \tf K \in \mathcal{C}$.  As QI is a necessary and sufficient condition for the subspace constraint $\tf K \in \mathcal{C}$ to be enforced via a convex constraint on the Youla parameter $\tf Q$, we conclude that for strongly connected systems, any sparsity constraint imposed on the controller $\tf K$ can only be enforced via a non-convex constraint on Youla parameter.  A major motivation for the SLA developed in this paper was to circumvent this limitation of the QI framework -- we revisit this discussion in Section \ref{sec:beyond}, and show, through the use of a simple example, that the SLA does indeed allow for these limitations to be overcome.}

\section{System Level Parameterization} \label{sec:parameter}
\revsecond{In this section, we propose a novel parameterization of internally stabilizing controllers centered around \emph{system responses}, which are defined by the closed loop maps from process and measurement disturbances to state and control action. We show} that for a given system, the set of stable \revsecond{closed loop system responses that are achievable by an internally stabilizing LTI controller} is an affine subspace of $\RHinf$, and that the corresponding internally stabilizing controller achieving the desired system response admits a particularly simple and transparent realization.

We begin by analyzing the state feedback case, as it has a simpler characterization and allows us to provide intuition about the construction of a controller that achieves a desired system response. With this intuition in hand, we present our results for the output feedback setting, which is the main focus of this paper. \revsecond{We conclude the section with a comparison of the pros and cons of using the SL and Youla parameterizations.}

%\subsection{Preliminaries}
%Some conventional definitions and lemmas are presented here.
%\begin{definition}
%A square matrix $A$ is said to be a \emph{stable matrix} if all its eigenvalues are contained in the unit disc.
%\end{definition}
%
%Note that $A$ is a stable matrix if and only if $(zI-A)^{-1}$ is a stable transfer function. 
%\begin{definition}
%The pair $(A,B_2)$ is said to be \emph{stabilizable} if there exists a matrix $F$ such that $A+B_2 F$ is a stable matrix. The pair $(A,C_2)$ is said to be \emph{detectable} if there exists a matrix $H$ such that $A+H C_2$ is a stable matrix.
%\end{definition}
%
%The following test on stabilizability and detectability is known as the PBH test.
%\begin{lemma}
%The pair $(A,B_2)$ is stabilizable if and only if the matrix $[zI-A \quad B_2]$ has full row rank for all $|z| \geq 1$. The pair $(A,C_2)$ is detectable if and only if the matrix $[zI-A^\top \quad C_2^\top]^\top$ has full column rank for all $|z| \geq 1$.
%\end{lemma}
%\begin{proof}
%See, for example, \cite{Zhou1996robust} or \cite{Dullerud_Paganini}.
%\end{proof}
%
%Traditionally, the notion of ``$K$ stabilizes $P$" can be checked by the input-output stability of the following configuration
%
%Internal stability for state space

\subsection{State Feedback} \label{sec:sf}
\revsecond{Consider a state feedback model given by}
%Consider a state feedback problem with the plant model given by
\begin{equation}
\tf P = \left[ \begin{array}{c|cc} A & B_1 & B_2 \\ \hline C_1 & D_{11} & D_{12} \\ I & 0 & 0 \end{array} \right]. \label{eq:sfplant}
\end{equation}
The $z$-transform of the state dynamics \eqref{eq:sys_x} is given by
\begin{equation}
(zI - A) \tf x = B_2 \tf u + \ttf{\delta_x}, \label{eq:zsfb}
\end{equation}
where we let $ \ttf{\delta_x} := B_1\tf w$ denote the disturbance affecting the state.
We define $\tf R$ to be the system response mapping the external disturbance $ \ttf{\delta_x}$ to the state $\tf x$, and $\tf M$ to be the system response mapping the disturbance $\ttf{\delta_x}$ to the control action $\tf u$. \revsecond{For a given dynamic state feedback control rule $\tf u = \tf K \tf x$ into \eqref{eq:zsfb}, we define the system response $\{\tf R, \tf M\}$ achieved by the controller $\tf K$ to be
\begin{eqnarray}
\tf R &=& (zI-A-B_2 \tf K)^{-1} \nonumber\\
\tf M &=& \tf K (zI-A-B_2 \tf K)^{-1}, \label{eq:KRM}
\end{eqnarray}
from which it follows that $\tf x = \tf R \ttf{\delta_x}$ and $\tf u = \tf M \ttf{\delta_x}$.}

\revsecond{
Similarly, given transfer matrices $\{ \tf R, \tf M\}$, we say that they define an achievable system response for the system \eqref{eq:sfplant} if there exists a LTI controller $\tf K$ such that $\tf x = \tf R \ttf{\delta_x}$ and $\tf u = \tf M \ttf{\delta_x}$, for $\{\tf R, \tf M\}$ as defined in equation \eqref{eq:KRM}.}

The main result of this subsection is an algebraic characterization of the set $\{\tf R, \tf M\}$ of state-feedback system responses that are achievable by an internally stabilizing controller $\tf K$, as stated in the following theorem.

\begin{theorem}
For the state feedback system \eqref{eq:sfplant}, the following are true:
\begin{enumerate}[(a)]
  \item The affine subspace defined by
  \begin{subequations} \label{eq:state_fb}
\begin{align}
& \begin{bmatrix} zI - A & -B_2 \end{bmatrix} \begin{bmatrix} \tf R \\ \tf M \end{bmatrix} = I \label{eq:state_fb1}\\
& \tf R, \tf M \in \frac{1}{z} \mathcal{RH}_\infty \label{eq:state_fb2}
\end{align}
\end{subequations}
 parameterizes all system responses \eqref{eq:KRM} achievable by an internally stabilizing state feedback controller $\tf K$. \label{thm1-4}
  \item For any transfer matrices  $\{\tf R,\tf M\}$ satisfying \eqref{eq:state_fb}, the controller $\tf K = \tf M \tf R^{-1}$ achieves the desired system response \eqref{eq:KRM}.\rev{\footnote{\rev{Note that for any transfer matrices $\{\tf R,\tf M\}$ satisfying \eqref{eq:state_fb}, the transfer matrix $\tf R$ is always invertible because its leading spectral component $\frac{1}{z} I$ is invertible.  This is also true for the transfer matrices defined in equation \eqref{eq:KRM}.}}}. \revsecond{Further, if the controller $\tf K = \tf M \tf R^{-1}$ is implemented as in Fig. \ref{fig:sf}, then it is internally stabilizing. }
  \label{thm1-3} 
  \end{enumerate} \label{thm:sf}
\end{theorem}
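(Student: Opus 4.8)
The plan is to prove the three assertions implicit in the statement: (i) every response \eqref{eq:KRM} arising from an internally stabilizing $\tf K$ satisfies \eqref{eq:state_fb}; (ii) for any $\{\tf R,\tf M\}$ obeying \eqref{eq:state_fb}, the controller $\tf K=\tf M\tf R^{-1}$ achieves \eqref{eq:KRM} and is internally stabilizing (this is part~\ref{thm1-3}); and (iii) the surjectivity half of part~\ref{thm1-4}, which is then immediate from (ii). For (i) I would substitute \eqref{eq:KRM} directly into the left side of \eqref{eq:state_fb1} and collapse it, $\begin{bmatrix} zI-A & -B_2\end{bmatrix}\begin{bmatrix}\tf R \\ \tf M\end{bmatrix}=(zI-A-B_2\tf K)(zI-A-B_2\tf K)^{-1}=I$. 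Stability of $\tf R,\tf M$ is part of what internal stability of the loop in Figure~\ref{fig:pk} asserts, since they are precisely the closed-loop maps from $\ttf{\delta_x}$ to $\tf x$ and to $\tf u$; strict properness follows by letting $z\to\infty$ in \eqref{eq:KRM}, where $(zI-A-B_2\tf K(z))^{-1}\to 0$ and hence $\tf M(z)=\tf K(z)\tf R(z)\to 0$, so $\tf R,\tf M\in\tfrac1z\RHinf$.

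For part~\ref{thm1-3}, I would first record that \eqref{eq:state_fb} forces $\tf R$ invertible: matching the $z^{0}$-order terms in \eqref{eq:state_fb1} — legitimate because \eqref{eq:state_fb2} makes $\tf R$ strictly proper — yields $R[1]=I$, so $\tf R$ has leading term $\tfrac1z I$ and $\tf R^{-1}$ exists (it is improper, but $\tf K=\tf M\tf R^{-1}$ stays proper since the leading $\tfrac1z$ term of $\tf M$ absorbs the factor $z$ in $\tf R^{-1}$). The crux is then a one-line identity: substituting $\tf u=\tf K\tf x=\tf M\tf R^{-1}\tf x$ into \eqref{eq:zsfb} and rewriting \eqref{eq:state_fb1} as $(zI-A)\tf R=I+B_2\tf M$ gives $zI-A-B_2\tf M\tf R^{-1}=(I+B_2\tf M)\tf R^{-1}-B_2\tf M\tf R^{-1}=\tf R^{-1}$, whence $\tf R^{-1}\tf x=\ttf{\delta_x}$, i.e. $\tf x=\tf R\ttf{\delta_x}$ and $\tf u=\tf M\tf R^{-1}\tf x=\tf M\ttf{\delta_x}$, which is exactly \eqref{eq:KRM}.

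The remaining, and in my view the only genuinely delicate, step is internal stability of the interconnection driven by $\tf K=\tf M\tf R^{-1}$. Repeating the computation above with disturbances injected simultaneously at the state recursion and at the control input, the four closed-loop transfer matrices from those injections to $(\tf x,\tf u)$ come out to be $\tf R$, $\tf R B_2$, $\tf M$, and $\tf M B_2$, each of which lies in $\RHinf$ because $\tf R,\tf M\in\tfrac1z\RHinf$ and $B_2$ is a constant matrix. Under the standing stabilizability/detectability hypotheses on the plant and controller realizations, stability of these maps is equivalent to internal stability of the interconnection; an alternative that avoids invoking that equivalence is to build an explicit state-space realization of $\tf K=\tf M\tf R^{-1}$ from the spectral components $R[i],M[i]$ and verify that the closed-loop state matrix it induces with \eqref{eq:sfplant} is Schur. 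Either route establishes part~\ref{thm1-3}, after which the surjectivity claim in part~\ref{thm1-4} follows by assigning to each $\{\tf R,\tf M\}$ satisfying \eqref{eq:state_fb} the stabilizing controller $\tf K=\tf M\tf R^{-1}$ just constructed.
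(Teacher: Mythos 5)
Your necessity argument and the algebraic identity showing that $\tf K = \tf M \tf R^{-1}$ achieves $\tf x = \tf R\ttf{\delta_x}$, $\tf u = \tf M\ttf{\delta_x}$ are fine and essentially mirror the paper (the paper proves necessity via the explicit closed-loop state-space realization \eqref{eq:ss_p1}, but your substitution-plus-$z\to\infty$ route is equally valid). The genuine gap is in the internal-stability step. You inject perturbations only at the state recursion ($\ttf{\delta_x}$) and at the plant input ($\ttf{\delta_u}$) and then assert that, under the standing stabilizability/detectability assumptions, stability of the resulting maps to $(\tf x,\tf u)$ is \emph{equivalent} to internal stability of the interconnection. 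That equivalence is not a standard fact and is false in general: the standard bounded-input bounded-output characterization (Lemma 5.3 of \cite{Zhou1996robust}, which the paper invokes) requires perturbations at \emph{all} interconnection points, in particular at the controller input. The danger is exactly an unstable pole of $\tf K = \tf M\tf R^{-1}$ (sitting at an unstable zero of $\tf R$) that is cancelled in the products $\tf M = \tf K\tf R$ and $\tf R B_2$ seen from $(\ttf{\delta_x},\ttf{\delta_u})$, but excited by a perturbation entering directly at the controller input. Here those extra maps do turn out stable --- e.g., the map from the controller-input perturbation to $\tf x$ is $\tf R(zI-A)-I = z\tf R - \tf R A - I$ and to $\tf u$ is $\tf M(zI-A)$, both in $\RHinf$ by \eqref{eq:state_fb2} --- but this must be computed, not assumed; it is precisely what your equivalence claim papers over. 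The paper's Lemma \ref{lem:sf_suf} closes this by working with the explicit implementation \eqref{eq:mpcf} of Figure \ref{fig:sf} and verifying all nine closed-loop maps from $(\ttf{\delta_x},\ttf{\delta_y},\ttf{\delta_u})$ to $(\tf x,\tf u,\ttf{\hat{\delta_x}})$ in \eqref{eq:sf_cl}, including the channel through the controller's internal feedback loop. The importance of the channels you dropped is underscored by the paper's later output-feedback examples (Figure \ref{fig:alt1}): there, all maps from the external perturbations are stable, yet the realization is internally unstable for unstable $A$ because of the perturbation on the controller's internal state.

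Your proposed fallback --- realize $\tf K=\tf M\tf R^{-1}$ in state space from the spectral components and ``verify that the closed-loop state matrix is Schur'' --- is not an argument as stated: verifying that the closed-loop matrix is Schur for an arbitrary feasible $\{\tf R,\tf M\}$ is exactly the content of part (\ref{thm1-3}), and you give no mechanism for doing so. The fix is simple and stays within your framework: add a perturbation at the controller input (and, if you want to certify the specific realization of Figure \ref{fig:sf}, inside its internal loop), compute the corresponding closed-loop maps using $(zI-A)\tf R = I + B_2\tf M$, and observe that every entry, as in \eqref{eq:sf_cl}, lies in $\RHinf$ because $\tf R,\tf M\in\frac{1}{z}\RHinf$. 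With that addition your proof matches the paper's; note also the minor slip that the map from $\ttf{\delta_u}$ to $\tf u$ is $I+\tf M B_2$, not $\tf M B_2$.
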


The rest of this subsection is devoted to the proof of Theorem \ref{thm:sf}.  
%Our approach is to first show that the necessity of equations \eqref{eq:state_fb} follows from direct calculation -- we also relate the feasibility of equations \eqref{eq:state_fb} to the stabilizability of the system \eqref{eq:sfplant}.  We then show the sufficiency of equations \eqref{eq:state_fb}: for any system response $\{ \tf R, \tf M\}$ satisfying equations \eqref{eq:state_fb}, we construct an internally stabilizing controller $\tf K$ that leads to the desired system response \eqref{eq:KRM}.

\subsubsection*{Necessity}  \rev{The necessity of a stable and achievable system response $\{\tf R, \tf M\}$ lying in the affine subspace \eqref{eq:state_fb} is shown in the following lemma.}
\rev{\begin{lemma}[Necessity of conditions \eqref{eq:state_fb}] \label{lem:sf_nec}
Consider the state feedback system \eqref{eq:sfplant}. Let $\{\tf R, \tf M\}$ be the system response achieved by an internally stabilizing controller $\tf K$. Then, $\{\tf R, \tf M\}$ is a solution of \eqref{eq:state_fb}.
% Equation \eqref{eq:state_fb} is a necessary condition for the closed loop transfer matrices $(\tf R, \tf M)$ to be achievable by an internal stabilizing controller $\tf K$.
\end{lemma}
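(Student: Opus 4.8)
The plan is to start from the explicit formulas \eqref{eq:KRM} for the achieved system response $\{\tf R,\tf M\}$ in terms of the internally stabilizing controller $\tf K$, and to verify the two defining properties of the affine subspace \eqref{eq:state_fb} in turn.

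The affine identity \eqref{eq:state_fb1} is pure algebra. Substituting \eqref{eq:KRM},
\begin{align*}
\begin{bmatrix} zI - A & -B_2 \end{bmatrix}\begin{bmatrix} \tf R \\ \tf M \end{bmatrix}
&= (zI - A)(zI - A - B_2\tf K)^{-1} - B_2 \tf K(zI - A - B_2\tf K)^{-1}\\
&= (zI - A - B_2\tf K)(zI - A - B_2\tf K)^{-1} = I,
\end{align*}
which uses only that $zI - A - B_2\tf K$ is invertible as a rational matrix --- this is implicit in the definition \eqref{eq:KRM} (equivalently, it is reflected by the fact that the leading spectral component of $\tf R$ is $I$).

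For \eqref{eq:state_fb2} I would treat stability and strict properness separately. For stability, substitute the controller realization \eqref{eq:Kss} --- with $\tf y=\tf x$, since $C_2=I$ and $D_{21}=D_{22}=0$ for the plant \eqref{eq:sfplant} --- into \eqref{eq:sys_x}, so that the closed loop state $(x,\xi)$ satisfies
\[
\begin{bmatrix} x[t+1]\\\xi[t+1]\end{bmatrix}
= A_{cl}\begin{bmatrix} x[t]\\\xi[t]\end{bmatrix}
+ \begin{bmatrix} I\\0\end{bmatrix}\delta_x[t],\qquad
A_{cl}:=\begin{bmatrix} A+B_2D_k & B_2C_k\\ B_k & A_k\end{bmatrix}.
\]
Then $\tf x$ and $\tf u$ are the outputs of a state space system with state matrix $A_{cl}$ driven by $\ttf{\delta_x}$, and a $2\times2$ block inversion of $zI-A_{cl}$ confirms that the corresponding input/output maps are exactly the $\tf R$ and $\tf M$ of \eqref{eq:KRM}. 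By the standard equivalence --- valid under the standing assumptions, which guarantee well-posedness and exclude unstable hidden modes --- ``$\tf K$ internally stabilizes $\tf P$'' is the same as ``$A_{cl}$ is Schur stable'', and hence $\tf R,\tf M\in\RHinf$. Strict properness then follows from \eqref{eq:state_fb1}: rearranging gives $z\tf R = I + A\tf R + B_2\tf M$, so $\tf R = \frac1z\left(I + A\tf R + B_2\tf M\right)\in\frac1z\RHinf$, and $\tf M = \tf K\tf R$ is strictly proper as the product of the proper $\tf K$ with the strictly proper $\tf R$. Together with $\tf R,\tf M\in\RHinf$ this establishes \eqref{eq:state_fb2}.

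The only step I expect to require care is the equivalence between ``$\tf K$ internally stabilizes $\tf P$'' and ``$A_{cl}$ is Schur stable'': this is precisely where stabilizability and detectability of both realizations (Assumption~2) enter, ruling out unstable modes that are invisible in the transfer matrices \eqref{eq:KRM} but present in the closed loop state matrix $A_{cl}$. Everything else is routine transfer-function bookkeeping.
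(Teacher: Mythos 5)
Your proposal is correct and follows essentially the same route as the paper: both form the closed-loop state matrix $A_{cl}$ from the controller realization \eqref{eq:Kss}, invoke the standard equivalence (Lemma 5.2 of \cite{Zhou1996robust}, valid under the standing stabilizability/detectability assumptions) to conclude $\tf R,\tf M\in\RHinf$, and verify the affine constraint \eqref{eq:state_fb1} by direct computation. The only cosmetic differences are that the paper checks \eqref{eq:state_fb1} and strict properness from the state-space realization \eqref{eq:ss_p1} (zero feedthrough), whereas you verify the identity at the transfer-function level via \eqref{eq:KRM} and deduce strict properness from $z\tf R = I + A\tf R + B_2\tf M$ and $\tf M=\tf K\tf R$; both are equally valid.
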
 
\begin{proof}
\revsecond{Equation \eqref{eq:state_fb1} follows directly from \eqref{eq:zsfb}, which holds for the system response achieved by any controller. For an internally stabilizing controller, the system response $\{\tf R, \tf M\}$ is in $\RHinf$ by definition of internal stability. From \eqref{eq:KRM} and the properness of $\tf K$, the system response is strictly proper, implying equation \eqref{eq:state_fb2} and completing the proof.}
%Consider an internally stabilizing controller $\tf u = \tf K \tf x$ with its state space realization given by \eqref{eq:Kss} (with $\tf x = \tf y$ for state feedback).
%%\begin{eqnarray}
%%z \ttf{\xi} &=& A_K \ttf{\xi} + B_K \tf x \nonumber\\
%%\tf u &=& C_K \ttf{\xi} + D_K \tf x. \label{eq:ss-sf}
%%\end{eqnarray}
%Combining \eqref{eq:Kss} and \eqref{eq:zsfb}, we have the closed loop system dynamics given by
%\begin{equation}
%\begin{bmatrix} z \tf x \\ z \ttf{\xi} \end{bmatrix} = \begin{bmatrix} A + B_2 D_k & B_2 C_k \\ B_k & A_k \end{bmatrix} \begin{bmatrix} \tf x \\ \ttf{\xi} \end{bmatrix} + \begin{bmatrix} I \\ 0 \end{bmatrix} \ttf{\delta_x}. \label{eq:ss-sfc}
%\end{equation}
%As the controller is internally stabilizing, we know that the state matrix in \eqref{eq:ss-sfc} is a stable matrix (Lemma $5.2$ in \cite{Zhou1996robust}). The system response achieved by the control law $\tf u = \tf K \tf x$ is given by
%\begin{equation}
%\begin{bmatrix} \tf R \\ \tf M \end{bmatrix} = \left[ \begin{array}{cc | c} A + B_2 D_k & B_2 C_k & I \\ B_k & A_k & 0 \\ \hline I & 0 & 0 \\ D_k & C_k & 0 \end{array} \right]. \label{eq:ss_p1}
%\end{equation}
%It is clear that the system response \eqref{eq:ss_p1} is strictly proper and stable, thus \eqref{eq:state_fb2} is satisfied. In addition, routine calculations show that system \eqref{eq:ss_p1} satisfies the equality constraint \eqref{eq:state_fb1} for arbitrary $(A_k, B_k, C_k, D_k)$. This completes the proof.
\end{proof}}

%Here we instead attempt to provide some intuition about the implications of the conditions \eqref{eq:state_fb}: in particular, we show that there is a natural connection between these conditions and the stabilizability of system \eqref{eq:sfplant}.

\begin{remark}
We show in Lemma \ref{lemma:1} in Appendix \ref{sec:proof} that the feasibility of \eqref{eq:state_fb} is equivalent to the stabilizability of the pair $(A, B_2)$. In this sense, the conditions described in \eqref{eq:state_fb} provides an alternative definition of the stabilizability of a system. A dual argument is also provided to characterize the detectability of the pair $(A,C_2)$.
\end{remark}

\subsubsection*{Sufficiency}  Here we show that for any system response $\{\tf R, \tf M\}$ lying in the affine subspace \eqref{eq:state_fb}, we can construct an internally stabilizing controller $\tf K$ that leads to the desired system response \eqref{eq:KRM}.  

\begin{figure}[ht!]
      \centering
      \includegraphics[width=0.4\textwidth]{./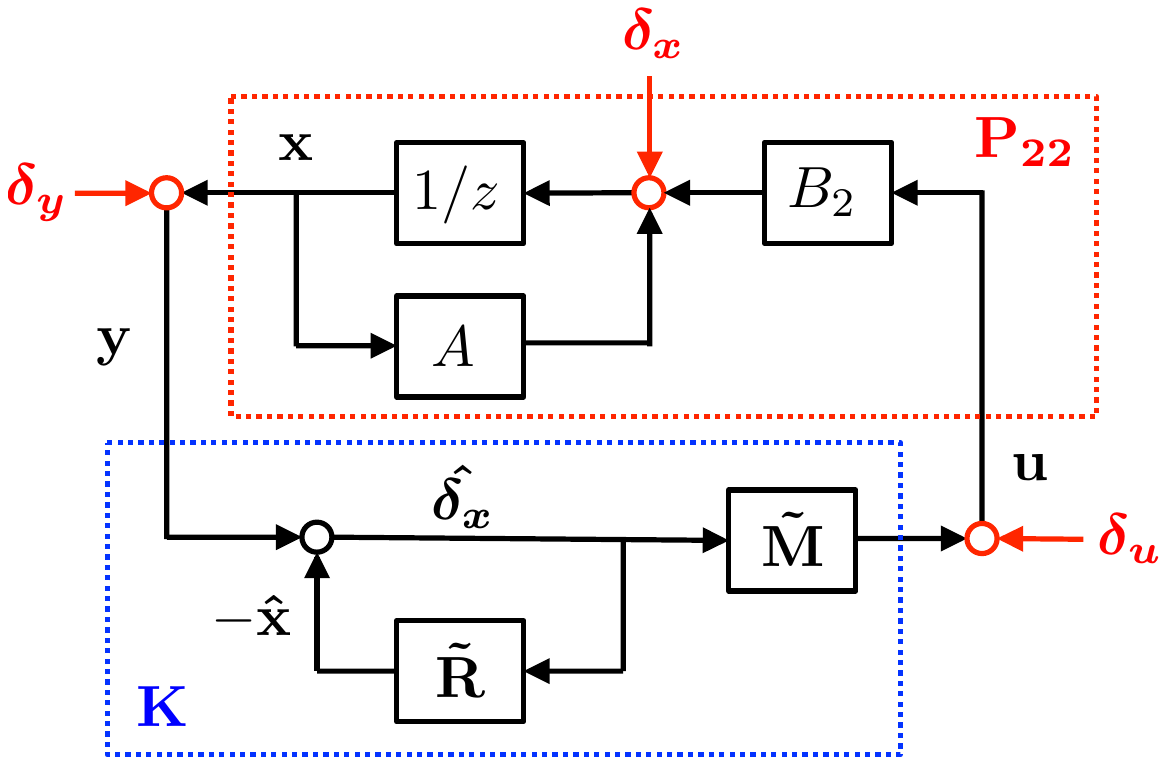}
      \caption{The proposed state feedback controller structure, with $\tf{\tilde{R}} = I - z \tf R$ and $\tf{\tilde{M}} = z \tf M$.}
      \label{fig:sf}
\end{figure}

%In order to show that the implementation \eqref{eq:mpcf} is internally stabilizing, we consider the block diagram of \eqref{eq:mpcf} in Figure \ref{fig:sf} with $\tf{\tilde{R}} = I - z \tf R$ and $\tf{\tilde{M}} = z \tf M$. From \eqref{eq:rm} and the fact that $\tf R$ and $\tf M$ are strictly proper ($\tf R, \tf M \in \frac{1}{z} \mathcal{RH}_{\infty}$), we have $\tf{\tilde{R}} = I - z \tf R = -A \tf R - B_2 \tf M \in \frac{1}{z} \mathcal{RH}_{\infty}$ and $\tf{\tilde{M}} = z \tf M \in \mathcal{RH}_{\infty}$. Note that there is an internal feedback loop from the estimated disturbance $\ttf{\hat{\delta_x}}$ to the reference trajectory $\tf{\hat{x}}$ inside the controller. This internal feedback loop is well defined because the transfer matrix $\tf{\tilde{R}}$ is strictly proper. Here we implement the controller by the stable filters $(\tf{\tilde{R}},\tf{\tilde{M}})$ directly without computing its state space realization. 
%
%The internal stability of the structure in Figure \ref{fig:sf} is defined as follows.

%It remains to be shown that the controller implementation \eqref{eq:mpcf} internally stabilizes the plant \eqref{eq:sfplant}. 
\revsecond{Consider the block diagram shown in Figure \ref{fig:sf}, where here $\tf{\tilde{R}} = I - z \tf R$ and $\tf{\tilde{M}} = z \tf M$.} It can be checked that $z\tf{\tilde{R}},\tf{\tilde{M}} \in \RHinf$, and hence the internal feedback loop between $\ttf{\hat{\delta}_x}$ and the reference state trajectory $\tf{\hat{x}}$ is well defined.  As is standard, we introduce external perturbations $\delta_x, \delta_y$, and $\delta_u$ into the system and note that the perturbations entering other links of the block diagram can be expressed as a combination of $(\ttf{\delta_x}, \ttf{\delta_y}, \ttf{\delta_u})$ being acted upon by some stable transfer matrices.\footnote{The matrix $A$ may define an unstable system, but viewed as an element of $\mathcal{F}_0$, defines a stable (FIR) transfer matrix.}  Hence the standard definition of internal stability applies, and we can use a bounded-input bounded-output argument (e.g., Lemma $5.3$ in \cite{Zhou1996robust}) to conclude that it suffices to check the stability of the nine closed loop transfer matrices from perturbations $(\ttf{\delta_x}, \ttf{\delta_y}, \ttf{\delta_u})$ to the internal variables $(\tf x, \tf u, \ttf{\hat{\delta_x}})$ to determine the internal stability of the structure as a whole. \rev{With this in mind, we can prove the sufficiency of Theorem \ref{thm:sf} via the following lemma.} 

%With this discussion in mind, we provide a definition of internal stability for this controller structure that is consistent in spirit with traditional definitions -- in particular, it ensures that all signals remain bounded provided that the injected signals (at any location) remain bounded \cite{Zhou1996robust}.

%As all blocks in Figure \ref{fig:sf} are stable filters, it follows that if the origin $(x,\hat{\delta_x}) = (0,0)$ is asymptotically stable then any other signals in the block diagram will decay asymptotically. This is equivalent to the conventional notion of internal stability \cite{Zhou1996robust}, which we recall here for the reader before stating and proving that the proposed controller implementation is internally stabilizing. 
%
%\begin{definition}
%The interconnection in Figure \ref{fig:sf} is internally stable if the origin $(x,\hat{\delta_x}) = (0,0)$ is asymptotically stable, i.e., $x[t], \hat{\delta_x}[t] \to 0$ for $t \to \infty$ for all initial conditions when the external perturbations $\delta_x, \delta_y, \delta_u$ in Figure \ref{fig:sf} are set to $0$.
%\end{definition}

\begin{lemma}[Sufficiency of conditions \eqref{eq:state_fb}] \label{lem:sf_suf}
Consider the state feedback system \eqref{eq:sfplant}.
Given any system response $\{\tf R, \tf M\}$ lying in the affine subspace described by \eqref{eq:state_fb}, the state feedback controller $\tf K = \tf M \tf R^{-1}$, with structure shown in Figure \ref{fig:sf}, internally stabilizes the plant. In addition, the desired system response, as specified by $\tf x = \tf R  \ttf{\delta_x}$ and $\tf u = \tf M  \ttf{\delta_x}$, is achieved. \label{thm1-2}
\end{lemma}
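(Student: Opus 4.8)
The plan is to dispatch the achievability claim first, then spend the bulk of the effort on internal stability, which is the substantive part. For achievability, I would substitute the controller relations \eqref{eq:mpcf} into the $z$-transformed dynamics \eqref{eq:zsfb}. Eliminating $\tf{\hat{x}}$ between \eqref{eq:mpcf1} and \eqref{eq:mpcf3} gives $z\tf R\,\ttf{\hat{\delta_x}} = \tf x$, while \eqref{eq:mpcf2} reads $\tf u = z\tf M\,\ttf{\hat{\delta_x}}$; plugging both into \eqref{eq:zsfb} and invoking the equality \eqref{eq:state_fb1} collapses the left-hand side to $z\,\ttf{\hat{\delta_x}}$, so $\ttf{\hat{\delta_x}} = z^{-1}\ttf{\delta_x}$ and therefore $\tf x = \tf R\,\ttf{\delta_x}$, $\tf u = \tf M\,\ttf{\delta_x}$, which is exactly \eqref{eq:KRM} acting on the disturbance; invertibility of $\tf R$ (hence well-definedness of $\tf K = \tf M\tf R^{-1}$) follows from its leading spectral component being $z^{-1}I$.

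For internal stability I would use the reduction already set up before the lemma: it suffices to show the nine closed-loop maps from the injected perturbations $(\ttf{\delta_x},\ttf{\delta_y},\ttf{\delta_u})$ to $(\tf x,\tf u,\ttf{\hat{\delta_x}})$ all lie in $\RHinf$. Writing the perturbed loop equations as $(zI-A)\tf x = B_2(\tf u + \ttf{\delta_u}) + \ttf{\delta_x}$ for the plant and $z\tf R\,\ttf{\hat{\delta_x}} = \tf x + \ttf{\delta_y}$, $\tf u = z\tf M\,\ttf{\hat{\delta_x}}$ for the controller realization of Figure \ref{fig:sf}, I would left-multiply the controller identity by $(zI-A)$, substitute $(zI-A)\tf R = I + B_2\tf M$ from \eqref{eq:state_fb1} together with $z\tf M\,\ttf{\hat{\delta_x}} = \tf u$, and then use the plant equation to cancel the $B_2\tf u$ terms. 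This produces the explicit formula $\ttf{\hat{\delta_x}} = z^{-1}\ttf{\delta_x} + z^{-1}B_2\,\ttf{\delta_u} + (I - z^{-1}A)\,\ttf{\delta_y}$; back-substituting into the two controller relations gives $\tf u = \tf M\,\ttf{\delta_x} + \tf M B_2\,\ttf{\delta_u} + (z\tf M - \tf M A)\,\ttf{\delta_y}$ and $\tf x = \tf R\,\ttf{\delta_x} + \tf R B_2\,\ttf{\delta_u} + (z\tf R - \tf R A - I)\,\ttf{\delta_y}$. Since $\tf R,\tf M \in \frac{1}{z}\RHinf$ imply $z\tf R, z\tf M \in \RHinf$, and $A, B_2$ act as constant (hence stable FIR) transfer matrices, each of these nine coefficient maps is in $\RHinf$, so the interconnection is internally stable; setting the test perturbations to zero recovers $\tf x = \tf R\,\ttf{\delta_x}$, $\tf u = \tf M\,\ttf{\delta_x}$, reconfirming the achieved response.

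The main obstacle — and the reason the disturbance-based realization of Figure \ref{fig:sf} is used rather than a naive implementation of $\tf K = \tf M\tf R^{-1}$ — is that $\tf R^{-1}$ need not be proper or stable, so one cannot simply plug $\tf K$ into a standard internal-stability criterion. The argument must be conducted entirely in terms of the genuinely stable blocks $z\tf R$, $z\tf M$, and $\tf{\tilde R} = I - z\tf R$, and the cancellation that eliminates the dangerous $\tf R^{-1}$ factor is precisely the algebraic identity \eqref{eq:state_fb1}. A secondary point requiring care is the bookkeeping around $(zI-A)$: because $A$ may define an unstable system, it must be kept as an FIR transfer matrix throughout, so that expressions such as $I - z^{-1}A$ and $z\tf R - \tf R A$ are legitimately elements of $\RHinf$.
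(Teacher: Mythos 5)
Your proposal is correct and follows essentially the same route as the paper: it derives the nine closed-loop maps from $(\ttf{\delta_x},\ttf{\delta_y},\ttf{\delta_u})$ to $(\tf x,\tf u,\ttf{\hat{\delta_x}})$ using the realization of Figure \ref{fig:sf} and the identity \eqref{eq:state_fb1}, observes they all lie in $\RHinf$, and reads off the achieved response; your explicit formulas match the paper's table (the only difference, the $\ttf{\delta_u}\to\tf u$ entry being $\tf M B_2$ rather than $I+\tf M B_2$, comes from labeling $\tf u$ before versus after the injection of $\ttf{\delta_u}$ and is immaterial to the stability conclusion).
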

\begin{proof}
We first note that from Figure \ref{fig:sf}, we can express the state feedback controller $\tf K$ as $\tf K = \tf{\tilde{M}} (I - \tf{\tilde{R}})^{-1} = (z \tf M)(z \tf R)^{-1} = \tf M \tf R^{-1}$. 
%We only need to prove \eqref{thm1-1} and \eqref{thm1-2}, as the statements in \eqref{thm1-3} and \eqref{thm1-4} follow directly by combining the results in \eqref{thm1-1} and \eqref{thm1-2}. 
%For \eqref{thm1-1}, 
Now, for any system response $\{\tf R, \tf M\}$ lying in the affine subspace described by \eqref{eq:state_fb}, we construct a controller using the structure given in Figure \ref{fig:sf}. To show that the constructed controller internally stabilizes the plant, we list the following equations from Figure \ref{fig:sf}:
\begin{equation*}
\begin{array}{rcl}
z \tf x &=& A \tf x + B_2 \tf u + \ttf{\delta_x}\\
\tf u &=& \tf{\tilde{M}} \ttf{\hat{\delta_x}} + \ttf{\delta_u} \\
\ttf{\hat{\delta}_x} &=& \tf x + \ttf{\delta_y} + \tf{\tilde{R}} \ttf{\hat{\delta}_x}. 
\end{array}
\end{equation*}
Routine calculations show that the closed loop transfer matrices from $(\ttf{\delta_x}, \ttf{\delta_y}, \ttf{\delta_u})$ to $(\tf x, \tf u, \ttf{\hat{\delta}_x})$ are given by
\begin{equation}
\begin{bmatrix} \tf x \\ \tf u \\ \ttf{\hat{\delta}_x} \end{bmatrix} = \begin{bmatrix} \tf R & \tf{-\tilde{R}} - \tf R A & \tf R B_2 \\ \tf M & \tf{\tilde{M}} - \tf M A & I + \tf M B_2 \\ \frac{1}{z}I & I-\frac{1}{z}A & \frac{1}{z}B_2 \end{bmatrix} \begin{bmatrix} \ttf{\delta_x} \\ \ttf{\delta_y} \\ \ttf{\delta_u} \end{bmatrix}. \label{eq:sf_cl}
\end{equation}
As all nine transfer matrices in \eqref{eq:sf_cl} are stable, the implementation in Figure \ref{fig:sf} is internally stable. Furthermore, the desired system response $\{\tf R, \tf M\}$, from $\ttf{\delta_x}$ to $(\tf x, \tf u)$, is achieved.
%. This proves the statement in \eqref{thm1-2}. 
%The parameterization statements \eqref{thm1-3} and \eqref{thm1-4} follow directly by combining the results in \eqref{thm1-1} and \eqref{thm1-2}.%Therefore, for any solution to \eqref{eq:state_fb1} - \eqref{eq:state_fb2}, we can achieve the desired closed loop response by a stabilizing controller $K = M R^{-1}$. This suggests that all the three statements in this theorem are true.
\end{proof}

\revsecond{
\begin{remark}
The controller parameterization $\tf K = \tf M \tf R^{-1}$ can also be derived by rewriting \eqref{eq:state_fb} as
\begin{equation}
\begin{bmatrix} I - \frac{1}{z} A & -\frac{1}{z} B_2 \end{bmatrix} \begin{bmatrix} z \tf R \\ z \tf M \end{bmatrix} = I, \,\, 
z \tf R, z \tf M \in \RHinf. \nonumber
\end{equation}
Note that $\begin{bmatrix} I - \frac{1}{z} A & -\frac{1}{z} B_2 \end{bmatrix}$ is a left coprime factorization of the plant model.  Classical methods therefore allow for the controller $\tf K =  (z \tf M)(z \tf R)^{-1} = \tf M \tf R^{-1}$ to be obtained via the Youla parameterization. Although the controller can be implemented via the dynamic feedback gain $\tf K = \tf M \tf R^{-1}$, we show in Section \ref{sec:class} that the proposed realization in Figure \ref{fig:sf} has significant advantages. Specifically, this implementation allows us to connect constraints imposed on the system response to constraints on the internal blocks of the controller implementation. 
\end{remark}
}

\subsubsection*{Summary}
Theorem \ref{thm:sf} provides a necessary and sufficient condition for the system response $\{\tf R, \tf M\}$ to be stable and achievable, in that elements of the affine subspace defined by \eqref{eq:state_fb} parameterize all stable system responses achievable via state-feedback, as well as the internally stabilizing controllers that achieve them.  \rev{Further, Figure \ref{fig:sf} provides an internally stabilizing realization for a controller achieving the desired response.}

\subsection{Output Feedback with $D_{22} = 0$} \label{sec:of}

We now extend the arguments of the previous subsection to the output feedback setting, and begin by considering the case of a strictly proper plant
\begin{equation}
\tf P = \left[ \begin{array}{c|cc} A & B_1 & B_2 \\ \hline C_1 & D_{11} & D_{12} \\ C_2 & D_{21} & 0 \end{array} \right]. \label{eq:ofplant}
\end{equation}

%Assume that $D_{22} = 0$ first. 
Letting $\delta_x [t] = B_1 w[t]$ denote the disturbance on the state, and $\delta_y [t] = D_{21} w[t]$ denote the disturbance on the measurement, the dynamics defined by plant \eqref{eq:ofplant} can be written as
%Consider a discrete time distributed system described by
\begin{eqnarray}
x[t+1] &=& A x[t] + B_2 u[t] + \delta_x [t] \nonumber \\
y[t] &=& C_2 x[t] + \delta_y [t]. \label{eq:sys_out}
\end{eqnarray}

Analogous to the state-feedback case, we define a system response $\{\tf R, \tf M, \tf N, \tf L\}$ from perturbations $(\ttf{\delta_x}, \ttf{\delta_y})$ to state and control inputs $(\tf x,\tf u)$ via the following relation:
%with an output feedback dynamic controller $u = K y$. As in \cite{2014_Wang_Allerton}, we define the closed loop transfer matrices from perturbations $(\delta_x, \delta_y)$ to $(x,u)$ as
\begin{equation}
\begin{bmatrix} \tf x \\ \tf u \end{bmatrix} = \begin{bmatrix} \tf R & \tf N \\ \tf M & \tf L \end{bmatrix} \begin{bmatrix} \ttf{\delta_x} \\ \ttf{\delta_y} \end{bmatrix}. \label{eq:cltm_df}
\end{equation}

Substituting the output feedback control law $\tf u = \tf K \tf y$ into the z-transform of system equation \eqref{eq:sys_out}, we obtain
\begin{equation}
(zI - A - B_2 \tf K C_2) \tf x = \ttf{\delta_x} + B_2 \tf K \ttf{\delta_y}. \nonumber
\end{equation}
For a proper controller $\tf K$, the transfer matrix $(zI-A-B_2 \tf K C_2)$ is always invertible, hence we obtain the following \rev{equivalent} expressions for the system response \eqref{eq:cltm_df} in terms of an output feedback controller $\tf K$:
\begin{align}
\tf R &= (zI - A - B_2 \tf K C_2)^{-1} \nonumber\\
\tf M &= \tf K C_2 \tf R \nonumber\\
\tf N &= \tf R B_2 \tf K \nonumber\\
\tf L  &= \tf K + \tf K C_2 \tf R B_2 \tf K. \label{eq:K_relation}
\end{align}

We now present one of the main results of the paper: an algebraic characterization of the set $\{\tf R, \tf M, \tf N, \tf L\}$ of output-feedback system responses that are achievable by an internally stabilizing controller $\tf K$.

\begin{theorem}
For the output feedback system \eqref{eq:ofplant}, the following are true:
\begin{enumerate}[(a)]
  \item The affine subspace described by:
  \begin{subequations} \label{eq:output_fb}
\begin{align}
\begin{bmatrix} zI - A & -B_2 \end{bmatrix}
\begin{bmatrix} \tf R & \tf N \\ \tf M & \tf L \end{bmatrix} &= 
\begin{bmatrix} I & 0 \end{bmatrix} \label{eq:output_fb1}\\
\begin{bmatrix} \tf R & \tf N \\ \tf M & \tf L \end{bmatrix}
 \begin{bmatrix} zI - A \\ -C_2 \end{bmatrix} &= 
\begin{bmatrix} I \\ 0 \end{bmatrix} \label{eq:output_fb2} \\
\tf R, \tf M, \tf N \in \frac{1}{z} \mathcal{RH}_\infty, \quad & \tf L \in \mathcal{RH}_\infty \label{eq:output_fb3}
\end{align}
\end{subequations} 
 parameterizes all system responses \eqref{eq:K_relation} achievable by an internally stabilizing controller $\tf K$. \label{thm2-4}
  \item For any transfer matrices $\{\tf R, \tf M, \tf N, \tf L\}$ satisfying \eqref{eq:output_fb}, the controller $\tf K = \tf L - \tf M \tf R^{-1} \tf N$ achieves the desired response \eqref{eq:K_relation}.\rev{\footnote{\rev{Note that for any transfer matrices $\{\tf R, \tf M, \tf N, \tf L\}$ satisfying \eqref{eq:output_fb}, the transfer matrix $\tf R$ is always invertible because its leading spectral component $\frac{1}{z} I$ is invertible.  The same holds true for the transfer matrices defined in equation \eqref{eq:K_relation}.}}} \revsecond{Further, if the controller is implemented as in Fig. \ref{fig:of}, then it is internally stabilizing.} \label{thm2-3} 
  \end{enumerate} \label{thm:of}
\end{theorem}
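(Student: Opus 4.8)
The plan is to mirror the two-part argument used for the state-feedback Theorem \ref{thm:sf}: prove necessity of the affine constraints \eqref{eq:output_fb} by a direct state-space computation on the closed loop (as in Lemma \ref{lem:sf_nec}), and prove sufficiency together with part (b) by exhibiting an internally stabilizing realization of the controller $\tf K = \tf L - \tf M \tf R^{-1}\tf N$ built directly out of $\{\tf R,\tf M,\tf N,\tf L\}$, generalizing the structure of Figure \ref{fig:sf} (as in Lemma \ref{lem:sf_suf}).

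For necessity, I would take an internally stabilizing $\tf K$ with realization \eqref{eq:Kss} (with $D_{22}=0$) and form the closed loop driven by $(\ttf{\delta_x},\ttf{\delta_y})$ with joint internal state $(\tf x,\ttf{\xi})$, exactly as in the proof of Lemma \ref{lem:sf_nec} but now retaining the measurement disturbance $\ttf{\delta_y}$. Internal stability forces the closed-loop $A$-matrix to be Schur, so the resulting state-space expressions for $\{\tf R,\tf M,\tf N,\tf L\}$ are stable, and inspection of their feedthrough terms yields \eqref{eq:output_fb3} ($\tf R,\tf M,\tf N$ strictly proper, $\tf L$ proper). The equality constraints \eqref{eq:output_fb1}–\eqref{eq:output_fb2} then follow by substituting \eqref{eq:K_relation} and simplifying: for instance $(zI-A)\tf R - B_2\tf M = (zI-A-B_2\tf K C_2)\tf R = I$ and $(zI-A)\tf N - B_2\tf L = (zI-A-B_2\tf K C_2)\tf R B_2\tf K - B_2\tf K = 0$, with the remaining two identities obtained by the dual computation for \eqref{eq:output_fb2}.

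For sufficiency and part (b), given $\{\tf R,\tf M,\tf N,\tf L\}$ satisfying \eqref{eq:output_fb} I would first record that $\tf R$ is invertible (its leading spectral component is $\frac{1}{z}I$), that $\tf R^{-1}$ equals $zI$ plus a proper term, hence $\tf M\tf R^{-1}$ is proper and $\tf M\tf R^{-1}\tf N$ is strictly proper, so $\tf K := \tf L - \tf M\tf R^{-1}\tf N$ is a proper transfer matrix. I would then propose a realization of this $\tf K$ analogous to Figure \ref{fig:sf}: it maintains an internal reconstruction $\ttf{\hat{\delta_x}}$ of the fictitious state disturbance and an internal state trajectory driven by the measurement $\tf y$, with blocks built from $z\tf R-I$, $z\tf M$, $z\tf N$ and $\tf L$; one checks the internal loop is well posed because the relevant block is strictly proper, and that the transfer function of the realization collapses to $\tf L - \tf M\tf R^{-1}\tf N$ using \eqref{eq:output_fb1}. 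Finally, introducing external perturbations $(\ttf{\delta_x},\ttf{\delta_y},\ttf{\delta_u})$ (and perturbations on the internal links, each reducible to these three acted on by stable maps), I would compute the full table of closed-loop maps to the internal variables, the output-feedback analogue of \eqref{eq:sf_cl}, and verify every entry is stable, using \eqref{eq:output_fb3} for stability of $\tf R,\tf M,\tf N,\tf L$ and the related blocks built from them, and using both \eqref{eq:output_fb1} and \eqref{eq:output_fb2} for the algebraic cancellations that keep $\tf R^{-1}$ from appearing unprotected; reading off the $(\ttf{\delta_x},\ttf{\delta_y})\to(\tf x,\tf u)$ block recovers \eqref{eq:cltm_df} with the prescribed $\{\tf R,\tf M,\tf N,\tf L\}$, and comparing with \eqref{eq:K_relation} establishes part (b).

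The main obstacle I anticipate is the sufficiency step, and specifically two points: (i) pinning down the correct controller realization so that its \emph{internal} state map is stable, since $\tf K$ is proper yet contains the improper factor $\tf R^{-1}$, and a careless realization would be ill posed or internally unstable; and (ii) the bookkeeping of the enlarged table of closed-loop transfer matrices, where one must repeatedly invoke \eqref{eq:output_fb1}–\eqref{eq:output_fb2} to cancel the improper and unstable pieces. A secondary subtlety, absent in the state-feedback case, is that there are now two equality constraints — \eqref{eq:output_fb1} and its dual \eqref{eq:output_fb2} — both of which are genuinely needed; one could alternatively attempt to derive the output-feedback result by composing Theorem \ref{thm:sf} with its observer dual, but the cross maps $\tf N$ and $\tf L$ make the direct realization-based argument cleaner.
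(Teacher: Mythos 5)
Your overall route is the same as the paper's: necessity via the closed-loop state-space realization driven by $(\ttf{\delta_x},\ttf{\delta_y})$ (the paper's Lemma \ref{lem:of_nec}; your transfer-function verification of \eqref{eq:output_fb1}--\eqref{eq:output_fb2} from \eqref{eq:K_relation} is an equivalent substitute for its ``routine calculation''), and sufficiency via a realization of $\tf K=\tf L-\tf M\tf R^{-1}\tf N$ built from $z\tf R-I$, $z\tf M$, $z\tf N$, $\tf L$, which is exactly the structure of Figure \ref{fig:of} and the implementation \eqref{eq:ss_like} used in Lemma \ref{lem:of_suf}.

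There is, however, one genuine gap in your sufficiency plan: you assert that perturbations on the internal links are ``each reducible to $(\ttf{\delta_x},\ttf{\delta_y},\ttf{\delta_u})$ acted on by stable maps,'' and on that basis you only check the maps from these three inputs. That reduction is false in general and is precisely the trap the paper warns against. The controller's internal state $\ttf{\beta}$ sits behind the strictly proper, non-invertible block $\tf{\tilde N}=-z\tf N$, so a perturbation $\ttf{\delta_\beta}$ injected at the $\ttf{\beta}$ summing junction cannot be pushed back to a $\ttf{\delta_y}$- or $\ttf{\delta_u}$-type perturbation; it must be treated as a fourth primitive input, and internal stability requires the full $4\times 4$ table of maps from $(\ttf{\delta_x},\ttf{\delta_y},\ttf{\delta_u},\ttf{\delta_\beta})$ to $(\tf x,\tf u,\tf y,\ttf{\beta})$ (the paper's Table \ref{Table:1}). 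The simplified realizations in Figures \ref{fig:alt1} and \ref{fig:alt2} show why this is not pedantry: for them all maps from $(\ttf{\delta_x},\ttf{\delta_y},\ttf{\delta_u})$ are stable (cf.\ \eqref{eq:alt1}), yet the map from $\ttf{\delta_\beta}$ contains $(zI-A)^{-1}B_2$ and the closed loop is internally unstable whenever the plant is open-loop unstable; your three-input check would wrongly certify such a realization. The coincidence that saved the state-feedback argument --- there $\ttf{\delta_y}$ enters the same summing junction as the reconstructed disturbance $\ttf{\hat{\delta_x}}$, so it already plays the role of the internal perturbation in \eqref{eq:sf_cl} --- does not carry over to output feedback. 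Your conclusion still holds for the realization you propose, but only after adding $\ttf{\delta_\beta}$ and verifying its column of closed-loop maps (stability there again follows from \eqref{eq:output_fb3} together with the cancellations from \eqref{eq:output_fb1}--\eqref{eq:output_fb2}); as written, the justification for omitting it does not stand.
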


\subsubsection*{Necessity}  \rev{The necessity of a stable and achievable system response $\{\tf R, \tf M, \tf N, \tf L \}$ lying in the affine subspace \eqref{eq:output_fb} is shown in the following lemma.}

\begin{lemma}[Necessity of conditions \eqref{eq:output_fb}] \label{lem:of_nec}
Consider the output feedback system \eqref{eq:ofplant}. Let $\{ \tf R, \tf M, \tf N, \tf L\}$, with $\tf x = \tf R \ttf{\delta_x} + \tf N \ttf{\delta_y}$ and  $\tf u = \tf M \ttf{\delta_x} + \tf L \ttf{\delta_y}$, be the system response achieved by an internally stabilizing control law $\tf u = \tf K \tf y$. Then $\{\tf R, \tf M, \tf N, \tf L\}$ lies in the affine subspace described by \eqref{eq:output_fb}.
\end{lemma}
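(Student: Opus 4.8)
The plan is to mirror the argument used for the state-feedback case (Lemma~\ref{lem:sf_nec}), now carrying along the controller state. First I would fix an internally stabilizing control law $\tf u = \tf K \tf y$ with state-space realization \eqref{eq:Kss} (recall $D_{22}=0$) and substitute \eqref{eq:Kss} into the z-transform of \eqref{eq:sys_out}. Eliminating $\tf y$ and $\tf u$ from the controller equations gives closed-loop dynamics in the stacked state $(\tf x,\ttf{\xi})$,
\begin{equation*}
\begin{bmatrix} z\tf x \\ z\ttf{\xi}\end{bmatrix} = \begin{bmatrix} A + B_2 D_k C_2 & B_2 C_k \\ B_k C_2 & A_k\end{bmatrix}\begin{bmatrix}\tf x \\ \ttf{\xi}\end{bmatrix} + \begin{bmatrix} I & B_2 D_k \\ 0 & B_k\end{bmatrix}\begin{bmatrix}\ttf{\delta_x}\\\ttf{\delta_y}\end{bmatrix},
\end{equation*}
together with $\tf u = D_k C_2\tf x + C_k\ttf{\xi} + D_k\ttf{\delta_y}$. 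By internal stability (Lemma $5.2$ in \cite{Zhou1996robust}) the state matrix above is stable, so the system response $\{\tf R,\tf M,\tf N,\tf L\}$ --- i.e., the closed-loop maps from $(\ttf{\delta_x},\ttf{\delta_y})$ to $(\tf x,\tf u)$ given by \eqref{eq:K_relation} --- admits the state-space realization
\begin{equation*}
\begin{bmatrix}\tf R & \tf N \\ \tf M & \tf L\end{bmatrix} = \left[\begin{array}{cc|cc} A + B_2 D_k C_2 & B_2 C_k & I & B_2 D_k \\ B_k C_2 & A_k & 0 & B_k \\ \hline I & 0 & 0 & 0 \\ D_k C_2 & C_k & 0 & D_k\end{array}\right].
\end{equation*}

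From this realization, \eqref{eq:output_fb3} is immediate: all four maps inherit stability from the stable closed-loop state matrix; $\tf R$, $\tf M$, $\tf N$ have zero feedthrough, hence lie in $\frac{1}{z}\mathcal{RH}_\infty$, while $\tf L$ has feedthrough $D_k$ and is merely proper. It then remains to verify the two affine identities. For \eqref{eq:output_fb1} I would substitute $\tf x = \tf R\ttf{\delta_x} + \tf N\ttf{\delta_y}$ and $\tf u = \tf M\ttf{\delta_x} + \tf L\ttf{\delta_y}$ into the plant state relation $(zI-A)\tf x = \ttf{\delta_x} + B_2\tf u$ and match the coefficients of $\ttf{\delta_x}$ and $\ttf{\delta_y}$, yielding $(zI-A)\tf R - B_2\tf M = I$ and $(zI-A)\tf N - B_2\tf L = 0$. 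For the dual identity \eqref{eq:output_fb2} I would instead work from \eqref{eq:K_relation}: since $\tf R^{-1} = zI - A - B_2\tf K C_2$ and $\tf N C_2 = \tf R B_2\tf K C_2$, one gets $\tf R(zI-A) - \tf N C_2 = \tf R\bigl[(zI-A) - B_2\tf K C_2\bigr] = \tf R\tf R^{-1} = I$, and then $\tf M(zI-A) - \tf L C_2 = \tf K C_2\tf R\bigl[(zI-A)-B_2\tf K C_2\bigr] - \tf K C_2 = \tf K C_2 - \tf K C_2 = 0$, using $\tf M = \tf K C_2\tf R$ and $\tf L = \tf K + \tf K C_2\tf R B_2\tf K$. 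Equivalently, both identities can be checked directly from the state-space realization above, exactly as the equality constraint was verified in Lemma~\ref{lem:sf_nec}.

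All of these computations are routine, so the real substance is just the bookkeeping; the only step that is slightly less transparent than its state-feedback counterpart is the dual constraint \eqref{eq:output_fb2}, which --- unlike \eqref{eq:output_fb1} --- is not a direct rewriting of a closed-loop signal equation but requires eliminating $\tf K$ through the resolvent identity for $\tf R$ (or, equivalently, a short calculation with the $(A_k,B_k,C_k,D_k)$ blocks). Two secondary points each deserve a line: internal stability of the whole interconnection automatically gives stability of the $(\ttf{\delta_x},\ttf{\delta_y})\to(\tf x,\tf u)$ sub-block, since a sub-block of a stable transfer matrix is stable, so no separate stability argument for the system response is needed; and well-posedness is automatic here because $D_{22}=0$, so the realization above is valid for every $(A_k,B_k,C_k,D_k)$.
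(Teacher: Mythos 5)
Your proposal is correct and follows essentially the same route as the paper: derive the closed-loop state-space realization of $\{\tf R,\tf M,\tf N,\tf L\}$ from \eqref{eq:Kss} and \eqref{eq:sys_out}, invoke internal stability (Lemma $5.2$ of \cite{Zhou1996robust}) plus strict properness of the first three blocks to get \eqref{eq:output_fb3}, and verify the affine constraints \eqref{eq:output_fb1}--\eqref{eq:output_fb2} by routine calculation. Your explicit algebra for the two identities (coefficient matching for \eqref{eq:output_fb1} and elimination via \eqref{eq:K_relation} for \eqref{eq:output_fb2}) simply spells out what the paper leaves as a ``routine calculation,'' and is accurate.
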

\begin{proof}
Consider an internally stabilizing controller $\tf K$ with state space realization \eqref{eq:Kss}.
%\begin{eqnarray}
%z \ttf{\xi} &=& A_K \ttf{\xi} + B_K \tf y \nonumber\\
%\tf u &=& C_K \ttf{\xi} + D_K \tf y. \label{eq:sss}
%\end{eqnarray}
Combining \eqref{eq:Kss} with the system equation \eqref{eq:sys_out}, we obtain the closed loop dynamics
\begin{equation}
\begin{bmatrix} z \tf x \\ z \ttf{\xi} \end{bmatrix} = \begin{bmatrix} A + B_2 D_k C_2 & B_2 C_k \\ B_k C_2 & A_k \end{bmatrix} \begin{bmatrix} \tf x \\ \ttf{\xi} \end{bmatrix} + \begin{bmatrix} I & B_2 D_k \\ 0 & B_k \end{bmatrix} \begin{bmatrix} \ttf{\delta_x} \\ \ttf{\delta_y} \end{bmatrix}. \nonumber
\end{equation}
From the assumption that $\tf K$ is internally stabilizing, we know that the state matrix of the above equation is a stable matrix (Lemma $5.2$ in \cite{Zhou1996robust}). The system response achieved by $\tf u = \tf K \tf y$ is given by 
\begin{equation}
\begin{bmatrix} \tf R & \tf N \\ \tf M & \tf L \end{bmatrix} = \left[ \begin{array}{cc | cc} A + B_2 D_k C_2 & B_2 C_k & I & B_2 D_k \\ B_k C_2 & A_k & 0 & B_k \\ \hline I & 0 & 0 & 0 \\ D_k C_2 & C_k & 0 & D_k \end{array} \right], \label{eq:ss_relation}
\end{equation}
which satisfies \eqref{eq:output_fb3}.
%From the assumption that $\tf K$ is internally stabilizing, Lemma \ref{lemma:acl} suggests that all the closed loop transfer matrices in \eqref{eq:ss_relation} are stable, which then satisfy \eqref{eq:output_fb3}.
%As the controller is internally stabilizing, the closed loop transfer matrices in \eqref{eq:ss_relation} satisfy \eqref{eq:output_fb3}. 
In addition, it can be shown by routine calculation that \eqref{eq:ss_relation} satisfies both \eqref{eq:output_fb1} and \eqref{eq:output_fb2} for arbitrary $(A_k, B_k, C_k, D_k)$. This completes the proof.
\end{proof}

%Further, there is also a natural connection between the feasibility of conditions \eqref{eq:output_fb} and the stabilizability and detectability of system \eqref{eq:ofplant}, as stated in the following lemma.

\rev{
\begin{remark}
We show in Lemma \ref{lem:stab_det} of Appendix \ref{sec:proof} that the feasibility of \eqref{eq:output_fb} is equivalent to the stabilizability and detectability of the triple $(A, B_2, C_2)$. In this sense, the conditions described in \eqref{eq:output_fb} provide an alternative definition of stabilizability and detectability.
\end{remark}
}

%\begin{lemma}\label{lem:stab_det}
%The triple $(A, B_2, C_2)$ is stabilizable and detectable if and only if the affine subspace described by \eqref{eq:output_fb} is non-empty.
%\end{lemma}

%Thus Lemma \ref{lem:stab_det} provides an alternative characterization of stabilizability and detectability via the conditions described in \eqref{eq:output_fb} -- in particular, stable achievable system responses \eqref{eq:K_relation} exist only if the output feedback system is stabilizable and detectable.

\subsubsection*{Sufficiency}

\begin{figure}[ht!]
      \centering
      \includegraphics[width=0.48\textwidth]{./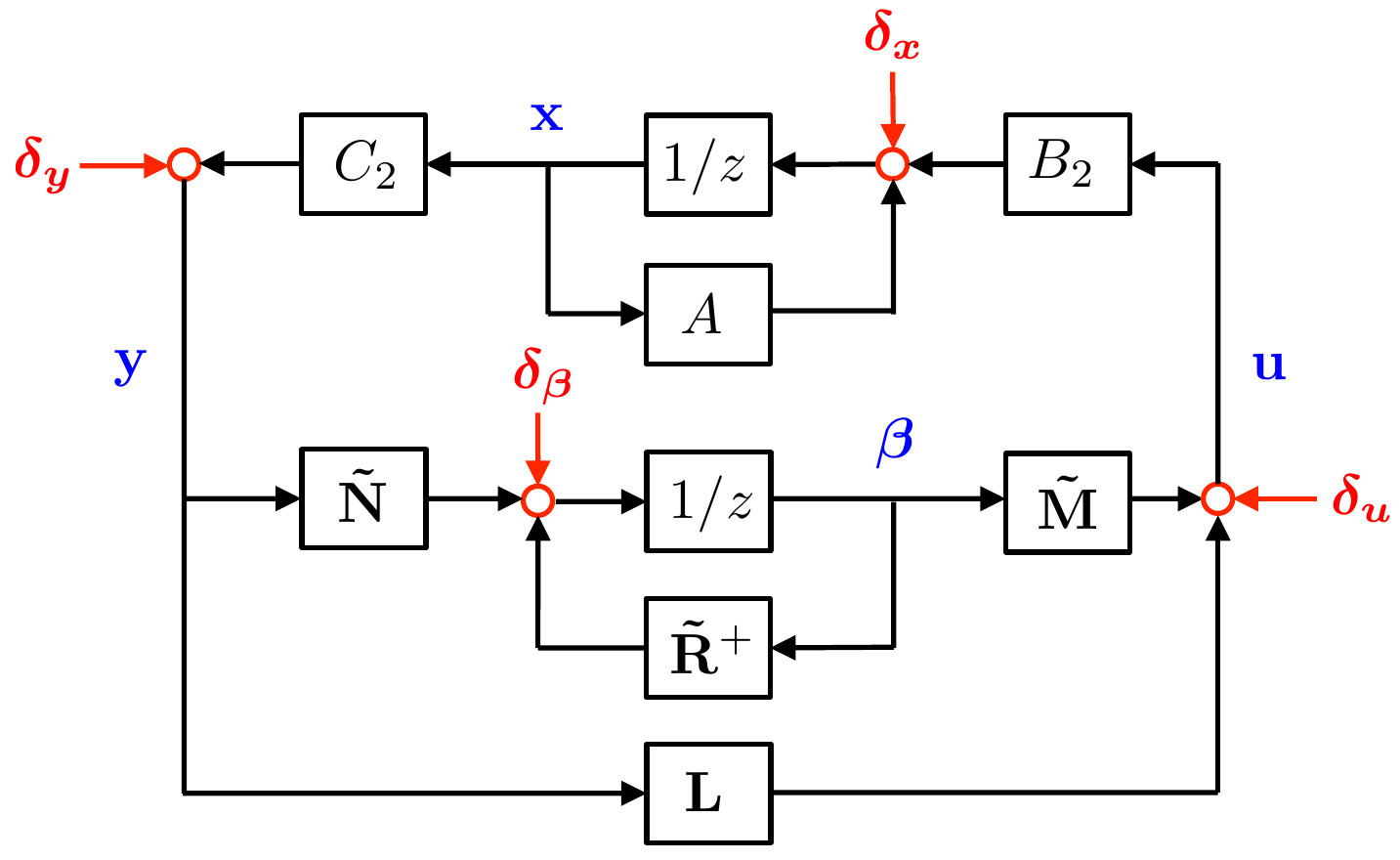}
      \caption{The proposed output feedback controller structure, with $\tf{\tilde{R}^+} = z \tf{\tilde{R}} = z(I - z \tf R)$, $\tf{\tilde{M}} = z \tf M$, and $\tf{\tilde{N}} = -z \tf N$.}
      \label{fig:of}
\end{figure}

Here we show that for any system response $\{\tf R, \tf M, \tf N, \tf L\}$ lying in the affine subspace \eqref{eq:output_fb}, there exists an internally stabilizing controller $\tf K$ that leads to the desired system response \eqref{eq:K_relation}. From the relations in \eqref{eq:K_relation}, we notice the identity $\tf K = \tf L - \tf K C_2 \tf R B_2 \tf K = \tf L - \tf M \tf R^{-1} \tf N$. This relation leads to the controller structure given in Figure \ref{fig:of}, with $\tf{\tilde{R}^+} = z\tf{\tilde{R}} = z (I - z \tf R)$, $\tf{\tilde{M}} = z \tf M$, and $\tf{\tilde{N}} = -z \tf N$. As was the case for the state feedback setting, it can be verified that $\tf{\tilde{R}^{+}}, \tf{\tilde{M}}$, and $\tf{\tilde{N}}$ are all in $\RHinf$. Therefore, the structure given in Figure \ref{fig:of} is well defined. In addition, all of the blocks in Figure \ref{fig:of} are stable filters -- thus, as long as the origin $(x,\beta) = (0,0)$ is asymptotically stable, all signals internal to the block diagram will decay to zero. To check the internal stability of the structure, we introduce external perturbations $\ttf{\delta_x}, \ttf{\delta_y}$, $\ttf{\delta_u}$, and $\ttf{\delta_\beta}$ to the system. The perturbations appearing on other links of the block diagram can all be expressed as a combination of the perturbations $(\ttf{\delta_x}, \ttf{\delta_y}, \ttf{\delta_u}, \ttf{\delta_\beta})$ being acted upon by some stable transfer matrices, and so it suffices to check the input-output stability of the closed loop transfer matrices from perturbations $(\ttf{\delta_x}, \ttf{\delta_y}, \ttf{\delta_u}, \ttf{\delta_\beta})$ to controller signals $(\tf x, \tf u, \tf y, \ttf{\beta})$ to determine the internal stability of the structure \cite{Zhou1996robust}. \rev{With this in mind, we can prove the sufficiency of Theorem \ref{thm:of} via the following lemma.}

%With this discussion in mind, we formally define internal stability for the controller structure of Figure \ref{fig:of}, and state and prove the sufficiency of the conditions stated in Theorem \ref{thm:of}.
%
%\begin{definition}
%The interconnection in Figure \ref{fig:of} is said to be internally stable if the origin $(x,\beta) = (0,0)$ is asymptotically stable, i.e., $x[t], \beta[t] \to 0$ for $t \to \infty$ from any initial condition when the perturbations $\delta_x, \delta_y, \delta_u, \delta_\beta$ in Figure \ref{fig:of} are $0$.
%\end{definition}

\begin{lemma}[Sufficiency of conditions \eqref{eq:output_fb}]\label{lem:of_suf}
Consider the output feedback system \eqref{eq:ofplant}.
For any system response $\{\tf R, \tf M, \tf N, \tf L\}$ lying in the affine subspace defined by \eqref{eq:output_fb}, the controller $\tf K = \tf L - \tf M \tf R^{-1} \tf N$ (with structure shown in Figure \ref{fig:of}) internally stabilizes the plant. In addition, the desired system response, as specified by $\tf x = \tf R \ttf{\delta_x} + \tf N \ttf{\delta_y}$ and $\tf u = \tf M \ttf{\delta_x} + \tf L \ttf{\delta_y}$, is achieved. %\label{thm2-2}
%  \item A parameterization of all internally stabilizing controller is given by $\tf K = \tf L - \tf M \tf R^{-1} \tf N$ for all $(\tf R, \tf M, \tf N, \tf L)$ satisfying \eqref{eq:output_fb}. \label{thm2-3}
%  \item The solution of \eqref{eq:output_fb} parameterizes the closed loop transfer matrices from $(\ttf{\delta_x}, \ttf{\delta_y})$ to $(\tf x, \tf u)$ achievable by an internally stabilizing controller. \label{thm2-4}
%  \end{enumerate} \label{thm:of}
\end{lemma}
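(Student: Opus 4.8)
The plan is to mirror the state feedback argument (Lemma \ref{lem:sf_suf}) and verify directly that the controller implementation \eqref{eq:ss_like}, placed in feedback with the plant \eqref{eq:sys_out}, (i) is internally stable and (ii) realizes the prescribed closed-loop maps. First I would read off the closed-loop equations of the interconnection from Figure \ref{fig:of} with the external perturbations $(\ttf{\delta_x}, \ttf{\delta_y}, \ttf{\delta_u}, \ttf{\delta_\beta})$ injected:
\begin{equation*}
\begin{array}{rcl}
z\tf x &=& A\tf x + B_2 \tf u + \ttf{\delta_x}\\
\tf y &=& C_2 \tf x + \ttf{\delta_y}\\
z\ttf{\beta} &=& \tf{\tilde R^+}\ttf{\beta} + \tf{\tilde N}\tf y + \ttf{\delta_\beta}\\
\tf u &=& \tf{\tilde M}\ttf{\beta} + \tf L\tf y + \ttf{\delta_u}.
\end{array}
\end{equation*}
Using $\tf{\tilde R^+} = z(I - z\tf R)$, $\tf{\tilde M} = z\tf M$, $\tf{\tilde N} = -z\tf N$, the $\ttf{\beta}$-equation becomes $z^2\tf R\,\ttf{\beta} = -z\tf N\tf y + \ttf{\delta_\beta}$, i.e. $\ttf{\beta} = z^{-2}\tf R^{-1}(-z\tf N\tf y + \ttf{\delta_\beta})$ — here I invoke the footnoted fact that $\tf R$ is invertible since its leading spectral component is $\tfrac1z I$. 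Substituting everything and solving for $(\tf x,\tf u,\tf y,\ttf{\beta})$ in terms of $(\ttf{\delta_x},\ttf{\delta_y},\ttf{\delta_u},\ttf{\delta_\beta})$ is the routine computational core; the algebra is kept tractable by systematically using the two affine identities \eqref{eq:output_fb1}--\eqref{eq:output_fb2} to collapse terms like $\tf R(zI-A) = I + \tf N C_2$ and $(zI-A)\tf R = I + B_2\tf M$ (and their $\tf N,\tf L$ analogues) back into $\tf R,\tf M,\tf N,\tf L$ themselves.

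The key claim I expect to extract from this computation is that, when all external perturbations vanish except through the plant, the map from $(\ttf{\delta_x},\ttf{\delta_y})$ to $(\tf x,\tf u)$ is exactly $\left[\begin{smallmatrix}\tf R & \tf N\\ \tf M & \tf L\end{smallmatrix}\right]$, which gives achievability; and that every one of the (now sixteen) closed-loop transfer matrices from $(\ttf{\delta_x},\ttf{\delta_y},\ttf{\delta_u},\ttf{\delta_\beta})$ to $(\tf x,\tf u,\tf y,\ttf{\beta})$ is an element of $\RHinf$, which gives internal stability by the bounded-input bounded-output argument already set up before the lemma statement (as in \eqref{eq:sf_cl}). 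Each such entry should turn out to be a polynomial-in-$\tf R,\tf M,\tf N,\tf L$ combination of the plant data $A,B_2,C_2$ (viewed as FIR elements of $\mathcal F_0$), hence stable because $\tf R,\tf M,\tf N \in \tfrac1z\RHinf$ and $\tf L\in\RHinf$ by \eqref{eq:output_fb3}. I would also record, as in the state feedback remark, that eliminating $\ttf{\beta}$ from \eqref{eq:ss_like} yields $\tf u = (\tf L - \tf M\tf R^{-1}\tf N)\tf y$, confirming the controller is $\tf K = \tf L - \tf M\tf R^{-1}\tf N$ and that it is proper even though $\tf R^{-1}$ is not.

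The main obstacle is the bookkeeping: unlike the state feedback case there are two coupled affine constraints and a genuinely dynamic controller state $\ttf{\beta}$, so the elimination must be done carefully to avoid spurious non-proper or unstable terms, and one must repeatedly recognize when a product such as $\tf R B_2 \tf{\tilde M}$ or $\tf{\tilde N}C_2\tf R$ simplifies via \eqref{eq:output_fb1}--\eqref{eq:output_fb2}. A secondary subtlety is well-posedness of the internal loop around $\ttf{\beta}$: I should note (as the text already does) that $z\tf{\tilde R^+} = z^2(I - z\tf R)$ and $\tf{\tilde M},\tf{\tilde N}\in\RHinf$, so the feedback $z\ttf{\beta} = \tf{\tilde R^+}\ttf{\beta} + \cdots$ is solvable with a strictly-proper resolvent, which is what licenses treating the diagram's signals as well-defined before the stability check. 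Once the sixteen-block transfer matrix is written down and each block is seen to lie in $\RHinf$, internal stability and achievability follow immediately, completing the proof.
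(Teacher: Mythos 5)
Your proposal follows the same route as the paper's proof: inject the perturbations $(\ttf{\delta_x}, \ttf{\delta_y}, \ttf{\delta_u}, \ttf{\delta_\beta})$ into the implementation \eqref{eq:ss_like}, solve for the sixteen closed-loop maps to $(\tf x, \tf u, \tf y, \ttf{\beta})$ using the identities from \eqref{eq:output_fb1}--\eqref{eq:output_fb2} (exactly the Table~\ref{Table:1} computation), and conclude stability from \eqref{eq:output_fb3} and achievability from the $(\ttf{\delta_x},\ttf{\delta_y})\to(\tf x,\tf u)$ block. The algebraic simplifications you cite (e.g. $\tf R(zI-A) = I + \tf N C_2$, $(zI-A)\tf R = I + B_2\tf M$) are correct, so this is the paper's argument with the bookkeeping made slightly more explicit.
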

%\begin{proof}
%See Appendix \ref{sec:proof}.
%\end{proof}
\begin{proof}
For any system response $\{\tf R, \tf M, \tf N, \tf L\}$ lying in the affine subspace defined by \eqref{eq:output_fb}, we construct a controller using the structure given in Figure \ref{fig:of}. 
We now check the stability of the closed loop transfer matrices from the perturbations $(\ttf{\delta_x}, \ttf{\delta_y}, \ttf{\delta_u}, \ttf{\delta_\beta})$ to the internal variables $(\tf x, \tf u, \tf y, \ttf{\beta})$. We have the following equations from Figure \ref{fig:of}:
\begin{eqnarray}
z \tf x &=& A \tf x + B_2 \tf u + \ttf{\delta_x} \nonumber\\
\tf y &=& C_2 \tf x + \ttf{\delta_y} \nonumber\\
z \ttf{\beta} &=& \tf{\tilde{R}^+} \ttf{\beta} + \tf{\tilde{N}} \tf y + \ttf{\delta_\beta} \nonumber\\
\tf u &=& \tf{\tilde{M}} \ttf{\beta} + \tf L \tf y + \ttf{\delta_u}. \nonumber
\end{eqnarray}
Combining these equations with the relations in \eqref{eq:output_fb1} - \eqref{eq:output_fb2}, we summarize the closed loop transfer matrices from $(\ttf{\delta_x}, \ttf{\delta_y}, \ttf{\delta_u}, \ttf{\delta_\beta})$ to $(\tf x, \tf u, \tf y, \ttf{\beta})$ in Table \ref{Table:1}.

\begin{table}[t!]
 \caption{Closed Loop Maps from Perturbations to Internal Variables}
 \label{Table:1}
\begin{center}
\renewcommand{\arraystretch}{2}
    \begin{tabular}{| c | c | c | c | c | c |}
    \hline 
    & $\ttf{\delta_x}$ & $\ttf{\delta_y}$ & $\ttf{\delta_u}$ & $\ttf{\delta_\beta}$ \\ \hline
    $\tf x$ & $\tf R$ & $\tf N$ & $\tf R B_2$ & $\frac{1}{z} \tf N C_2$ \\ \hline
    $\tf u$ & $\tf M$ & $\tf L$ & $I + \tf M B_2$ & $\frac{1}{z} \tf L C_2$ \\ \hline
    $\tf y$ & $C_2 \tf R$ & $I + C_2 \tf N$ & $C_2 \tf R B_2$ & $\frac{1}{z} C_2 \tf N C_2$ \\ \hline
    $\ttf{\beta}$ & $-\frac{1}{z} B_2 \tf M$ & $-\frac{1}{z} B_2 \tf L$ & $-\frac{1}{z} B_2 \tf M B_2$ & $\frac{1}{z} I - \frac{1}{z^2} (A + B_2 \tf L C_2)$ \\ \hline
    \end{tabular}
\end{center}
\end{table}

Equation \eqref{eq:output_fb3} implies that all sixteen transfer matrices in Table \ref{Table:1} are stable, so the implementation in Figure \ref{fig:of} is internally stable. Furthermore, the desired system response from $(\ttf{\delta_x}, \ttf{\delta_y})$ to $(\tf x,\tf u)$ is achieved.
\end{proof}

The controller implementation of Figure \ref{fig:of} is governed by the following equations:
\begin{eqnarray}
z \ttf{\beta} &=& \tf{\tilde{R}^{+}} \ttf{\beta} + \tf{\tilde{N}} \tf y \nonumber\\
\tf u &=& \tf{\tilde{M}} \ttf{\beta} + \tf L \tf y, \label{eq:ss_like}
\end{eqnarray}
which can be informally interpreted as an extension of the state-space realization \eqref{eq:Kss} of a controller $\tf K$. In particular, the realization equations \eqref{eq:ss_like} can be viewed as a state-space like implementation where the constant matrices $A_K, B_K, C_K, D_K$ of the state-space realization \eqref{eq:Kss} are replaced with stable proper transfer matrices $\tf{\tilde{R}^{+}}, \tf{\tilde{M}}, \tf{\tilde{N}}, \tf L$.  The benefit of this implementation is that arbitrary convex constraints imposed on the transfer matrices $\tf{\tilde{R}^{+}}, \tf{\tilde{M}}, \tf{\tilde{N}}, \tf L$ carry over directly to the controller implementation.  We show in Section \ref{sec:class} that this allows for a class of structural (locality) constraints to be imposed on the system response (and hence the controller) that are crucial for extending controller synthesis methods to large-scale systems. \rev{In contrast, we recall that imposing general convex constraints on the controller $\tf K$ or directly on its state-space realization $A_K, B_K, C_K, D_K$ do not lead to convex optimal control problems.}

\revsecond{\begin{remark}
The controller implementation \eqref{eq:ss_like} admits the following equivalent representation
\begin{equation}
\begin{bmatrix} \tf R & \tf N \\ \tf M & \tf L \end{bmatrix} \begin{bmatrix}z \ttf{\beta} \\ \tf y \end{bmatrix} = \begin{bmatrix} 0 \\ \tf u\end{bmatrix}, \label{eq:ros_rep}
\end{equation}
allowing for an interesting interpretation of the controller $\tf K = \tf L - \tf M \tf R^{-1} \tf N$ in terms of Rosenbrock system matrix representations \cite{Rosenbrock}. In particular, the system response \eqref{eq:cltm_df} specifies a Rosenbrock system matrix representation of the controller that achieves it.
\end{remark}}

\subsubsection*{Summary}
Theorem \ref{thm:of} provides a necessary and sufficient condition for the system response $\{\tf R, \tf M, \tf N, \tf L\}$ to be stable and achievable, in that elements of the affine subspace defined by \eqref{eq:output_fb} parameterize all stable achievable system responses, as well as all internally stabilizing controllers that achieve them.  \rev{Further, Figure \ref{fig:of} provides an internally stabilizing realization for a controller achieving the desired response.}

\rev{\subsection{Specialized Implementations for Open-loop Stable Systems}}
\rev{In this subsection, we propose two specializations of the controller implementation in Figure \ref{fig:of} for open loop stable systems.} From Table \ref{Table:1}, if we set $\ttf{\delta_u}$ and $\ttf{\delta_\beta}$ to $0$, it follows that $\ttf{\beta} = -\frac{1}{z}B_2 \tf u$. 
%We conclude this subsection by showing that care must be taken when dealing with open loop unstable systems, and in particular, we demonstrate the necessity of considering perturbations on the internal controller state $\ttf{\beta}$ and control input $\tf u$ for unstable plants.  Such perturbations can arise, for instance, from using floating point arithmetic within the controller, or from quantization at the actuators. 
%In particular, if we set $\ttf{\delta_u}$ and $\ttf{\delta_\beta}$ to $0$, it follows that $\ttf{\beta} = -\frac{1}{z}B_2 \tf u$ (c.f., Table \ref{Table:1}). 
This leads to a simpler controller implementation given by $\tf u = \tf L \tf y - \tf M B_2 \tf u$, with the corresponding controller structure shown in Figure \ref{fig:alt1}. 
This implementation can also be obtained from the identity $ \tf K = (I + \tf M B_2)^{-1} \tf L$, which follows from the relations in \eqref{eq:K_relation}. 
Unfortunately, as shown below, this implementation is internally stable only when the open loop plant is stable.  
%From the relations in \eqref{eq:K_relation}, we can obtain the identity $ \tf K = (I + \tf M B_2)^{-1} \tf L$, which leads to the implementation $\tf u = \tf L \tf y - \tf M B_2 \tf u$.

For the controller implementation and structure shown in Figure \ref{fig:alt1}, the closed loop transfer matrices from perturbations to the internal variables are given by
\begin{equation}
\begin{bmatrix} \tf x \\ \tf u \end{bmatrix} = \begin{bmatrix} \tf R & \tf N & \tf R B_2 & (zI-A)^{-1} B_2 \\ \tf M & \tf L & I + \tf M B_2 & I \end{bmatrix} \begin{bmatrix} \ttf{\delta_x} \\ \ttf{\delta_y} \\ \ttf{\delta_u} \\ \ttf{\delta_\beta} \end{bmatrix}. \label{eq:alt1}
\end{equation}
When $A$ defines a stable system, the implementation in Figure \ref{fig:alt1} is internally stable. However, when the open loop plant is unstable (and the realization $(A,B_2)$ is stabilizable), the transfer matrix $(zI-A)^{-1} B_2$ is unstable. From \eqref{eq:alt1}, the effect of the perturbation $\ttf{\delta_\beta}$ can lead to instability of the closed loop system. This structure thus shows the necessity of introducing and analyzing the effects of perturbations $\ttf{\delta_\beta}$ on the controller internal state.

Alternatively, if we start with the identity $\tf K = \tf L (I + C_2 \tf N)^{-1}$, which also follows from \eqref{eq:K_relation}, we obtain the controller structure shown in Figure \ref{fig:alt2}. The closed loop map from perturbations to internal signals is then given by
\begin{equation}
\begin{bmatrix} \tf x \\ \tf u \\ \ttf{\beta} \end{bmatrix} = \begin{bmatrix} \tf R & \tf N & \tf R B_2 \\ \tf M & \tf L & I + \tf M B_2 \\ C_2 (zI-A)^{-1} & I & C_2 (zI-A)^{-1} B_2 \end{bmatrix} \begin{bmatrix} \ttf{\delta_x} \\ \ttf{\delta_y} \\ \ttf{\delta_u} \end{bmatrix}. \nonumber
\end{equation}
As can be seen, the controller implementation is once again internally stable only when the open loop plant is stable (if the realization $(A,C_2)$ is detectable). This structure thus shows the necessity of introducing and analyzing the effects of perturbations on the controller internal state $\ttf{\beta}$.

%Of course, when the open loop system is stable, the controller structures illustrated below may be appealing as they are simpler and easier to implement. \rev{In fact, we can show that the controller structure in Figure \ref{fig:alt1} is equivalent to the internal model control (IMC) structure \cite{IMC1, IMC2} with Youla parameterization. Specifically, for open loop stable systems, the Youla parameter is given by $\tf Q = \tf K (I - \tf{P_{22}} \tf K)^{-1}$. As will be shown in Lemma \ref{lem:qkl} in Section \ref{sec:Youla_convex}, the Youla parameter $\tf Q$ is equal to the system response $\tf L$ for open loop stable systems. We then have}
%\rev{
%\begin{subequations} \label{eq:imc}
%\begin{align}
%\tf u &= \tf L \tf y - \tf M B_2 \tf u \label{eq:imc0}\\
%&= \tf Q \tf y - \tf L  C_2 (zI-A)^{-1} B_2 \tf u \label{eq:imc1} \\
%&= \tf Q \tf y - \tf Q \tf{P_{22}} \tf u \label{eq:imc2} \\
%& = \tf Q (\tf y - \tf{P_{22}} \tf u), \label{eq:imc3}
%\end{align}
%\end{subequations} 
%where \eqref{eq:imc1} is obtained by substituting $\tf M = \tf L  C_2 (zI-A)^{-1}$ from \eqref{eq:output_fb2} into \eqref{eq:imc0}. Equation \eqref{eq:imc3} is equivalent to the IMC structure. In this sense, we can say that IMC structure is a specialization of the system level control structure in Figure \ref{fig:of} on open loop stable systems.
%}

Of course, when the open loop system is stable, the controller structures illustrated below may be appealing as they are simpler and easier to implement. \rev{In fact, we can show that the controller structure in Figure \ref{fig:alt1} is an alternative realization of the internal model control principle (IMC) \cite{IMC1, IMC2} as applied to the Youla parameterization. Specifically, for open loop stable systems, the Youla parameter is given by $\tf Q = \tf K (I - \tf{P_{22}} \tf K)^{-1}$. As we show in Lemma \ref{lem:qkl} of Section \ref{sec:Youla_convex}, the Youla parameter $\tf Q$ is equal to the system response $\tf L$ for open loop stable systems. We then have}
\rev{
\begin{subequations} \label{eq:imc}
\begin{align}
\tf u &= \tf L \tf y - \tf M B_2 \tf u \label{eq:imc0}\\
&= \tf Q \tf y - \tf L  C_2 (zI-A)^{-1} B_2 \tf u \label{eq:imc1} \\
&= \tf Q \tf y - \tf Q \tf{P_{22}} \tf u \label{eq:imc2} \\
& = \tf Q (\tf y - \tf{P_{22}} \tf u), \label{eq:imc3}
\end{align}
\end{subequations} 
where \eqref{eq:imc1} is obtained by substituting $\tf M = \tf L  C_2 (zI-A)^{-1}$ from \eqref{eq:output_fb2} into \eqref{eq:imc0}. Equation \eqref{eq:imc3} is exactly IMC. Thus, we see that IMC is equivalent to our proposed parameterization (and the simplified representation shown in Figure \ref{fig:alt1}) for open loop stable systems. } %However, comparing the standard IMC implementation in Figure \ref{fig:alt0} to our alternative in Figure \ref{fig:alt1} we see that the SLA allows control over the structure of all internal blocks (via $\tf M$ and $\tf L$), whereas standard IMC requires the use of the open loop plant $\tf P_{22}$, which may be dense.}

\begin{figure}[ht!]
      \centering
      \subfigure[\rev{Internal Model Control}]{%
      \includegraphics[width=0.2\textwidth]{./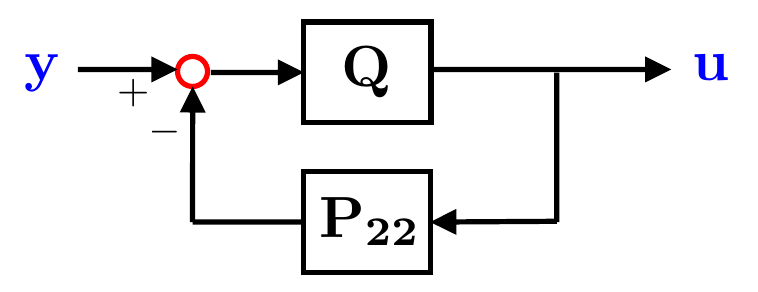}
      \label{fig:alt0}}
      
      \subfigure[Structure 1]{%
      \includegraphics[width=0.25\textwidth]{./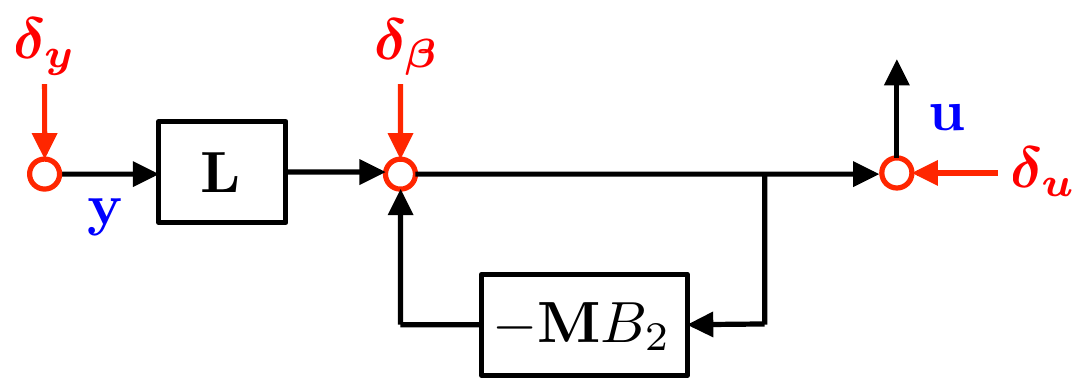}
      \label{fig:alt1}}
      
      \subfigure[Structure 2]{%
      \includegraphics[width=0.25\textwidth]{./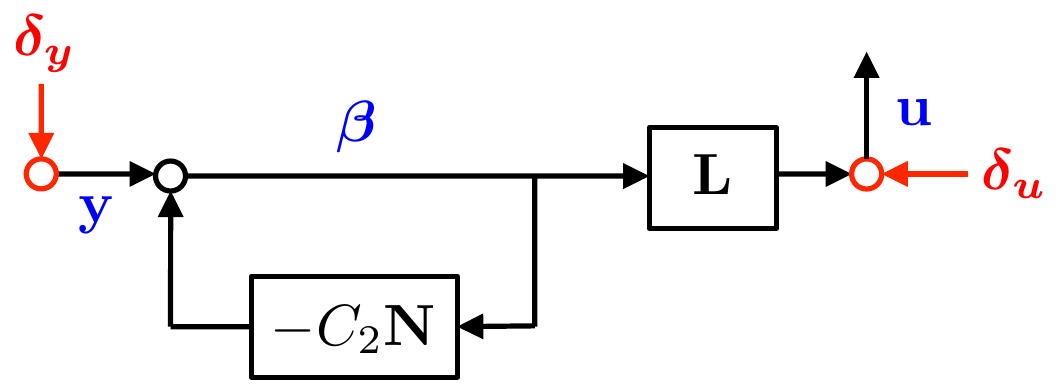}
      \label{fig:alt2}}
      
      \caption{Alternative controller structures for stable systems.}
\end{figure}

\subsection{Output Feedback with $D_{22} \not = 0$} \label{sec:non-sp}
Finally, for a general proper plant model \eqref{eq:dynamics} with $D_{22} \not = 0$, we define a new measurement $\bar{y}[t] = y[t] - D_{22} u[t]$. This leads to the controller structure shown in Figure \ref{fig:pof}. In this case, the closed loop transfer matrices from $\ttf{\delta_u}$ to the internal variables become
\begin{equation}
\begin{bmatrix} \tf x \\ \tf u \\ \tf y \\ \ttf{\beta} \end{bmatrix} = \begin{bmatrix} \tf R B_2 + \tf N D_{22} \\ I + \tf M B_2 + \tf L D_{22} \\ C_2 \tf R B_2 + D_{22} + C_2 \tf N D_{22} \\ -\frac{1}{z}B_2 (\tf M B_2 + \tf L D_{22}) \end{bmatrix} \ttf{\delta_u}. \nonumber
\end{equation}
The remaining entries of Table \ref{Table:1} remain the same. Therefore, the controller structure shown in Figure \ref{fig:pof} internally stabilizes the plant.
\begin{figure}[ht!]
      \centering
      \includegraphics[width=0.4\textwidth]{./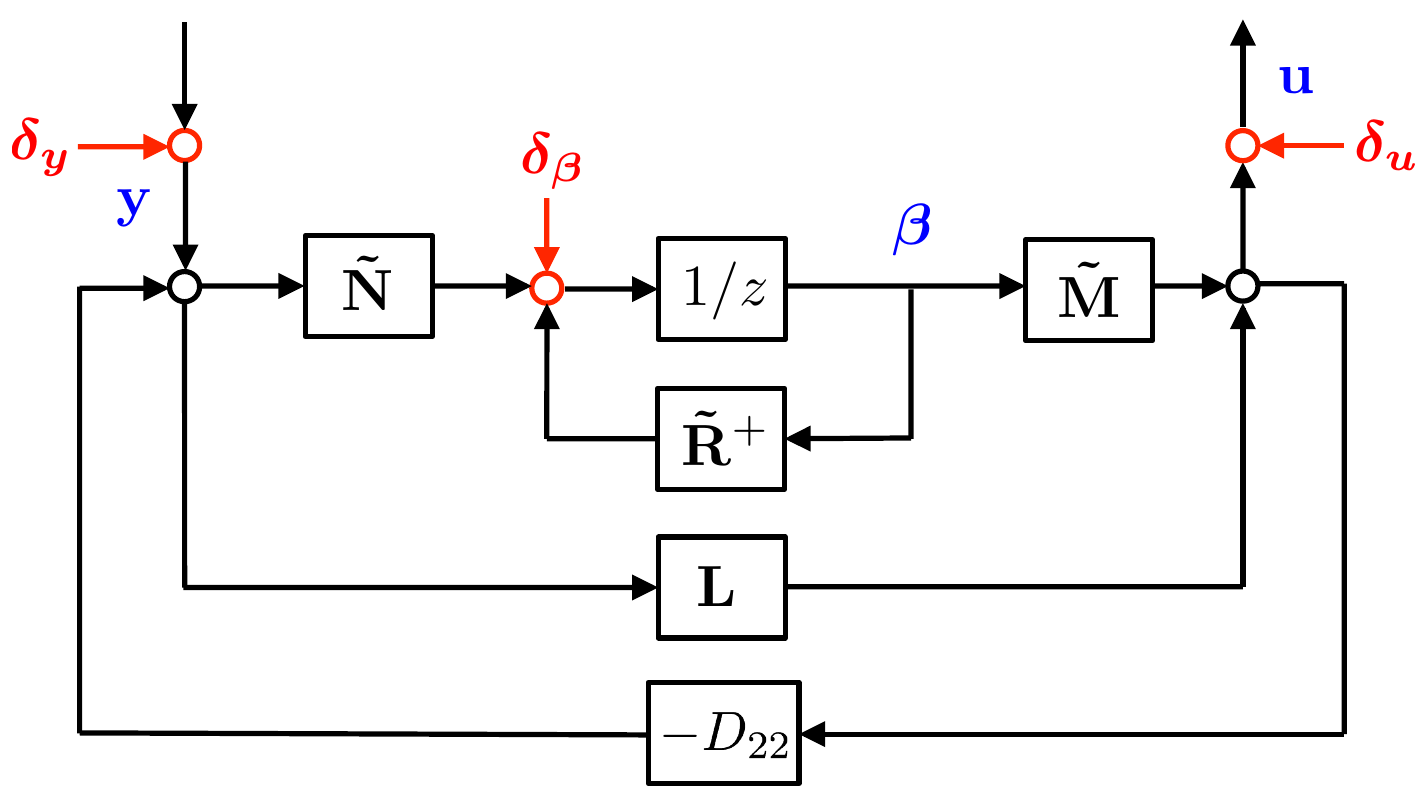}
      \caption{The proposed output feedback controller structure for $D_{22} \not = 0$.}
      \label{fig:pof}
\end{figure}

\subsection{\revsecond{System Level and Youla Parameterizations}}
\revsecond{A key difference between the SL and Youla parameterizations is the manner in which they characterize the achievable closed loop responses of a system.  The Youla parameterization provides an image space representation of the achievable system responses, parameterized explicitly by the free Youla parameter.  This parameterization lends itself naturally to efficient computation via the standard and theoretically supported approach \cite{Boyd_closed} of restricting the Youla parameter and objective function to be FIR.  However, despite this ease of computation, as alluded to earlier and discussed in detail in Section \ref{sec:sparsity}, imposing sparsity constraints on the controller via the Youla parameter is in general intractable.}

\revsecond{In contrast, the proposed SL parameterization specifies a kernel space representation of achievable system responses, parameterized implicitly by the affine space \eqref{eq:output_fb1} - \eqref{eq:output_fb2}.  While our discussion highlights the benefits and flexibility of the SL approach, there is the important caveat that the affine constraints \eqref{eq:output_fb1} - \eqref{eq:output_fb2} are in general infinite dimensional. Hence, although the parameterization is a convex one, it does not immediately lend itself to efficient computation.  In Section \ref{sec:FIRSLC}  we show that imposing FIR constraints on the system responses leads to a finite-dimensional optimization problem, and further show that such constraints are feasible if the system is controllable and observable.}

\section{System Level Constraints} \label{sec:class}
An advantage of the parameterizations described in the previous section is that they allow us to impose additional constraints on the system response \rev{and the corresponding internal structure of the controller}.  These constraints may be in the form of structural \rev{(subspace)} constraints on the response, or may capture a suitable measure of system performance: in this section, we provide a catalog of useful SLCs that can be naturally incorporated into the SLPs described in the previous section. In addition to all of the performance specifications described in \cite{Boyd_closed}, we also show that QI subspace constraints are a special case of SLCs.  We then provide an example as to why one may wish to go beyond QI subspace constraints to localized (sparse) subspace constraints on the system response, and show that such constraints can be trivially imposed in our framework.  \rev{As far as we are aware, no other parameterizations \cite{Polynomial,Factorization,Behavior_I,Behavior_II,Behavior_para,2006_Rotkowitz_QI_TAC} allow for such constraints to be tractably enforced for general (i.e., strongly connected) systems.  As such, we provide here a description of the largest known class of constrained stabilizing controllers that admit a convex parameterization. Further, as we show in our companion paper \cite{2015_PartII}, it is this ability to impose locality constraints on the controller structure via convex constraints that allows us to scale the methods proposed in \cite{Boyd_closed,2006_Rotkowitz_QI_TAC} to large-scale systems.}

\revsecond{Before proceeding, we emphasize that although the Youla parameterization and co-prime factors are needed to prove the results presented in Sections \ref{sec:Youla_convex} and \ref{sec:QI}, these are only used for the purposes of establishing connections between the Youla/QI parameterizations and the SLA.  The SLPs presented in the previous section require neither the Youla parameterization nor co-prime factors}.

%In this section, we introduce some special classes of convex system constraint $\s$. The goal is to show the generality of the SSP formulation \eqref{eq:main_SS}. In particular, we show that the QI subspace constraint, convex constraint on Youla parameter, controllability and observability constraint, sparsity constraint, and the constraint present in mixed $\mathcal{H}_2 / \mathcal{H}_\infty$ optimal control problem, can all be considered as special cases of convex system constraint $\s$. Furthermore, as the intersection of multiple convex sets still forms a convex set, the above constraints can be treated all at once in a unified framework \eqref{eq:main_SS}. 

\subsection{Constraints on the Youla Parameter} \label{sec:Youla_convex}
We show that any constraint imposed on the Youla parameter can be translated into a SLC, and vice versa.  In particular, if this constraint is convex, then so is the corresponding SLC.  Consider the following modification of the standard Youla parameterization, which characterizes a set of constrained internally stabilizing controllers $\tf K$ for a plant \eqref{eq:ofplant}:
\begin{equation}
\tf K =  (\tf Y_r - \tf U_r \tf Q)(\tf X_r - \tf V_r \tf Q)^{-1}, \ \tf Q \in \mathcal{Q} \cap \RHinf.
\label{eq:mod_youla}
\end{equation}
Here the expression for $\tf K$ is in terms of the co-prime factors defined in \rev{Section \ref{sec:move_Youla}}, 
%terms of a doubly co-prime factorization of the plant \eqref{eq:ofplant} (c.f. Appendix \ref{sec:dcf} and \S 5.4 of \cite{Zhou1996robust}), 
and $\mathcal{Q}$ is an arbitrary set -- if we take $\mathcal{Q} = \RHinf$, we recover the standard Youla parameterization.  \rev{Similarly, if we take $\mathcal{Q}$ to be a QI subspace constraints, we recover a distributed optimal control problem that admits a convex parameterization: we discuss the connection between QI and SLCs in more detail in the next subsection.  Further, if the plant is open-loop stable or has special structure, it may be desirable to enforce non-QI constraints on the Youla parameter.  In general, one can use this expression to characterize all possible constrained internally stabilizing controllers by suitably varying the set $\mathcal{Q}$,\footnote{In particular, to ensure that $\tf K \in \mathcal{C}$, it suffices to enforce that $ (\tf Y_r - \tf U_r \tf Q)(\tf X_r - \tf V_r \tf Q)^{-1} \in \C$.} and hence this formulation is as general as possible.  We now show that an equivalent parameterization can be given in terms of a SLC.}

\begin{theorem} \label{thm:Youla2}
The set of constrained internally stabilizing controllers described by \eqref{eq:mod_youla} can be equivalently expressed as $\tf K = \tf L - \tf M \tf R^{-1}\tf N$, where the system response $\{\tf R, \tf M, \tf N, \tf L\}$ lies in the set
\begin{equation}
\{\tf R, \tf M, \tf N, \tf L \, \big{|} \, \text{ \eqref{eq:output_fb1} - \eqref{eq:output_fb3} hold, }  \tf L \in \mathfrak{M}(\mathcal Q)\},
\label{eq:SLC_Youla}
\end{equation} for $\mathfrak{M}(\tf Q) := \tf K(I- \tf P_{22} \tf K)^{-1} = (\tf Y_r - \tf U_r \tf Q) \tf U_l$ the invertible affine map as defined in \rev{Section \ref{sec:structured}}. Further, this parameterization is convex if and only if $\mathcal{Q}$ is convex.
\end{theorem}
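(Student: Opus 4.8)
The plan is to establish a bijection between the feasible sets of the two parameterizations and to check that the bijection is affine, so that convexity is preserved in both directions. Concretely, I would show that $\tf Q \mapsto \{\tf R, \tf M, \tf N, \tf L\}$ is a well-defined map from the set $\{\tf Q \in \mathcal{Q} \cap \RHinf\}$ into the set \eqref{eq:SLC_Youla}, and conversely that $\{\tf R, \tf M, \tf N, \tf L\} \mapsto \tf L$ recovers a Youla parameter $\tf Q = \mathfrak{M}^{-1}(\tf L) \in \mathcal{Q} \cap \RHinf$, and that the two controllers $\tf K = (\tf Y_r - \tf U_r \tf Q)(\tf X_r - \tf V_r \tf Q)^{-1}$ and $\tf K = \tf L - \tf M \tf R^{-1} \tf N$ agree.

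First I would invoke Theorem \ref{thm:of}: the unconstrained Youla parameterization \eqref{eq:mod_youla} with $\mathcal{Q} = \RHinf$ and the unconstrained system-level parameterization given by \eqref{eq:output_fb1}--\eqref{eq:output_fb3} both parameterize exactly the set of all internally stabilizing controllers for \eqref{eq:ofplant}. So there is already a bijection between $\tf Q \in \RHinf$ and system responses $\{\tf R, \tf M, \tf N, \tf L\}$ satisfying \eqref{eq:output_fb1}--\eqref{eq:output_fb3}, each side in turn bijective with the set of stabilizing $\tf K$. The only thing to pin down is how the constraint $\tf Q \in \mathcal{Q}$ transports across this bijection. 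The key claim, which I would isolate as a lemma (this is essentially Lemma \ref{lem:qkl} referenced in the text), is that along this bijection one has the identity $\tf L = \mathfrak{M}(\tf Q) = \tf K(I - \tf P_{22}\tf K)^{-1} = (\tf Y_r - \tf U_r \tf Q)\tf U_l$. Granting this, $\tf Q \in \mathcal{Q}$ is equivalent to $\mathfrak{M}(\tf Q) \in \mathfrak{M}(\mathcal{Q})$, i.e. to $\tf L \in \mathfrak{M}(\mathcal{Q})$, since $\mathfrak{M}$ is invertible; this is precisely the extra constraint appearing in \eqref{eq:SLC_Youla}.

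The main obstacle is verifying the identity $\tf L = \mathfrak{M}(\tf Q)$ cleanly. I would approach it by expressing $\tf L$ from \eqref{eq:K_relation}, $\tf L = \tf K + \tf K C_2 \tf R B_2 \tf K$ with $\tf R = (zI - A - B_2 \tf K C_2)^{-1}$, and simplifying: $\tf L = \tf K(I + C_2(zI - A - B_2\tf K C_2)^{-1} B_2 \tf K) = \tf K(I - \tf P_{22}\tf K)^{-1}$ using the Woodbury/push-through identity and $\tf P_{22} = C_2(zI-A)^{-1}B_2$. Matching this with the co-prime-factor expression $(\tf Y_r - \tf U_r \tf Q)\tf U_l$ is exactly the standard computation already invoked in Section \ref{sec:structured} for $\mathfrak{M}$, so I can cite that. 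Once the identity is in hand, the bijection between \eqref{eq:mod_youla} and \eqref{eq:SLC_Youla} is immediate, and the two controller formulas agree because both equal the unique stabilizing $\tf K$ corresponding to the shared parameter.

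Finally, for the convexity claim: $\mathfrak{M}$ is an invertible affine map (as noted in Section \ref{sec:structured}, since $\tf U_r, \tf U_l$ are invertible), so $\mathfrak{M}(\mathcal{Q})$ is convex if and only if $\mathcal{Q}$ is. The remaining constraints \eqref{eq:output_fb1}--\eqref{eq:output_fb3} are affine in $\{\tf R, \tf M, \tf N, \tf L\}$, hence convex, so the set \eqref{eq:SLC_Youla} is convex precisely when $\mathfrak{M}(\mathcal{Q})$ is, i.e. precisely when $\mathcal{Q}$ is convex. I would remark that no relation couples $\tf L$ to $\tf R, \tf M, \tf N$ other than the affine ones, so imposing $\tf L \in \mathfrak{M}(\mathcal{Q})$ does not interact with the other constraints in a way that could destroy convexity. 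This completes the argument.
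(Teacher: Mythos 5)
Your proposal is correct and follows essentially the same route as the paper: the key step is the identity $\tf L = \tf K(I-\tf P_{22}\tf K)^{-1} = \mathfrak{M}(\tf Q)$ (the paper's Lemma \ref{lem:qkl}), combined with Theorem \ref{thm:of} and the invertible affine nature of $\mathfrak{M}$ to transport the constraint $\tf Q \in \mathcal{Q}$ to $\tf L \in \mathfrak{M}(\mathcal{Q})$ and to settle convexity. The only difference is cosmetic: you derive the identity algebraically from \eqref{eq:K_relation} via the push-through identity, whereas the paper obtains it by eliminating $\tf u$ from $\tf u = \tf K \tf y$, $\tf y = \tf P_{21}\tf w + \tf P_{22}\tf u$ and reading off the closed-loop map from $\ttf{\delta_y}$ to $\tf u$; both arguments are valid.
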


In order to prove this result, we first need to understand the relationship between the controller $\tf K$, the Youla parameter $\tf Q$, and the system response $\{\tf R, \tf M, \tf N, \tf L\}$.  %As we show in the following lemma, there exists a natural connection between these three objects.

\begin{lemma} \label{lem:qkl}
Let $\tf L$ be defined as in \eqref{eq:K_relation}, and the invertible affine map $\mathfrak{M}$ be defined as in \rev{Section \ref{sec:structured}}.  We then have that
\begin{equation}
\tf L = \tf K (I - \tf P_{22} \tf K)^{-1} = \mathfrak{M}(\tf Q).
\label{eq:qkl}
\end{equation}
\end{lemma}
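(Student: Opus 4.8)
The plan is to verify the chain of equalities in \eqref{eq:qkl} by working from the definitions outward. The second equality, $\tf K(I-\tf P_{22}\tf K)^{-1} = \mathfrak{M}(\tf Q)$, is immediate: it is simply the \emph{definition} of the affine map $\mathfrak{M}$ recalled in Section~\ref{sec:structured}, applied to the Youla parameter $\tf Q$ associated with $\tf K$ via the doubly co-prime factorization (so that $\tf K = (\tf Y_r - \tf U_r \tf Q)(\tf X_r - \tf V_r \tf Q)^{-1}$), and the standard identity $\tf K(I-\tf P_{22}\tf K)^{-1} = (\tf Y_r - \tf U_r \tf Q)\tf U_l$ from \cite{Zhou1996robust}. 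So the real content is the first equality, $\tf L = \tf K(I-\tf P_{22}\tf K)^{-1}$, where $\tf L$ is the $\ttf{\delta_y}\to\tf u$ block of the system response defined in \eqref{eq:K_relation}, and $\tf P_{22} = C_2(zI-A)^{-1}B_2$ is the strictly proper plant of \eqref{eq:ofplant}.

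The key step is to manipulate the expression $\tf L = \tf K + \tf K C_2 \tf R B_2 \tf K$ from \eqref{eq:K_relation}, using $\tf R = (zI - A - B_2 \tf K C_2)^{-1}$. First I would factor out $\tf K$ on the right: $\tf L = (I + \tf K C_2 \tf R B_2)\tf K$. Then I would show $I + \tf K C_2 \tf R B_2 = (I - \tf K \tf P_{22})^{-1}$, equivalently $(I - \tf K \tf P_{22})(I + \tf K C_2 \tf R B_2) = I$, i.e. $\tf K C_2 \tf R B_2 = \tf K \tf P_{22} + \tf K \tf P_{22}\tf K C_2 \tf R B_2$, i.e. $\tf K C_2 \tf R B_2 = \tf K\tf P_{22}(I + \tf K C_2 \tf R B_2)$. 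Substituting $\tf P_{22} = C_2(zI-A)^{-1}B_2$ and $\tf R = (zI-A-B_2\tf K C_2)^{-1}$, this reduces to the resolvent identity $(zI-A-B_2\tf K C_2)^{-1} = (zI-A)^{-1} + (zI-A)^{-1} B_2 \tf K C_2 (zI - A - B_2\tf K C_2)^{-1}$ pre-multiplied by $\tf K C_2$ and post-multiplied by $B_2$, which is standard. Finally, noting that $\tf K(I-\tf P_{22}\tf K)^{-1} = (I - \tf K \tf P_{22})^{-1}\tf K$ (a push-through identity), we get $\tf L = (I - \tf K\tf P_{22})^{-1}\tf K = \tf K(I-\tf P_{22}\tf K)^{-1}$, as desired.

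I expect the main (minor) obstacle to be bookkeeping with the resolvent/push-through identities and making sure all inverses involved are well-defined — but since $\tf K$ is proper and the plant is strictly proper, $(zI-A-B_2\tf K C_2)$ and $(I-\tf P_{22}\tf K)$ are both invertible as transfer matrices (their values at $z=\infty$ are invertible), so this is routine. An alternative, perhaps cleaner, route would be to invoke the closed-loop realization \eqref{eq:ss_relation} from Lemma~\ref{lem:of_nec}: read off $\tf L = D_k + [D_k C_2\ \ C_k](zI - A_{cl})^{-1}[B_2 D_k;\, B_k]$ with $A_{cl}$ the closed-loop $A$-matrix, and compare termwise with the state-space formula for $\tf K(I-\tf P_{22}\tf K)^{-1}$ built from $(A_k,B_k,C_k,D_k)$ and $(A,B_2,C_2)$; both should collapse to the same realization. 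I would present the first, purely transfer-function, argument as the main proof since it avoids choosing a realization.
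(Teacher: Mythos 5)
Your proposal is correct, but it takes a different route from the paper. The paper proves the first equality at the signal level: starting from $\tf u = \tf K \tf y$ and $\tf y = \tf P_{21}\tf w + \tf P_{22}\tf u$, it eliminates $\tf u$ to get $\tf y = (I-\tf P_{22}\tf K)^{-1}\tf P_{21}\tf w$, writes $\tf P_{21}\tf w = C_2(zI-A)^{-1}\ttf{\delta_x} + \ttf{\delta_y}$, and then simply reads off $\tf L$ as the closed-loop map from $\ttf{\delta_y}$ to $\tf u$, namely $\tf K(I-\tf P_{22}\tf K)^{-1}$; the second equality is, as in your write-up, quoted from the coprime-factorization literature. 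You instead verify the identity purely algebraically from the explicit formula $\tf L = \tf K + \tf K C_2 \tf R B_2 \tf K$ in \eqref{eq:K_relation}, via the factorization $\tf L = (I+\tf K C_2 \tf R B_2)\tf K$, the resolvent identity for $\tf R = (zI-A-B_2\tf K C_2)^{-1}$, and the push-through identity $(I-\tf K\tf P_{22})^{-1}\tf K = \tf K(I-\tf P_{22}\tf K)^{-1}$; your computation checks out, and you correctly note that strict properness of $\tf P_{22}$ and properness of $\tf K$ make all inverses well defined. The paper's argument is shorter and makes the interpretation of $\tf L$ as a disturbance-to-input map transparent (it implicitly treats $\ttf{\delta_x},\ttf{\delta_y}$ as independent inputs to identify the coefficient), whereas yours is a self-contained verification from the stated formula that needs no signal bookkeeping; your suggested state-space alternative via \eqref{eq:ss_relation} is also viable but is not the paper's route either.
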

\begin{proof}
From the equations $\tf u = \tf K \tf y$ and $\tf y = \tf P_{21} \tf w + \tf P_{22} \tf u$, we can eliminate $\tf u$ and express $\tf y$ as $\tf y = (I - \tf P_{22} \tf K)^{-1} \tf P_{21} \tf w$. We then have that
\begin{equation}
\tf u = \tf K \tf y = \tf K (I - \tf P_{22} \tf K)^{-1} \tf P_{21} \tf w. \label{eq:qithm}
\end{equation}
Recall that we define $\ttf{\delta_x} = B_1 \tf w$ and $\ttf{\delta_y} = D_{21} \tf w$. As a result, we have $\tf P_{21} \tf w = C_2(zI-A)^{-1} \ttf{\delta_x} + \ttf{\delta_y}$. Substituting this identity into \eqref{eq:qithm} yields
\begin{equation}
\tf u = \tf K (I - \tf P_{22} \tf K)^{-1} [ C_2(zI-A)^{-1} \ttf{\delta_x} + \ttf{\delta_y}]. \label{eq:qithm2}
\end{equation}
By definition, $\tf L$ is the closed loop mapping from $\ttf{\delta_y}$ to $\tf u$. Equation \eqref{eq:qithm2} then implies that $\tf L = \tf K (I - \tf P_{22} \tf K)^{-1}$. From\cite{2014_Sabau_QI, 2014_Lamperski_H2_journal} (c.f. \rev{Section \ref{sec:structured}}), we have $\tf K (I - \tf P_{22} \tf K)^{-1} = \mathfrak{M}(\tf Q)$, which completes the proof.
\end{proof}

\begin{proof}[Proof of Theorem \ref{thm:Youla2}]
The equivalence between the parameterizations \eqref{eq:mod_youla} and \eqref{eq:SLC_Youla} is readily obtained from Lemma \ref{lem:qkl}. As $\mathfrak{M}$ is an invertible affine mapping between $\tf L$ and $\tf Q$, any convex constraint imposed on the Youla parameter $\tf Q$ can be equivalently translated into a convex SLC imposed on $\tf L$, and vice versa. 
\end{proof}

%\rev{(move this paragraph to here)} 
%We note that when the system response $\{\tf R, \tf M, \tf N, \tf L\}$ is constrained to lie within a convex set, the transformation $\tf K = \tf L - \tf M \tf R^{-1} \tf N$ implies that the corresponding set of parameterized controllers $\tf K$ can be non-convex. Therefore it is possible to parameterize the set of internally stabilizing controllers $\tf K$ lying in certain non-convex sets $\C$ using the parameterization described in Theorem \ref{thm:Youla2} and appropriately selected additional convex constraints on the system response $\{\tf R, \tf M, \tf N, \tf L\}$. % Fully characterizing this class of constraints is the subject of current work, but we end by noting that controllers satisfying either general convex or non-convex constraints can be parameterized via the Youla parameter and quadratic invariance.

\subsection{Quadratically Invariant Subspace Constraints} \label{sec:QI}
Recall that for a subspace $\mathcal{C}$ that is quadratically invariant with respect to a plant $\tf P_{22}$, the set of internally stabilizing controllers $\tf K$ that lie within the subspace $\mathcal{C}$ can be expressed as the set of stable transfer matrices $\tf Q \in \RHinf$ satisfying $\mathfrak{M}(\tf Q) \in \mathcal{C}$, for $\mathfrak{M}$ \rev{the invertible affine map defined in Section \ref{sec:structured}}. %(c.f. Section \ref{sec:structured}, problem \eqref{eq:trad_youla2} and Appendix \ref{sec:dcf}). 
We therefore have the following corollary to Theorem \ref{thm:Youla2}.

\begin{corollary}
Let $\C$ be a subspace constraint that is quadratically invariant with respect to $\tf P_{22}$. Then the set of internally stabilizing controllers satisfying $\tf K \in \mathcal{C}$ can be parameterized as in Theorem \ref{thm:Youla2} with \rev{$\tf L = \mathfrak{M}(\tf{Q}) \in \mathcal{C}$}.
 \label{thm:qi}
\end{corollary}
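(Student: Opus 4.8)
The plan is to obtain this as an immediate consequence of Theorem~\ref{thm:Youla2}, by choosing the set $\mathcal{Q}$ in the modified Youla parameterization \eqref{eq:mod_youla} so that it reproduces the standard QI characterization of controllers lying in $\C$. Concretely, I would take $\mathcal{Q} = \mathfrak{M}^{-1}(\C)$, the preimage of the subspace $\C$ under the invertible affine map $\mathfrak{M}$ defined in Section~\ref{sec:structured}; this is well defined since $\mathfrak{M}$ is a bijection on the space of transfer matrices.

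First I would recall the QI result quoted in Section~\ref{sec:structured} (from \cite{2006_Rotkowitz_QI_TAC}): because $\C$ is quadratically invariant with respect to $\tf P_{22}$, a controller $\tf K$ is internally stabilizing and satisfies $\tf K \in \C$ if and only if its Youla parameter $\tf Q$ satisfies $\tf Q \in \RHinf$ and $\mathfrak{M}(\tf Q) \in \C$, i.e. $\tf Q \in \mathfrak{M}^{-1}(\C) \cap \RHinf = \mathcal{Q} \cap \RHinf$. Hence with this choice of $\mathcal{Q}$, the family \eqref{eq:mod_youla} is exactly the set of all internally stabilizing controllers lying in $\C$.

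Next I would apply Theorem~\ref{thm:Youla2} verbatim with this $\mathcal{Q}$: the set \eqref{eq:mod_youla} equals the set of controllers $\tf K = \tf L - \tf M \tf R^{-1}\tf N$ for which $\{\tf R, \tf M, \tf N, \tf L\}$ satisfies \eqref{eq:output_fb1}--\eqref{eq:output_fb3} together with $\tf L \in \mathfrak{M}(\mathcal{Q})$. Since $\mathfrak{M}$ is a bijection, $\mathfrak{M}(\mathcal{Q}) = \mathfrak{M}(\mathfrak{M}^{-1}(\C)) = \C$, so the additional SLC is simply $\tf L \in \C$. Combining the two characterizations gives precisely the claimed parameterization: the internally stabilizing controllers with $\tf K \in \C$ are exactly those of the form $\tf K = \tf L - \tf M \tf R^{-1}\tf N$ with $\{\tf R, \tf M, \tf N, \tf L\}$ satisfying the SLP constraints and $\tf L = \mathfrak{M}(\tf Q) \in \C$. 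As a bonus, since $\C$ is a subspace and $\mathfrak{M}$ is affine, $\mathcal{Q} = \mathfrak{M}^{-1}(\C)$ is convex, so the ``convex if and only if'' clause of Theorem~\ref{thm:Youla2} also yields that this parameterization is convex.

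The result is essentially a bookkeeping corollary, so I do not expect any genuine obstacle; the only point requiring care is aligning the two descriptions of the admissible $\tf Q$ — the QI one ($\tf Q \in \RHinf$ with $\mathfrak{M}(\tf Q)\in\C$) and the set $\mathcal{Q} \cap \RHinf$ appearing in \eqref{eq:mod_youla} — and using that $\mathfrak{M}$, being an invertible affine map on the full space of transfer matrices rather than just on $\RHinf$, satisfies $\mathfrak{M}(\mathfrak{M}^{-1}(\C)) = \C$.
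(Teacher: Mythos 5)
Your proposal is correct and follows essentially the same route as the paper: both arguments reduce the corollary to Theorem~\ref{thm:Youla2} by using the QI result of \cite{2006_Rotkowitz_QI_TAC} to identify the admissible controllers with those whose parameter satisfies $\mathfrak{M}(\tf Q)\in\C$, i.e.\ $\tf L\in\C$ (the paper phrases this via Lemma~\ref{lem:qkl} and the invariance statement $\tf K\in\C \Leftrightarrow \tf K(I-\tf P_{22}\tf K)^{-1}\in\C$, while you equivalently instantiate $\mathcal{Q}=\mathfrak{M}^{-1}(\C)$). The bookkeeping differs only cosmetically, so no gap to report.
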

\begin{proof}
From Lemma \ref{lem:qkl}, we have $\tf L = \tf K (I - \tf P_{22} \tf K)^{-1}$. Invoking Theorem $14$ of \cite{2006_Rotkowitz_QI_TAC}, we have that $\tf K \in \C$ if and only if $\tf L = \tf K (I - \tf P_{22} \tf K)^{-1} \in \C$. The claim then follows immediately from Theorem \ref{thm:Youla2}.
%Let
%\begin{subequations}
%\begin{align}
%(zI-A) x &= B_2 u + \delta_x \nonumber\\
%y &= C_2 x + D_{22} u + \delta_y \nonumber\\
%u &= K y, \nonumber
%\end{align}
%\end{subequations}
%and $P_{22} = C_2 (zI-A)^{-1} B_2 + D_{22}$. By definition, $L$ is the closed loop transfer matrix from $\delta_y$ to $u$. We can derive 
\end{proof}
%After solving \eqref{eq:qieq}, we can either construct the controller by $R, M, N, L, D_{22}$ directly, or compute $K$ by $K = (I+\tilde{K}D_{22})^{-1}\tilde{K}$, with $\tilde{K} = L - M R^{-1} N$. 

Note that Corollary \ref{thm:qi} holds true for stable and unstable plants $\tf P$. Therefore, in order to parameterize the set of internally stabilizing controllers lying in $\mathcal{C}$, we do not need to assume the existence of an initial strongly stabilizing controller as in \cite{2006_Rotkowitz_QI_TAC} nor do we need to perform a doubly co-prime factorization as in \cite{2014_Sabau_QI}.  %We do however need to characterize the solution set to the affine constraints \eqref{eq:output_fb} and $\tf L \in \mathcal{C}$ -- we discuss how this challenge can be dealt with in the next section and our companion paper \cite{2015_PartII}. 
Thus we see that QI subspace constraints are a special case of SLCs.  

Finally, we note that in \cite{2014_Lessard_convexity} and \cite{2011_QIN}, the authors show that QI is necessary for a subspace constraint $\mathcal{C}$ on the controller $\tf K$ to be enforceable via a convex constraint on the Youla parameter $\tf Q$. However, when $\C$ is not a subspace constraint, no general methods exist to determine whether the set of internally stabilizing controllers lying in $\C$ admits a convex representation.  In contrast, determining the convexity of a SLC is straightforward.  

\subsection{Beyond QI}\label{sec:beyond}
Before introducing the class of localized SLCs, we present a simple example for which the QI framework fails to capture an ``obvious'' controller with localized structure, but for which the SLA can.  This example also serves to illustrate the importance of locality in achieving scalability of controller implementation.  Our companion paper \cite{2015_PartII} shows how locality further leads to scalability of controller synthesis.

\begin{example}
\label{ex:motivating}
\rev{
Consider the optimal control problem:
\begin{equation}
\begin{array}{rl}
\minimize{u} & \lim_{T\to \infty}\frac{1}{T}\sum_{t=0}^T \mathbb{E}\|x[t]\|_2^2 \\
\text{subject to} & x[t+1]=Ax[t]+u[t]+w[t],
\end{array}
\label{eq:opt_ctrl_example}
\end{equation}
with disturbance $w[t]\overset{\mathrm{i.i.d}}{\sim{}}\mathcal{N}(0,I)$.  We assume full state-feedback, i.e., the control action at time $t$ can be expressed as $u[t]=f(x[0:t])$ for some function $f$.  An optimal control policy $u^\star$ for this LQR problem is easily seen to be given by $u^\star[t]=-Ax[t]$. } 

\rev{Further suppose that the state matrix $A$ is sparse and let its support define the adjacency matrix of a graph $\mathcal{G}$ for which we identify the $i$th node with the corresponding state/control pair $(x_i,u_i)$.  
In this case, we have that the optimal control policy $u^\star$ can be implemented in a \emph{localized} manner. In particular, in order to implement the state feedback policy for the $i$th actuator $u_i$, only those states $x_j$ for which $A_{ij}\neq 0$ need to be collected -- thus only those states corresponding to immediate neighbors of node $i$ in the graph $\mathcal{G}$, i.e., only \emph{local} states, need to be collected to compute the corresponding control action, leading to a localized implementation.  As we discuss in our companion paper \cite{2015_PartII}, the idea of locality is essential to allowing controller synthesis and implementation to scale to arbitrarily large systems, and hence such a structured controller is desirable.  }

\rev{Now suppose that we naively attempt to solve optimal control problem \eqref{eq:opt_ctrl_example} by converting it to its equivalent $\mathcal{H}_2$ model matching problem \rev{\eqref{eq:decentralized}} and constraining the controller $\tf K$ to have the same support as $A$, i.e., $\tf K = \sum_{t=0}^\infty \frac{1}{z^t}K[t]$, $\supp{K[t]} \subset \supp {A}$.} 
\revsecond{If the graph $\mathcal{G}$ is strongly connected, then \emph{any} sparsity constraint in the form of $\tf{K}_{ij} = 0$ is not QI with respect to the plant $\tf P_{22} = (zI-A)^{-1}$.  To see this, note that if the graph $\mathcal{G}$ is strongly connected, then $\tf P_{22}$ is a dense transfer matrix: it then follows immediately that any subspace $\mathcal{C}$ enforcing sparsity constraints on $\tf K$ fails to satisfy $\tf K \tf P_{22} \tf K \in \mathcal{C}, \,\, \forall \tf K \in \mathcal{C}$, and hence is not QI with respect to $\tf P_{22}$.
%because the delay between sub-controllers $i$ and $j$ is infinite while the delay between sub-systems $i$ and $j$ is finite \cite{2010_Rotkowitz_QI_delay}. 
The results of \cite{2014_Lessard_convexity} further allow us to conclude that computing such a structured controller can never be done using convex programming when using the Youla parameterization.}

\rev{In contrast, in the case of a full control ($B_2=I$) problem, the condition \eqref{eq:state_fb} simplifies to $(zI-A)\tf R - \tf M = I$, $\tf R, \tf M \in \frac{1}{z}\RHinf$.   Again, suppose that we wish to synthesize an optimal controller that has a communication topology given by the support of $A$ -- from the above implementation, it suffices to constrain the support of transfer matrices $\tf R$ and $\tf M$ to be a subset of that of $A$. It can be checked that $\tf R = \frac{1}{z}I$, and $\tf M = -\frac{1}{z}A$ satisfy the above constraints, and recover the globally optimal controller $\tf K = -A$. } 
\end{example}

\subsection{Subspace and Sparsity Constraints} \label{sec:sparsity}
\rev{Motivated by the previous example, we consider here subspace SLCs, with a particular emphasis on those that encode sparse structure in the system response and corresponding controller implementation.} Let $\mathcal{L}$ be a subspace of $\RHinf$.  We can parameterize all stable achievable system responses that lie in this subspace by adding the following SLC to the parameterization of Theorem \ref{thm:of}:
\begin{equation}
\begin{bmatrix}
\tf R & \tf N \\
\tf M & \tf L \end{bmatrix} \in \mathcal{L}.
\label{eq:SLC_localized}
\end{equation}

Of particular interest are subspaces $\mathcal{L}$ that define transfer matrices of sparse support.  
An immediate benefit of enforcing such sparsity constraints on the system response is that implementing the resulting controller \eqref{eq:ss_like} can be done in a \emph{localized way}, i.e., each controller state $\beta_i$ and control action $u_i$ can be computed using a local subset (as defined by the support of the system response) of the global controller state $\beta$ and sensor measurements $y$.  For this reason, we refer to the constraint \eqref{eq:SLC_localized} as a \emph{localized} SLC when it defines a subspace with sparse support.  As we show in our companion paper \cite{2015_PartII}, such localized constraints further allow for the resulting system response to be computed in a localized way, i.e., the global computation decomposes naturally into decoupled subproblems that depend only on local sub-matrices of the state-space representation \eqref{eq:dynamics}.  Clearly, both of these features are extremely desirable when computing controllers for large-scale systems.  \rev{To the best of our knowledge, such constraints cannot be enforced using convex constraints using existing controller parameterizations \cite{Polynomial,Factorization,Behavior_I,Behavior_II,Behavior_para,2006_Rotkowitz_QI_TAC} for general systems.}

\rev{A caveat of our approach is that although arbitrary subspace structure can be enforced on the system response, it is possible that the intersection of the affine space described in Theorem \ref{thm:of} with the specified subspace is empty.  Indeed, selecting an appropriate (feasible) localized SLC, as defined by the subspace $\mathcal{L}$,  is a subtle task: it depends on an interplay between actuator and sensor density, information exchange delay and disturbance propagation delay. Formally defining and analyzing a procedure for designing a localized SLC is beyond the scope of this paper: as such, we refer the reader to our recent paper \cite{2015_Wang_Reg}, in which we present a method that allows for the joint design of an actuator architecture and corresponding feasible localized SLC. } %We generalize these methods to the output feedback setting in our companion paper \cite{2015_PartII}, where we show that actuation and sensing architectures, as well as feasible localized SLCs, can be co-designed using convex programming.

\subsection{FIR Constraints} \label{sec:FIRSLC}
Given the parameterization of stabilizing controllers of Theorem \ref{thm:of}, it is straightforward to enforce that a system response be FIR with horizon $T$ via the following SLC
\begin{equation}
\tf R, \tf M, \tf N, \tf L \in \FT.
\label{eq:SLC_FIR}
\end{equation}

\rev{Whereas the pros and cons of deadbeat control in the centralized setting are well studied \cite{deadbeat1,deadbeat2,deadbeat3}, we argue here that imposing an appropriately tuned FIR SLC has benefits that are specific to the distributed large-scale setting:}
%We argue that imposing a FIR SLC is beneficial in the following ways:
\begin{enumerate}[(a)]
%  \item The closed loop response to an impulse disturbance is FIR of horizon $T$, where $T$ can be set by the control designer.  As such, the settling time of the system can be accurately tuned.
  \item The controller achieving the desired system response can be implemented using the FIR filter banks $\tf{\tilde{R}^+}, \tf{\tilde{M}}, \tf{\tilde{N}}, \tf L \in \FT$, as illustrated in Figure \ref{fig:of}.  This simplicity of implementation is extremely helpful when applying these methods in practice.
  \item When a FIR SLC is imposed, the resulting set of stable achievable system responses and corresponding controllers admit a finite dimensional representation -- specifically, the constraints specified in Theorem \ref{thm:of} only need to be applied to the impulse response elements $\{R[t], M[t], N[t], L[t]\}_{t=0}^T$.
\end{enumerate}

\begin{remark}
It should be noted that the computational benefits claimed above hold only for discrete time systems. For continuous time systems, a FIR transfer matrix is still an infinite dimensional object, and hence the resulting parameterizations and constraints are in general infinite dimensional as well.
\end{remark}
\rev{\begin{remark} The complexity of local implementations using FIR filter banks scales linearly with the horizon $T$ -- an interesting direction for future work is to determine if infinite impulse response (IIR) system responses lead to simpler controller implementations via state-space realizations.
\end{remark}}

\revsecond{We conclude this subsection by showing that such FIR constraints are always feasible, for suitably chosen horizons $T$, if the system is controllable and observable.}
%Recall that a pair $(A,B)$ is said to be controllable if the controllability matrix $[B \ AB \ A^2 B \ \dots A^{n-1}B]$ has full row-rank, and that a pair $(A,C)$ is observable if the pair $(A^\top,C^\top)$ is controllable.  We say that a triple $(A,B,C)$ is controllable and observable if $(A,B)$ is controllable and $(A,C)$ is observable.}
\revsecond{\begin{theorem} \label{thm:fd}
The SLP \eqref{eq:output_fb} admits a FIR solution if the triple $(A, B_2, C_2)$ is controllable and observable. 
\end{theorem}
\begin{proof}
By definition, if $(A,B_2)$ is controllable, then there exists FIR transfer matrices $(\tf{R_1}, \tf{M_1}) \in \mathcal{F_T}_1$ satisfying \eqref{eq:state_fb} for some finite $T_1$. Similarly, if $(A,C_2)$ is observable, then there exists FIR transfer matrices $(\tf{R_2}, \tf{N_2}) \in \mathcal{F_T}_2$ satisfying \eqref{eq:state_est} for some finite $T_2$. When $(A, B_2, C_2)$ is controllable and observable, the following FIR transfer matrices can be verified to lie in the affine space \eqref{eq:output_fb} 
\begin{subequations}
\begin{align}
\tf R &= \tf R_1 + \tf R_2 - \tf R_1 (zI-A) \tf R_2 \label{eq:sfest1}\\
\tf M &= \tf M_1 - \tf M_1 (zI-A) \tf R_2 \label{eq:sfest2}\\
\tf N &= \tf N_2 - \tf R_1 (zI-A) \tf N_2 \label{eq:sfest3}\\
\tf L &= -\tf M_1 (zI-A) \tf N_2. \label{eq:sfest4}
\end{align}\label{eq:sfest}
\end{subequations}
%For the other direction, note that \eqref{eq:state_fb} and \eqref{eq:state_est} are subsets of the constraint \eqref{eq:output_fb}, so \eqref{eq:output_fb} admits a FIR solution only when both \eqref{eq:state_fb} and \eqref{eq:state_est} have FIR solutions. This completes the proof.
\end{proof}}

\revsecond{Finally, we note that recently developed relaxations \cite{Virtual} of SLP can be used when such FIR constraints cannot be satisfied. This may occur, for instance, when the underlying system is only stabilizable and/or detectable.}

\subsection{Intersections of SLCs and Spatiotemporal Constraints} \label{eq:spatio}
Another major benefit of SLCs is that several such constraints can be imposed on the system response at once.  Further, as convex sets are closed under intersection, convex SLCs are also closed under intersection.  To illustrate the usefulness of this property, consider the intersection of a QI subspace SLC (enforcing information exchange constraints between sub-controllers), a FIR SLC and a localized SLC.  The resulting SLC can be interpreted as enforcing a spatiotemporal constraint on the system response and its corresponding controller, as we explain using the chain example \rev{shown below}. %previously described.

Figure \ref{fig:st1} shows a diagram of the system response to a particular disturbance $(\ttf{\delta_{x}})_i$.  In this figure, the vertical axis denotes the spatial coordinate of a state in the chain, and the horizontal axis denotes time: hence we refer to this figure as a space-time diagram. Depicted are the three components of the spatiotemporal constraint, namely the communication delay imposed on the controller via the QI subspace SLC, the deadbeat response of the system to the disturbance imposed by the FIR SLC, and the localized region affected by the disturbance $(\ttf{\delta_{x}})_i$ imposed by the localized SLC.

When the effect of each disturbance $(\ttf{\delta_{x}})_i$ can be localized within such a spatiotemporal SLC, the system is said to be \emph{localizable} (c.f., \cite{2014_Wang_CDC, 2015_Wang_H2}).
\revsecond{It follows that the feasibility of a spatiotemporal constraint implies a more general notion of controllability (observability), wherein the system impulse response is constrained to be finite in both space and time, and the controller is subject to communication delays.  Thus rather than the traditional computational test of verifying the rank of a suitable controllability (observability) matrix, localizability is verified by the feasibility of a set of affine constraints.}

\begin{figure}[h!]
      \centering
      \includegraphics[width=0.35\textwidth]{./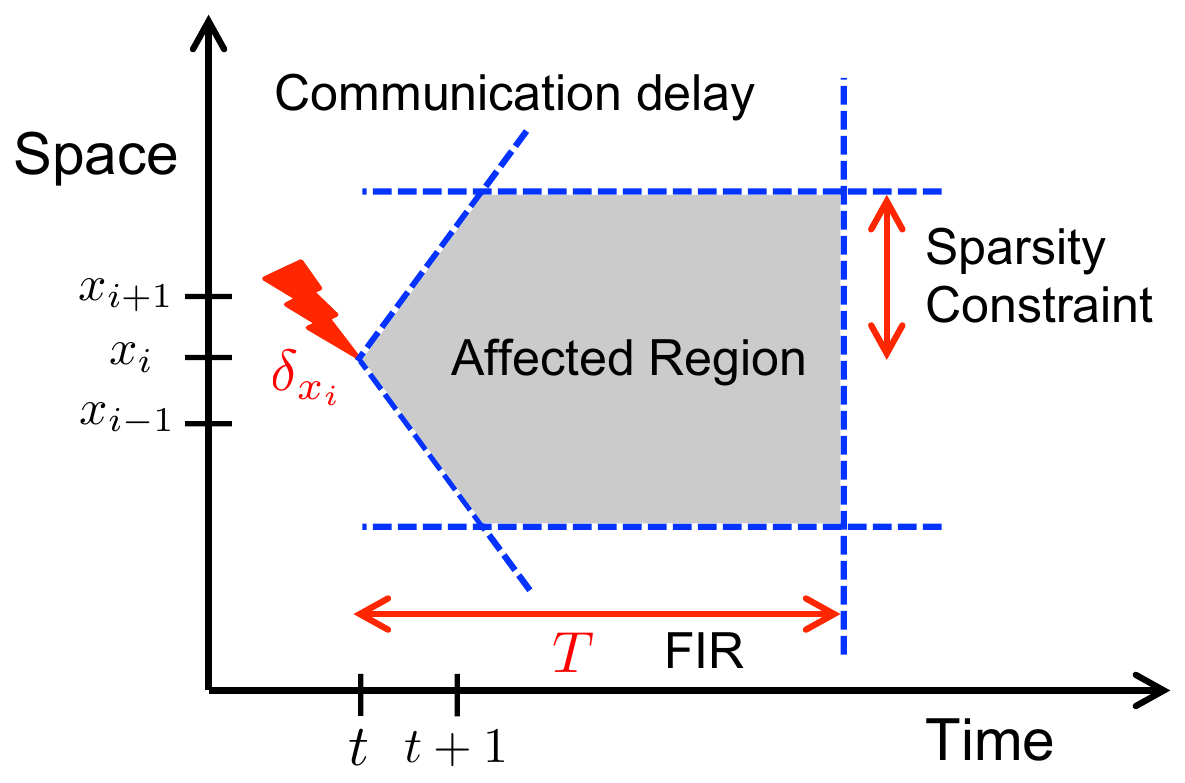}
      \caption{Space time diagram for a single disturbance striking the chain described in Example \ref{ex:motivating}.}
      \label{fig:st1}
\end{figure}

\subsection{Closed Loop Specifications} \label{sec:cl_spec}
\rev{As in \cite{Boyd_closed}, our parameterization allows for arbitrary performance constraints to be imposed on the closed loop response.  In contrast to the method proposed in \cite{Boyd_closed}, these performance constraints can be combined with structural (i.e., localized spatiotemporal) constraints on the controller realization, naturally extending their applicability to the large-scale distributed setting. In the interest of completeness, we highlight some particularly useful SLCs here.}

\subsubsection{System Performance Constraints} \label{sec:sys_perf}
Let $g(\cdot)$ be a functional of the system response --  it then follows that all internally stabilizing controllers satisfying a  performance level, as specified by a scalar $\gamma$, are given by transfer matrices $\{\tf R, \tf M, \tf N, \tf L\}$  satisfying the conditions of Theorem \ref{thm:of} and the SLC
\begin{equation}
g(\tf R, \tf M, \tf N, \tf L ) \leq \gamma.
\label{eq:perf_SLC}
\end{equation}

Further, recall that the sublevel set of a convex functional is a convex set, and hence if $g$ is convex, then so is the SLC \eqref{eq:perf_SLC}.  A particularly useful choice of convex functional is
\begin{equation}
g(\tf R, \tf M, \tf N, \tf L ) = \left\| \begin{bmatrix} C_1 & D_{12} \end{bmatrix} \begin{bmatrix} \tf R & \tf N \\ \tf M & \tf L \end{bmatrix} \begin{bmatrix} B_1 \\ D_{21} \end{bmatrix} + D_{11} \right\|,
\label{eq:io_SLC}
\end{equation}
for a system norm $\|\cdot\|$, which is equivalent to the objective function of the decentralized optimal control problem \eqref{eq:decentralized}.  Thus by imposing several performance SLCs \eqref{eq:io_SLC} with different choices of norm, one can naturally formulate multi-objective optimal control problems.
%\begin{remark}
%For a continuous time system with the system norm $\| \cdot \|$ in \eqref{eq:io_SLC} chosen to be the $\mathcal{H}_2$ norm, the closed loop transfer matrix in \eqref{eq:io_SLC} needs to be strictly proper, i.e., $D_{12} L[0] D_{21} + D_{11} = 0$.
%\end{remark}

\subsubsection{Controller Robustness Constraints}
Suppose that the controller is to be implemented using limited hardware, thus introducing non-negligible quantization (or other errors) to the internally computed signals: this can be modeled via an internal additive noise $\ttf{\delta_\beta}$ in the controller structure (c.f., Figure \ref{fig:of}).  In this case, we may wish to design a controller that further limits the effects of these perturbations on the system: to do so, we can impose a performance SLC on the closed loop transfer matrices specified in the rightmost column of Table \ref{Table:1}.  %Note that it is not clear how to impose such controller robustness constraints using the Youla parameterization.

\subsubsection{Controller Architecture Constraints}\label{sec:rfd_slc}
The controller implementation \eqref{eq:ss_like} also allows us to naturally control the number of actuators and sensors used by a controller -- this can be useful when designing controllers for large-scale systems that use a limited number of hardware resources (c.f., Section \ref{sec:RFD}).  In particular, assume that implementation \eqref{eq:ss_like} parameterizing stabilizing controllers that use all possible actuators and sensors. It then suffices to constrain the number of non-zero rows of the transfer matrix $[\tf{\tilde{M}}, \tf L]$ to limit the number of actuators used by the controller, and similarly, the number of non-zero columns of the transfer matrix $[\tf{\tilde{N}}^\top, \tf{L}^\top]^\top$ to limit the number of sensors used by the controller.  As stated, these constraints are non-convex, but recently proposed convex relaxations \cite{Matni_RFD_TAC,Matni_RFD_CDC} can be used in their stead to impose convex SLCs on the controller architecture.

\subsubsection{Positivity Constraints}
It has recently been observed that (internally) positive systems are amenable to efficient analysis and synthesis techniques (c.f., \cite{2015_Positive} and the references therein).  Therefore it may be desirable to synthesize a controller that either preserves or enforces positivity of the resulting closed loop system.  We can enforce this condition via the SLC that the elements
\begin{equation}
\Big\{ \begin{bmatrix} C_1 & D_{12} \end{bmatrix} \begin{bmatrix} R[t] & N[t] \\ M[t] & L[t]\end{bmatrix} \begin{bmatrix} B_1 \\ D_{21} \end{bmatrix} \Big\}_{t=1}^\infty \nonumber
\end{equation}
and the matrix $(D_{12} L[0] D_{21} + D_{11})$ are all element-wise nonnegative matrices. This SLC is easily seen to be convex.

%\vspace{-10mm}

% Generalization of controllability

%$T$-step controllability - distinguish between reachability and controllability
%state feedback
%Disturbance localization? H2 Hinf mixed
%\input{Nearly_local}

\section{System Level Synthesis} \label{sec:localizability}
We build on the results of the previous sections to formulate the SLS problem. We show that by combining appropriate SLPs and SLCs, the largest known class of convex structured optimal control problems can be formulated.  As a special case, we show that we recover all possible structured optimal control problems of the form \eqref{eq:decentralized} that admit a convex representation in the Youla domain.

\subsection{General Formulation}
Let $g(\cdot)$ be a functional capturing a desired measure of the performance of the system (as described in Section \ref{sec:sys_perf}), and let $\mathcal{S}$ be a SLC.  We then pose the SLS problem as
\begin{eqnarray}
\underset{\{\tf R,\tf M,\tf N,\tf L\}}{\text{minimize    }} && g(\tf R,\tf M,\tf N,\tf L) \nonumber\\
\text{subject to } && \eqref{eq:output_fb1} - \eqref{eq:output_fb3} \nonumber \\
&& \begin{bmatrix} \tf R & \tf N \\ \tf M & \tf L \end{bmatrix} \in \mathcal{S}. \label{eq:main_SS}
\end{eqnarray}

For $g$ a convex functional and $\mathcal{S}$ a convex set,\footnote{More generally, we only need the intersection of the set $\mathcal{S}$ and the restriction of the functional $g$ to the affine subspace described in \eqref{eq:output_fb} to be convex.} the resulting SLS problem is a convex optimization problem.

%In the rest of this paper, we only consider the case that the objective $g(\cdot)$ is convex with respect to the system response, and the intersection of $\mathcal{S}$ and the affine space \eqref{eq:output_fb1} - \eqref{eq:output_fb3} forms a convex set. In this case, \eqref{eq:main_SS} is a convex optimization problem.

\begin{remark}
For a state feedback problem, the SLS problem can be simplified to
\begin{eqnarray}
\underset{\{\tf R,\tf M\}}{\text{minimize    }} && g(\tf R,\tf M) \nonumber\\
\text{subject to } && \eqref{eq:state_fb1} - \eqref{eq:state_fb2} \nonumber \\
&& \begin{bmatrix} \tf R \\ \tf M \end{bmatrix} \in \mathcal{S}. \label{eq:main_SSF}
\end{eqnarray}
%The constraint $\mathcal{S}$ in \eqref{eq:main_SSF} is called the state feedback localizability constraint.
\end{remark}

\subsection{\revsecond{Examples of Convex SLS}}
\revsecond{Here we highlight some convex SLS problems. A more extensive list can be found in \cite{Thesis,SLStutorial}.}
\subsubsection{\revsecond{Distributed Optimal Control}} 
\revsecond{The distributed optimal control problem \eqref{eq:decentralized} with a QI subspace constraint $\mathcal{C}$ can be formulated as a SLS problem as}
\revsecond{
\begin{eqnarray}
\text{minimize    } && \eqref{eq:io_SLC} \nonumber\\
\text{subject to } && \eqref{eq:output_fb1} - \eqref{eq:output_fb3}, \,\,  \tf L \in \mathcal{C}.
\end{eqnarray}
}
%Here we show that by combining an appropriate SLC with the SLP described in Theorem \ref{thm:of}, we recover the distributed optimal control formulation \eqref{eq:decentralized} as a special case of the SLS problem \eqref{eq:main_SS}. Recall that the objective function in \eqref{eq:decentralized} is specified by a suitably chosen system norm measuring the size of the closed loop transfer matrix from the external disturbance $\tf w$ to the regulated output $\tf{\bar{z}}$.  Therefore it suffices to select the objective functional $g$ to be as described in equation \eqref{eq:io_SLC}, and to select the SLC constraint set $\mathcal{S}$ as described in Corollary \ref{thm:qi}.  The resulting SLS problem
%\begin{eqnarray}
%\underset{\{\tf R,\tf M,\tf N,\tf L\}}{\text{minimize    }} && \left\| \begin{bmatrix} C_1 & D_{12} \end{bmatrix} \begin{bmatrix} \tf R & \tf N \\ \tf M & \tf L \end{bmatrix} \begin{bmatrix} B_1 \\ D_{21} \end{bmatrix} + D_{11} \right\| \nonumber\\
%\text{subject to } && \eqref{eq:output_fb1} - \eqref{eq:output_fb3} \nonumber \\
%&&  \tf L \in \mathcal{C}
%\end{eqnarray}
%is then equivalent to the distributed optimal control problem \eqref{eq:decentralized} when the subspace $\mathcal{C}$ is QI with respect to the plant $\tf P_{22}$.
\revsecond{Thus all distributed optimal control problems that can be formulated as convex optimization problems in the Youla domain are special cases of convex SLS problem \eqref{eq:main_SS}.}

\subsubsection{Localized LQG Control}\label{sec:local}
In \cite{2014_Wang_CDC,2015_Wang_H2} we posed and solved a localized LQG optimal control problem.  
%It can be recovered as a special case of the SLS problem \eqref{eq:main_SS} by selecting the functional $g$ to be of the form \eqref{eq:io_SLC} (with the system norm $\|\cdot\|$ chosen to be the $\mathcal{H}_2$ norm), and selecting the constraint set $\mathcal{S}$ to be a spatiotemporal SLC.  
In the case of a state-feedback problem \cite{2014_Wang_CDC}, the resulting SLS problem is of the form
\begin{eqnarray}
\underset{\{\tf R,\tf M\}}{\text{minimize    }} && \|C_1 \tf R + D_{12} \tf M \|_{\mathcal{H}_2}^2 \nonumber\\
\text{subject to } && \eqref{eq:state_fb1} - \eqref{eq:state_fb2} \nonumber \\
&& \begin{bmatrix} \tf R \\ \tf M \end{bmatrix} \in \mathcal{C}\cap\mathcal{L}\cap\mathcal{F}_T \label{eq:LLQR},
\end{eqnarray}
for $\mathcal{C}$ a QI subspace SLC, $\mathcal{L}$ a sparsity SLC, and $\mathcal{F}_T$ a FIR SLC.  

The observation that we make in \cite{2014_Wang_CDC} (and extend to the output feedback setting in \cite{2015_Wang_H2}), is that the localized SLS problem \eqref{eq:LLQR} can be decomposed into a set of independent sub-problems solving for the columns $\tf R_i$ and $\tf M_i$ of the transfer matrices $\tf R$ and $\tf M$ -- as these problems are independent, they can be solved in parallel. Further, the sparsity constraint $\mathcal{L}$ restricts each sub-problem to a local subset of the system model and states, as specified by the nonzero components of the corresponding column of the transfer matrices $\tf R$ and $\tf M$ (e.g., as was described in Example \ref{ex:motivating}), allowing each of these sub-problems to be expressed in terms of optimization variables (and corresponding sub-matrices of the state-space realization \eqref{eq:output_fb}) that are of significantly smaller dimension than the global system response $\{ \tf R, \tf M\}$.  Thus for a given feasible spatiotemporal SLC, the localized SLS problem \eqref{eq:LLQR} can be solved for arbitrarily large-scale systems, assuming that each sub-controller can solve its corresponding sub-problem in parallel.\footnote{We also show how to co-design an actuation architecture and feasible corresponding spatiotemporal constraint in \cite{2015_Wang_Reg}, and so the assumption of a feasible spatiotemporal constraint is a reasonable one.} \rev{As far as we are aware, such constrained optimal control problems cannot be solved via convex programming using existing controller parameterizations in the literature.}

In our companion paper \cite{2015_PartII}, we generalize all of these concepts to the system level approach to controller synthesis, and show that appropriate notions of separability for SLCs can be defined which allow for optimal controllers to be synthesized and implemented with order constant complexity (assuming parallel computation is available for each subproblem) relative to the global system size.
 
\subsubsection{\revsecond{Regularization for Design}}\label{sec:RFD}
%Throughout this paper our motivation has been to extend controller synthesis methods to large-scale systems.  An equally important task when designing control systems in such settings is the design of the controller architecture itself, i.e., the placement of actuators, sensors and communication links between them.  As we alluded to in the previous subsection, sufficiently dense actuation, sensing and communication architectures are necessary for a localized optimal control problem to be feasible.  More generally, there is a tradeoff between closed loop performance and architectural cost, as denser controller architectures lead to better closed loop performance.
\revsecond{The regularization for design framework (RFD) \cite{Matni_RFD_TAC,Matni_RFD_CDC,Matni_Comms_TCNS,Matni_Comms_CDC} explores tradeoffs between closed loop performance and architectural cost using convex programming by augmenting the objective function with a suitable convex regularizer that penalizes the use of actuators, sensors and communication links. To integrate RFD into the SLA, it suffices to add a suitable convex regularizer, as mentioned in Section \ref{sec:rfd_slc} and described in \cite{Matni_RFD_TAC, 2015_Wang_Reg}, to the objective function of the SLS problem \eqref{eq:main_SS}.  We demonstrate the usefulness of combining RFD, locality and SLS in our companion paper \cite{2015_PartII}.}

%\begin{remark}
%When the closed loop transfer matrices belong to a convex set $\mathcal{S}$, the corresponding $K$ in the traditional formulation \eqref{eq:decentralized} generally belong to a non-convex set. 
%\end{remark}

\subsection{Computational Complexity and Non-convex Optimization}
A final advantage of the SLS problem \eqref{eq:main_SS} is that it is transparent to determine the computational complexity of the optimization problem. Specifically, the complexity of solving \eqref{eq:main_SS} is determined by the type of the objective function $g(\cdot)$ and the characterization of the intersection of the set $\mathcal{S}$ and the affine space \eqref{eq:output_fb1} - \eqref{eq:output_fb3}.  Further, when the SLS problem is non-convex, the direct nature of the formulation makes it straightforward to determine suitable convex relaxations or non-convex optimization techniques for the problem. In contrast, as discussed in \cite{2014_Lessard_convexity}, no general method exists to determine the computational complexity of the decentralized optimal control problem \eqref{eq:decentralized} for a general constraint set $\C$.

\section{Conclusion} \label{sec:conclusion}
In this paper, we defined and analyzed the system level approach to controller synthesis, which consists of three elements: System Level Parameterizations (SLPs), System Level Constraints (SLCs), and System Level Synthesis (SLS) problems.  We showed that all achievable and stable system responses can be characterized via the SLPs given in Theorems \ref{thm:sf} and \ref{thm:of}.  We further showed that these system responses could be used to parameterize internally stabilizing controllers that achieved them, and proposed a novel controller implementation \eqref{eq:ss_like}.  We then argued that this novel controller implementation had the important benefit of allowing for SLCs to be naturally imposed on it, and showed in Section \ref{sec:class} that using this controller structure and SLCs, we can characterize the broadest known class of constrained internally stabilizing controllers that admit a convex representation.  Finally, we combined SLPs and SLCs to formulate the SLS problem, and showed that it recovered as a special case many well studied constrained optimal controller synthesis problems from the literature.  In our companion paper \cite{2015_PartII}, we show how to use the system level approach to controller synthesis to co-design controllers, system responses and actuation, sensing and communication architectures for large-scale networked systems.

% if have a single appendix:
%\appendix[Proof of the Zonklar Equations]
% or
%\appendix  % for no appendix heading
% do not use \section anymore after \appendix, only \section*
% is possibly needed

% use appendices with more than one appendix
% then use \section to start each appendix
% you must declare a \section before using any
% \subsection or using \label (\appendices by itself
% starts a section numbered zero.)
%

%\vspace{-5mm}
\appendices
%\section{Youla Parameterization and QI} \label{sec:dcf}
%\input{Appendix_A}
\section{Stabilizability and Detectability} \label{sec:proof}

\begin{lemma}
The pair $(A, B_2)$ is stabilizable if and only if the affine subspace defined by \eqref{eq:state_fb} is non-empty.
 \label{lemma:1}
\end{lemma}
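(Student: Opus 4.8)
The plan is to prove both implications directly, using the Popov--Belevitch--Hautus (PBH) criterion for stabilizability in one direction and an explicit closed-loop construction in the other.

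For the ``only if'' direction, suppose $(A,B_2)$ is stabilizable. Then there is a static gain $F$ with $A + B_2 F$ Schur stable, and I would simply take $\tf R := (zI - A - B_2 F)^{-1}$ and $\tf M := F(zI - A - B_2 F)^{-1}$ -- these are exactly the closed-loop maps \eqref{eq:KRM} associated with the static controller $\tf K = F$. Since $A + B_2 F$ is Schur, both lie in $\frac{1}{z}\RHinf$ (the resolvent is strictly proper and has no poles on or outside the unit circle), and a one-line computation gives $(zI - A)\tf R - B_2 \tf M = (zI - A - B_2 F)(zI - A - B_2 F)^{-1} = I$, so the pair $\{\tf R,\tf M\}$ satisfies \eqref{eq:state_fb1}--\eqref{eq:state_fb2} and the affine subspace is non-empty.

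For the ``if'' direction, suppose $\{\tf R,\tf M\}$ satisfies \eqref{eq:state_fb} but, for contradiction, $(A,B_2)$ is not stabilizable. By the PBH test there exist $\lambda \in \mathbb{C}$ with $|\lambda| \ge 1$ and a nonzero vector $v$ such that $v^* A = \lambda v^*$ and $v^* B_2 = 0$. Because $\tf R, \tf M \in \frac{1}{z}\RHinf$, both are analytic on $\{\,|z| \ge 1\,\}$, so $\tf R(\lambda)$ and $\tf M(\lambda)$ are well-defined finite matrices; evaluating \eqref{eq:state_fb1} at $z = \lambda$ and left-multiplying by $v^*$ yields $v^*(\lambda I - A)\tf R(\lambda) - v^* B_2 \tf M(\lambda) = v^*$, i.e. $0 = v^*$, contradicting $v \ne 0$. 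As an alternative I would note in passing the more ``semantic'' route: by Lemma \ref{lem:sf_suf}, feasibility of \eqref{eq:state_fb} produces the internally stabilizing controller $\tf K = \tf M \tf R^{-1}$, and the existence of any internally stabilizing controller forces $(A,B_2)$ to be stabilizable (an uncontrollable mode with $|\lambda|\ge 1$ would persist as an eigenvalue of the closed-loop state matrix regardless of the controller).

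The step I expect to require the most care is the well-definedness claim in the ``if'' direction: one must observe that membership in $\frac{1}{z}\RHinf$ guarantees analyticity -- hence finiteness -- at the potentially unstable eigenvalue $\lambda$ on or outside the unit circle, which is precisely what legitimizes the pointwise evaluation of \eqref{eq:state_fb1}. Everything else reduces to the eigenvector identity $v^*(\lambda I - A) = 0$ and routine linear algebra.
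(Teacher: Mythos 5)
Your proposal is correct and follows essentially the same route as the paper: the ``only if'' direction uses the identical static-gain construction $\tf R = (zI-A-B_2F)^{-1}$, $\tf M = F(zI-A-B_2F)^{-1}$, and your ``if'' direction is just the contrapositive phrasing of the paper's PBH argument, which observes that feasibility makes $\begin{bmatrix} zI-A & -B_2\end{bmatrix}$ right-invertible (hence full row rank) wherever $\tf R,\tf M$ are analytic, i.e.\ on $|z|\ge 1$.
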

\begin{proof}
We first show that the stabilizability of $(A,B_2)$ implies that there exist transfer matrices $\tf R, \tf M\in\frac{1}{z}\RHinf$ satisfying equation \eqref{eq:state_fb1}. From the definition of stabilizability, there exists a matrix $F$ such that $A+B_2 F$ is a stable matrix. Substituting the state feedback control law $u = F x$ into \eqref{eq:zsfb}, we have $\tf x = (zI-A-B_2 F)^{-1}  \ttf{\delta_x}$ and $\tf u = F(zI-A-B_2 F)^{-1}  \ttf{\delta_x}$. 
The system response is given by $\tf R = (zI-A-B_2 F)^{-1}$ and $\tf M = F(zI-A-B_2 F)^{-1}$, which lie in $\frac{1}{z}\RHinf$ and are a solution to \eqref{eq:state_fb1}.

For the opposite direction, we note that $\tf R, \tf M \in \RHinf$ implies that these transfer matrices do not have poles outside the unit circle $| z | \geq 1$. From \eqref{eq:state_fb1}, we further observe that $\begin{bmatrix} zI - A & -B_2 \end{bmatrix}$ is right invertible in the region where $\tf R$ and $\tf M$ do not have poles, with $\begin{bmatrix} \tf R^\top & \tf M^\top \end{bmatrix}^\top$ being its right inverse. This then implies that $\begin{bmatrix} zI - A & -B_2 \end{bmatrix}$ has full row rank for all $| z | \geq 1$. This is equivalent to the PBH test \cite{Dullerud_Paganini} for stabilizability, proving the claim.%so the feasibility of \eqref{eq:state_fb} implies the stabilizability of $(A, B_2)$.
\end{proof}

We note that the analysis for the state feedback problem in Section \ref{sec:sf} can be applied to the state estimation problem by considering the dual to a full control system (c.f.,  \S 16.5 in \cite{Zhou1996robust}). For instance, the following corollary to Lemma \ref{lemma:1} gives an alternative definition of the detectability of pair $(A,C_2)$ \cite{2015_Wang_LDKF}.
\begin{corollary}
The pair $(A, C_2)$ is detectable if and only if the following conditions are feasible:
\begin{subequations} \label{eq:state_est}
\begin{align}
& \begin{bmatrix} \tf R & \tf N \end{bmatrix} \begin{bmatrix} zI - A \\ -C_2 \end{bmatrix} = I \label{eq:state_est1}\\
& \tf R, \tf N \in \frac{1}{z} \mathcal{RH}_\infty. \label{eq:state_est2}
\end{align}
\end{subequations} \label{cor:1}
\end{corollary}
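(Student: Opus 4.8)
The plan is to obtain Corollary \ref{cor:1} as the dual of Lemma \ref{lemma:1}, exactly as the surrounding remark suggests. The key fact I would invoke is the standard duality between detectability and stabilizability: the pair $(A,C_2)$ is detectable if and only if the state-feedback pair $(A^\top, C_2^\top)$ is stabilizable. Applying Lemma \ref{lemma:1} to $(A^\top, C_2^\top)$ in place of $(A,B_2)$, this pair is stabilizable if and only if the affine subspace \eqref{eq:state_fb} associated with $(A^\top, C_2^\top)$ is non-empty, i.e., if and only if there exist $\tf R', \tf M' \in \frac{1}{z}\RHinf$ with $(zI - A^\top)\tf R' - C_2^\top \tf M' = I$.

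The second step is a transposition. Taking transposes of the identity above and setting $\tf R := (\tf R')^\top$ and $\tf N := (\tf M')^\top$, one gets $\tf R(zI-A) - \tf N C_2 = I$, which is precisely \eqref{eq:state_est1}; moreover \eqref{eq:state_est2} holds because transposition maps $\frac{1}{z}\RHinf$ into itself. This establishes one direction. For the converse, any $\{\tf R,\tf N\}$ feasible for \eqref{eq:state_est} transposes into a pair feasible for the conditions \eqref{eq:state_fb} of the pair $(A^\top, C_2^\top)$, so Lemma \ref{lemma:1} yields stabilizability of $(A^\top, C_2^\top)$, hence detectability of $(A,C_2)$.

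An equivalent, self-contained route would be to mirror the proof of Lemma \ref{lemma:1} line by line in the ``state-estimation'' setting: for the ``only if'' direction, detectability gives an observer gain $L$ with $A - LC_2$ Schur-stable, and $\tf R = (zI - A + LC_2)^{-1}$, $\tf N = (zI - A + LC_2)^{-1}L$ are strictly proper, stable, and solve \eqref{eq:state_est1}; for the ``if'' direction, \eqref{eq:state_est1} exhibits $\begin{bmatrix}\tf R & \tf N\end{bmatrix}$ as a left inverse of $\begin{bmatrix}(zI-A)^\top & -C_2^\top\end{bmatrix}^\top$ wherever $\tf R,\tf N$ are pole-free, hence on all $|z|\ge 1$, forcing $\begin{bmatrix}zI-A\\ -C_2\end{bmatrix}$ to have full column rank there, which is the PBH test for detectability. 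Either way the argument is routine; the only point needing a bit of care is bookkeeping the transposes and signs so that \eqref{eq:state_est1} emerges with the stated orientation, and invoking the stabilizability/detectability duality at the level of the pair rather than the plant realization. I do not expect a genuine obstacle.
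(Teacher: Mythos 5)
Your primary argument---reducing detectability of $(A,C_2)$ to stabilizability of $(A^\top,C_2^\top)$, invoking Lemma \ref{lemma:1}, and transposing---is correct and is exactly the route the paper intends, since the paper gives no separate proof beyond appealing to duality with the state-feedback analysis. The only blemish is in your alternative self-contained construction: with $A-LC_2$ stable, the identity $\tf R(zI-A)-\tf N C_2=I$ forces $\tf N=-(zI-A+LC_2)^{-1}L$ rather than $+(zI-A+LC_2)^{-1}L$ (equivalently, take $\tf R=(zI-A-LC_2)^{-1}$ and $\tf N=(zI-A-LC_2)^{-1}L$ with $A+LC_2$ stable), precisely the sign bookkeeping you yourself flagged.
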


A parameterization of all detectable observers can be constructed using the affine subspace \eqref{eq:state_est} in a manner analogous to that described above.

\begin{lemma}\label{lem:stab_det}
The triple $(A, B_2, C_2)$ is stabilizable and detectable if and only if the affine subspace described by \eqref{eq:output_fb} is non-empty.
\end{lemma}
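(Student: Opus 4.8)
The plan is to prove the two implications separately, in each case reducing the output-feedback statement to the state-feedback characterization already established in Lemma \ref{lemma:1} together with its dual, Corollary \ref{cor:1}.

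For the forward direction, suppose $(A,B_2,C_2)$ is stabilizable and detectable. I would first use stabilizability of $(A,B_2)$ to invoke Lemma \ref{lemma:1}, obtaining transfer matrices $\tf R_0, \tf M_0 \in \frac{1}{z}\RHinf$ satisfying $\begin{bmatrix} zI-A & -B_2\end{bmatrix}\begin{bmatrix}\tf R_0 \\ \tf M_0\end{bmatrix} = I$; concretely, take a stabilizing state-feedback gain $F$ and set $\tf R_0 = (zI-A-B_2F)^{-1}$, $\tf M_0 = F\tf R_0$. Dually, using detectability of $(A,C_2)$ and Corollary \ref{cor:1}, I get $\tf R_1, \tf N_1 \in \frac{1}{z}\RHinf$ with $\begin{bmatrix}\tf R_1 & \tf N_1\end{bmatrix}\begin{bmatrix}zI-A \\ -C_2\end{bmatrix} = I$; concretely, via an output-injection gain $L$ with $A+LC_2$ stable, $\tf R_1 = (zI-A-LC_2)^{-1}$ and $\tf N_1 = -\tf R_1 L$. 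The natural candidate is the ``observer-based'' system response one obtains from the controller $\tf K$ built out of $F$ and $L$: I would set $\tf R = \tf R_0 + \text{(cross terms)}$, or more cleanly, simply write down the closed-loop maps \eqref{eq:K_relation} for $\tf K = $ the standard observer-plus-state-feedback compensator and verify directly that they lie in \eqref{eq:output_fb}. Since Lemma \ref{lem:of_nec} already shows any achievable response satisfies \eqref{eq:output_fb}, and such a $\tf K$ is internally stabilizing by the classical separation principle, this gives a nonempty affine subspace.

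For the converse, suppose \eqref{eq:output_fb} has a solution $\{\tf R,\tf M,\tf N,\tf L\}$. From the $(1,1)$ block of the left equation \eqref{eq:output_fb1}, $\begin{bmatrix}zI-A & -B_2\end{bmatrix}\begin{bmatrix}\tf R \\ \tf M\end{bmatrix} = I$ with $\tf R, \tf M \in \frac{1}{z}\RHinf$, which is exactly the feasibility of \eqref{eq:state_fb}; by Lemma \ref{lemma:1} this forces $(A,B_2)$ stabilizable. Symmetrically, the $(1,1)$ block of \eqref{eq:output_fb2} gives $\begin{bmatrix}\tf R & \tf N\end{bmatrix}\begin{bmatrix}zI-A \\ -C_2\end{bmatrix} = I$ with $\tf R,\tf N \in \frac{1}{z}\RHinf$, which by Corollary \ref{cor:1} forces $(A,C_2)$ detectable. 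Hence $(A,B_2,C_2)$ is stabilizable and detectable. This direction is essentially immediate.

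The main obstacle is the forward direction: one must actually produce a system response in \eqref{eq:output_fb}, and although the separation-principle compensator is the obvious candidate, checking that its closed-loop maps \eqref{eq:K_relation} simultaneously satisfy both affine constraints \eqref{eq:output_fb1} and \eqref{eq:output_fb2} as well as the strict-properness/stability conditions \eqref{eq:output_fb3} requires care — in particular the internal stability of the compensator (hence stability of $\tf R,\tf M,\tf N,\tf L$) rests on the Hurwitz/Schur property of the block-triangular closed-loop $A$-matrix, which is where stabilizability and detectability are genuinely used. An alternative, perhaps cleaner, route that avoids coupling the two gains is to note that $\{\tf R_0,\tf M_0\}$ and $\{\tf R_1,\tf N_1\}$ already half-solve the problem and then solve for the remaining blocks (e.g. set $\tf N = \tf R_0 B_2 \tf K$-type expressions, or directly solve the linear equations \eqref{eq:output_fb} for $\tf L$ given a compatible choice of the other three), verifying stability via Lemma \ref{lem:of_suf}; I would present whichever makes the stability bookkeeping shortest.
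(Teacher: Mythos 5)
Your proposal is correct, and your converse direction is exactly the paper's argument: feasibility of \eqref{eq:output_fb} restricts to feasibility of \eqref{eq:state_fb} and of \eqref{eq:state_est}, and Lemma \ref{lemma:1} together with Corollary \ref{cor:1} then give stabilizability of $(A,B_2)$ and detectability of $(A,C_2)$. Where you diverge is in the sufficiency direction. The paper stays entirely in the transfer-matrix domain: it takes an \emph{arbitrary} feasible solution $(\tf R_1,\tf M_1)$ of \eqref{eq:state_fb} and an arbitrary feasible solution $(\tf R_2,\tf N_2)$ of \eqref{eq:state_est} and composes them through the explicit cross-term formulas \eqref{eq:sfest}, e.g.\ $\tf R = \tf R_1 + \tf R_2 - \tf R_1(zI-A)\tf R_2$ and $\tf L = -\tf M_1(zI-A)\tf N_2$, after which feasibility of \eqref{eq:output_fb1}--\eqref{eq:output_fb3} is a routine algebraic check (stability and strict properness are inherited since the formulas are sums and products of stable, strictly proper factors with the polynomial $zI-A$ sandwiched in between). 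Your route instead instantiates the half-solutions via static gains $F$ and $L$, forms the classical observer-based compensator, invokes the separation principle for internal stability, and then cites Lemma \ref{lem:of_nec} to conclude the resulting closed-loop maps \eqref{eq:K_relation} lie in \eqref{eq:output_fb}. That is a valid and arguably slicker argument, since the ``stability bookkeeping'' you worry about is done once and for all by Lemma \ref{lem:of_nec}; the affine equations hold automatically because the response is achieved by an internally stabilizing controller. What the paper's construction buys in exchange is generality and reusability: it does not require the half-solutions to come from static gains, and because it preserves structural properties of $(\tf R_1,\tf M_1)$ and $(\tf R_2,\tf N_2)$ (e.g.\ FIR), the same formula \eqref{eq:sfest} is reused immediately afterwards to characterize existence of FIR system responses under controllability and observability, which your separation-principle route would only recover after switching to deadbeat gains.
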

\begin{proof}
\revsecond{This follows from an identical construction as that presented in the proof Theorem \ref{thm:fd}, but now using stable transfer matrices with possibly infinite impulse responses.}
%For one direction, note that the feasibility of \eqref{eq:output_fb} implies the feasibility of both \eqref{eq:state_fb} and \eqref{eq:state_est}. Using Lemma \ref{lemma:1} and Corollary \ref{cor:1}, we know that the triple $(A, B_2, C_2)$ is stabilizable and detectable.
%
%For the opposite direction, given a stabilizable $(A, B_2)$ and a detectable $(A, C_2)$, let $(\tf R_1, \tf M_1)$ be a feasible solution of \eqref{eq:state_fb} and $(\tf R_2, \tf N_2)$ be a feasible solution of \eqref{eq:state_est}. We use these to construct the feasible solution to \eqref{eq:output_fb}
%\begin{subequations}
%\begin{align}
%\tf R &= \tf R_1 + \tf R_2 - \tf R_1 (zI-A) \tf R_2 \label{eq:sfest1}\\
%\tf M &= \tf M_1 - \tf M_1 (zI-A) \tf R_2 \label{eq:sfest2}\\
%\tf N &= \tf N_2 - \tf R_1 (zI-A) \tf N_2 \label{eq:sfest3}\\
%\tf L &= -\tf M_1 (zI-A) \tf N_2, \label{eq:sfest4}
%\end{align}\label{eq:sfest}
%\end{subequations}
%which completes the proof.
\end{proof}

\ifCLASSOPTIONcaptionsoff
  \newpage
\fi

% trigger a \newpage just before the given reference
% number - used to balance the columns on the last page
% adjust value as needed - may need to be readjusted if
% the document is modified later
%\IEEEtriggeratref{8}
% The "triggered" command can be changed if desired:
%\IEEEtriggercmd{\enlargethispage{-5in}}

% references section

% can use a bibliography generated by BibTeX as a .bbl file
% BibTeX documentation can be easily obtained at:
% http://www.ctan.org/tex-archive/biblio/bibtex/contrib/doc/
% The IEEEtran BibTeX style support page is at:
% http://www.michaelshell.org/tex/ieeetran/bibtex/
\bibliographystyle{IEEEtran}
% argument is your BibTeX string definitions and bibliography database(s)
\bibliography{Distributed}
%
% <OR> manually copy in the resultant .bbl file
% set second argument of \begin to the number of references
% (used to reserve space for the reference number labels box)
%\begin{thebibliography}{1}
%
%\bibitem{IEEEhowto:kopka}
%H.~Kopka and P.~W. Daly, \emph{A Guide to \LaTeX}, 3rd~ed.\hskip 1em plus
%  0.5em minus 0.4em\relax Harlow, England: Addison-Wesley, 1999.
%
%\end{thebibliography}

% biography section
% 
% If you have an EPS/PDF photo (graphicx package needed) extra braces are
% needed around the contents of the optional argument to biography to prevent
% the LaTeX parser from getting confused when it sees the complicated
% \includegraphics command within an optional argument. (You could create
% your own custom macro containing the \includegraphics command to make things
% simpler here.)
%\begin{IEEEbiography}[{\includegraphics[width=1in,height=1.25in,clip,keepaspectratio]{mshell}}]{Michael Shell}
% or if you just want to reserve a space for a photo:

\vspace{-10mm}
\begin{IEEEbiography}[{\includegraphics[width=1in,height=1.25in,clip,keepaspectratio]{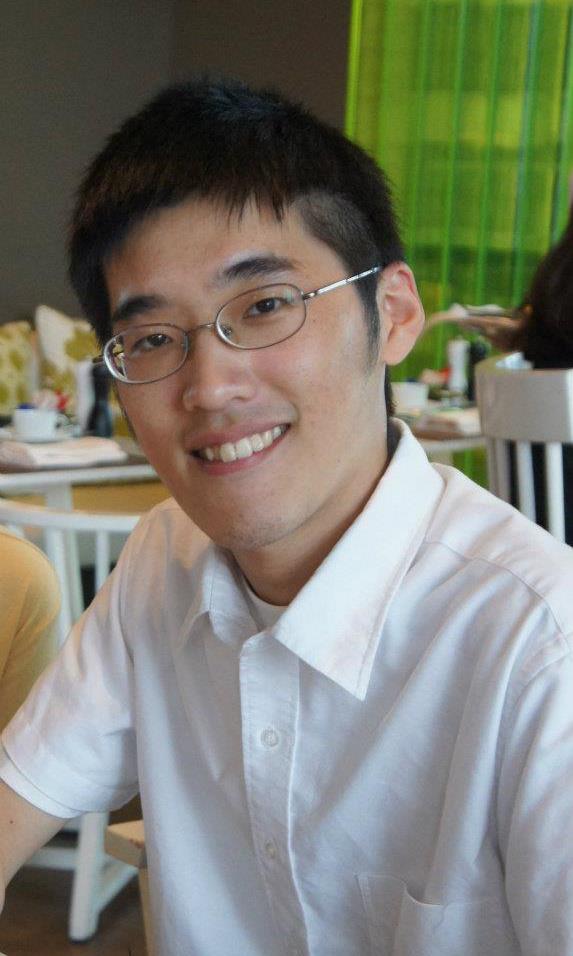}}]
{Yuh-Shyang Wang} (M'10) received the B.S. degree in electrical engineering from National Taiwan University, Taipei, Taiwan, in 2011, and the Ph.D. degree in control and dynamical systems from Caltech, Pasadena, CA, USA, in 2016 under the advisement of John C. Doyle.

He is currently a Research Engineer at GE Global Research Center, Niskayuna, NY, USA. His research interests include optimization, control, and machine learning for industrial cyber-physical systems and renewable energy systems.

Dr. Wang was the recipient of the 2017 ACC Best Student Paper Award.
\end{IEEEbiography}
\vspace{-10mm}
%% if you will not have a photo at all:
\begin{IEEEbiography}[{\includegraphics[width=1in,height=1.25in,clip,keepaspectratio]{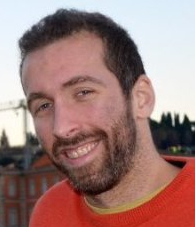}}]
{Nikolai Matni} (M'08) received the B.A.Sc. and M.A.Sc. degrees in electrical engineering from the University of British Columbia, Vancouver, BC, Canada, in 2008 and 2010, respectively, and the Ph.D. degree in control and dynamical systems from the California Institute of Technology, Pasadena, CA, USA, in June 2016 under the advisement of John C. Doyle.

He is currently a Postdoctoral Scholar at Electrical Engineering \& Computer Sciences, UC Berkeley, Berkeley, CA, USA. His research interests include the use of learning, layering, dynamics, control and optimization in the design and analysis of complex cyber-physical systems.

Dr. Matni received the IEEE CDC 2013 Best Student Paper Award, the IEEE ACC 2017 Best Student Paper Award (as co-advisor), and was an Everhart Lecture Series speaker at Caltech.
\end{IEEEbiography}
%
%% insert where needed to balance the two columns on the last page with
%% biographies
%%\newpage
\vspace{-10mm}
\begin{IEEEbiography}[{\includegraphics[width=1in,height=1.25in,clip,keepaspectratio]{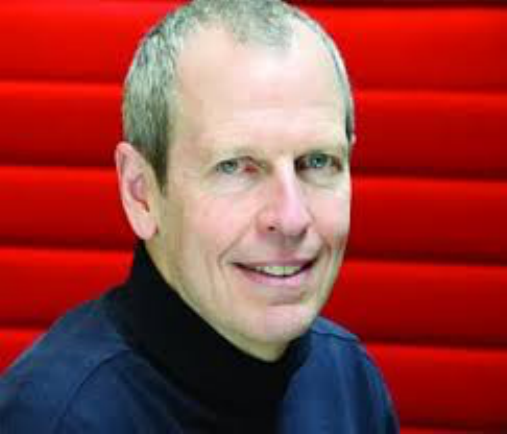}}]{John C. Doyle} received the B.S. and M.S. degrees in electrical engineering from Massachusetts Institute of Technology, Cambridge, MA, USA, in 1977, and the Ph.D. degree in mathematics from UC Berkeley, Berkeley, CA, USA, in 1984.

He is currently the Jean-Lou Chameau Professor of Control and Dynamical Systems, Electrical Engineer, and Bio-Engineering, Caltech, Pasadena, CA, USA. His research interests include mathematical foundations for complex networks with applications inbiology, technology, medicine, ecology, neuroscience, and multiscale physics that integrates theory from control, computation, communication, optimization, statistics (e.g., machine learning).

Dr. Doyle received the 1990 IEEE Baker Prize (for all IEEE publications), also listed in the world top 10 “most important” papers in mathematics 1981–1993, the IEEE Automatic Control Transactions Award (twice 1998, 1999), the 1994 AACC American Control Conference Schuck Award, the 2004 ACM Sigcomm Paper Prize and 2016 test of Time Award, and inclusion in Best Writing on Mathematics 2010. His individual awards include 1977 IEEE Power Hickernell, 1983 AACC Eckman, 1984 UC Berkeley Friedman, 1984 IEEE Centennial Outstanding Young Engineer (a one-time award for IEEE 100th anniversary), and 2004 IEEE Control Systems Field Award.
%Biography text here.
\end{IEEEbiography}

% You can push biographies down or up by placing
% a \vfill before or after them. The appropriate
% use of \vfill depends on what kind of text is
% on the last page and whether or not the columns
% are being equalized.

%\vfill

% Can be used to pull up biographies so that the bottom of the last one
% is flush with the other column.
%\enlargethispage{-5in}

% that's all folks
\end{document}